\documentclass[11pt]{article}
\usepackage{graphicx}
\usepackage{float}
\usepackage{subfigure}
\usepackage{caption}
\usepackage{multirow}
\usepackage{makecell}
\usepackage{appendix}
\usepackage[figuresright]{rotating}
\usepackage{booktabs}
\usepackage[linesnumbered,ruled]{algorithm2e}
\usepackage[margin=1in]{geometry}
\usepackage{hyperref}
\usepackage{amsfonts}
\usepackage{mathrsfs}
\usepackage{comment}
\usepackage{amsmath}
\usepackage{amssymb}
\usepackage{amsthm}
\usepackage{amscd}
\usepackage{graphicx}
\usepackage{indentfirst}
\usepackage[all]{xy}
\usepackage{titlesec}
\usepackage{enumerate}
\usepackage{bm}
\usepackage{enumitem}
\usepackage{color}
\usepackage{dsfont}
\usepackage{arydshln}
\usepackage{booktabs}
\newtheorem{theorem}{Theorem}[section]
\newtheorem{lemma}[theorem]{Lemma}
\newtheorem{proposition}[theorem]{Proposition}
\newtheorem{corollary}[theorem]{Corollary}

\newtheorem{definition}[theorem]{Definition}

\newtheorem{remark}[theorem]{Remark}

\newtheorem{example}[theorem]{Example}

\begin{document}
\title{ On Twisted Roth-Lempel Codes\footnote{The research was supported by the National Natural Science Foundation of China under the Grants 12222113 and 12441105.}}
\author{Huiyue Lei\footnote{Huiyue Lei is with the School of Mathematical Sciences, Capital Normal University, Beijing 100048, China. Email:362885063@qq.com },
   \and Haojie Gu\footnote{Haojie Gu is with the School of Mathematical Sciences, Capital Normal University, Beijing 100048, China. Email: 2200502051@cnu.edu.cn.},
	\and Jun Zhang\footnote{Jun Zhang is with the School of Mathematical Sciences, Capital Normal University, Beijing 100048, China. Email: junz@cnu.edu.cn.},
    \and Haiyan Zhou\footnote{Haiyan Zhou is  with the School of Mathematical Sciences, Nanjing Normal University, Nanjing 210023, China. Email:zhouhy@njnu.edu.cn.}
}

\date{}
\maketitle

\begin{abstract}
	In 1989, Roth and Lempel constructed a well-known family of non-Reed-Solomon maximum distance separable (MDS) codes. For decades, this family of codes has attracted extensive research attention due to its algebraic structure, low-complexity decoding, and broad applications in cryptography and data storage. In this paper, we present a class of twisted Roth-Lempel codes. We investigate their minimum distance, MDS  and NMDS properties. Specifically, we determine the necessary and sufficient conditions for the $\operatorname{TRL}$ codes to have minimum distance $n-k$ or $n-k+1$. Furthermore, we determine the necessary and sufficient conditions for the  $\operatorname{TRL}$ code to be an MDS or NMDS code. Moreover, we show that the dimension of the Schur square of the $\operatorname{TRL}$ code is at least $2k+1$, and thus the code $\operatorname{TRL}_{k,n+2}(\boldsymbol{\alpha},\eta,\delta,\ell)$ is a non-RS 
    code inequivalent to the corresponding RL code.

	\begin{flushleft}
		\textbf{Keywords: Roth-Lempel codes; Twisted Roth-Lempel codes; Twisted Reed-Solomon codes; MDS codes; NMDS codes; Non-Reed-Solomon codes; Schur square} 
	\end{flushleft}
\end{abstract}

\section{Introduction}

Let $\mathbb{F}_{q}$ be the  finite field with $q$ elements and $\mathbb{F}_{q}^{*}=\mathbb{F}_{q}\backslash\{0\}$, where $q$ is a power of the prime $p$. Let $\mathbb{F}_{q}^n$ be the $n$-dimensional vector space over the finite field $\mathbb{F}_{q}$. For any vector $ \boldsymbol{x}=(x_1,x_2,\cdots,x_n)\in \mathbb{F}_{q}^n$, the \emph{Hamming weight} $wt( \boldsymbol{x})$ of $ \boldsymbol{x}$ is defined to be the number of non-zero coordinates, i.e.,
$$wt( \boldsymbol{x})=|\left\{i\,|\,1\leqslant i\leqslant n,\,x_i\neq 0\right\}|.$$

An $[n,k,d]$-linear code $\mathcal{C}\subseteq\mathbb{F}_{q}^n$ is a $k$-dimensional linear subspace of $\mathbb{F}_{q}^n$ whose minimum distance $d=d(\mathcal{C})$ is given by
$$d(\mathcal{C})=\min\left\{wt(\boldsymbol{c}):\boldsymbol{c}\in\mathcal{C}\backslash\{0\}\right\}.$$
 The dual code of C is defined as
 \[
 \mathcal{C}^{\perp}=\left\{(x_1,\cdots,x_n)\in\mathbb{F}_{q}^n:\sum\limits_{i=1}^nx_iy_i=0,\ \mbox{for all}\ (y_1,\cdots,y_n)\in\mathcal{C}\right\}.
 \]
The well-known Singleton bound states that $d\leq n-k+1$ for any linear code $\mathcal{C}$ with parameters $[n,k,d]$. The non-negative integer $S(\mathcal{C})=n-k+1-d$ is called the Singleton defect of the code $\mathcal{C}$~\cite{de1996almost}. If $S(\mathcal{C})=0$, then $\mathcal{C}$ is called
a maximum distance separable (MDS) code. If $S(\mathcal{C})=1$, then $\mathcal{C}$ is called an
almost-MDS (AMDS) code. If $S(\mathcal{C})=S(\mathcal{C}^{\perp})=1$, then $\mathcal{C}$ is called a near-MDS (NMDS) code. More generally, if $S(\mathcal{C})=S(\mathcal{C}^{\perp})=m$, then $\mathcal{C}$ is called
$m$-MDS. Due to their optimal or near-optimal distance properties, MDS and NMDS codes have been extensively studied and have found important applications in communication systems, distributed storage, cryptography, combinatorial designs, and quantum error correction~\cite{huffman2010fundamentals,macwilliams1977theory,simos2012mds,thomas2018binary}. 

\begin{definition}
    Let $\boldsymbol{\alpha}=\{\alpha_{1},\cdots,\alpha_{n}\}\subseteq\mathbb{F}_{q}$ be the evaluation set and $\boldsymbol{v}=(v_{1},\cdots,v_{n})\in (\mathbb{F}_{q}^{*})^n$, where $\alpha_{i}\neq\alpha_{j}$ for all $i\neq j$. Then the generalized Reed-Solomon code $GRS_{k}(\boldsymbol{\alpha},\boldsymbol{v})$ of length $n$ and dimension $k$ is defined as 
    \begin{equation*}
        GRS_{k}(\boldsymbol{\alpha},\boldsymbol{v})=\left\{(v_{1}f(\alpha_{1}),\cdots,v_{n}f(\alpha_{n})):f(x)\in\mathbb{F}_{q}[x]_{<k}\right\},
    \end{equation*}
    where $\mathbb{F}_{q}[x]_{<k}:=\left\{f(x)\in\mathbb{F}_{q}[x]:\deg(f(x))<k \right\}$. If $\boldsymbol{v}$ is the all-one vector, it is called a Reed-Solomon code of length $n$ and dimension $k$, denoted as $RS_{k}(\boldsymbol{\alpha})$.
\end{definition}

The most prominent examples of MDS codes are Reed–Solomon (RS) codes and their generalizations. Their algebraic structure, optimal distance, and efficient decoding algorithms make them indispensable in both theory and practice. Nevertheless, RS-type codes are highly structured, and this rigidity has motivated the search for new MDS code families that are not equivalent to RS codes, which are called  non-RS MDS codes. The construction of non-RS MDS codes is not only a classical problem in coding theory and finite geometry~\cite{chen2023many,jin2025new,roth1989construction}, but also of practical relevance to code-based cryptography.

A milestone in this direction was the introduction of Roth–Lempel (RL) codes by Roth and Lempel in 1989~\cite{roth1989construction}. The classical RL code is constructed by augmenting a $[n, k]$ RS code with two additional coordinates, yielding a $[n+2, k]$ code with generator matrix of the following form

$$
G = \begin{pmatrix} G_{RS}(\bm{\alpha}) & \mathbf{0}_{(k-2)\times 2} \\ & T_{2\times 2}(\delta) \end{pmatrix}
$$
where $G_{RS}(\bm{\alpha})$ is a generator matrix of an RS code and $T_{2\times 2}(\delta) = \begin{pmatrix} 0 & 1 \\ 1 & \delta \end{pmatrix}$. The MDS property holds if and only if $-\delta \notin \Delta_k$, where $\Delta_k$ is a subset of the field determined by the evaluation points. The generalized Roth-Lempel (GRL) code~\cite{li2025new,liang2025equivalent} extends this construction by replacing $T_{2\times 2}(\delta)$ with an arbitrary invertible $\ell \times \ell$ matrix $\bm{A}_{\ell\times\ell}$, producing $[n+\ell, k]$ codes with significantly expanded parameter flexibility. When the MDS condition is not satisfied, GRL codes frequently exhibit near-MDS (NMDS) properties—being almost-MDS (AMDS) with AMDS duals. Liang et al.~\cite{liang2025equivalent}
have established explicit constructions of the NMDS GRL and extended GRL (EGRL) codes with completely determined weight distributions, enabling their application in combinatorial design theory.  Liu et al.~\cite{liu2026generalized} established the necessary and sufficient NMDS conditions for the GRL codes with $\ell=2$ and $\ell=3$.

A defining characteristic of the RL and GRL codes is their non-GRS nature for $k > \ell$ . This structural distinction is of paramount importance for cryptographic applications. GRS codes, despite their optimal parameters and efficient decoding, are vulnerable to the Sidelnikov-Shestakov structural attack, which recovers the secret key in polynomial time~\cite{sidelnikov1992insecurity,wieschebrink2006attack} . In contrast, RL and GRL codes resist such attacks due to their fundamentally different algebraic structure, making them valuable candidates for code-based cryptography, particularly in the context of post-quantum security.

More recently, twisted constructions have proven to be particularly effective in producing new algebraic codes with rich structures. In 2016, Sheekey \cite{Sheekey2016anew} introduced a new class of maximum rank distance codes, known as Twisted Gabidulin codes, which are MDS with respect to the rank metric and were shown to be inequivalent to Gabidulin codes (the rank metric analog of Reed-Solomon codes). Inspired by Sheekey's work, Beelen et al. \cite{beelen2017twisted,beelen2022twisted} proposed Twisted Reed-Solomon (TRS) codes and demonstrated that certain families of TRS codes are non-RS MDS codes. Following this development, extensive research has been devoted to the structure and properties of TRS codes \cite{cheng2023parity,ding2025new,gu2023twisted,huang2021mds,sui2022mds1,sui2023new,sui2022mds2,zhang2022class,zhu2024class}. 

The main contributions of this paper are as follows. 
First, we introduce a family of twisted Roth-Lempel codes and determine lower bounds for their minimum distance in all hook positions. Second, we give necessary and sufficient conditions under which these codes 
have minimum distance $n-k$ or $n-k+1$. 
Third, we derive explicit MDS criteria in terms of elementary symmetric 
functions of the evaluation subsets. 
Fourth, we compute Schur square dimensions, which provide distinguishers 
from generalized Reed-Solomon codes and from the corresponding classical 
Roth-Lempel codes. 
Finally, we give explicit field-extension and subgroup-based constructions 
of non-Reed-Solomon MDS codes and characterize the NMDS property for the 
boundary cases $\ell=k-2$ and $\ell=k-1$.


\section{Preliminaries}\label{sec2}
In this section, we establish the notation for the rest of the paper. By “natural numbers”, we mean  positive integers, i.e., $\mathbb{N}^{+}=\left\{1,2,3,\cdots\right\}$.  The set of non-negative integers is denoted by $\mathbb{N}$. For $m\in\mathbb{N}^{+}$, let $[m]$ denote the set of integers from $1$ to $m$ and let $[0,m]$ denote the set of integers from $0$ to $m$, i.e., $[m]:=\left\{1,2,\cdots,m\right\}$ and $[0,m]:=\left\{0,1,\cdots,m\right\}$. Given an  $n$-subset $U=\left\{\beta_{1},\beta_{2},\cdots,\beta_{n}\right\}\subseteq \mathbb{F}_{q}$ and $k$-subset $I=\left\{i_{1},\cdots,i_{k}\right\}\subseteq [n]$, let $U_{I}=\left\{\beta_{i_{1}},\beta_{i_{2}},\cdots,\beta_{i_{k}}\right\}$.
 In particular, for $j_{1}\neq j_{2} \in [n]$, let $U_{j_{1}}=U\setminus\{\beta_{{j_1}}\}$ and $U_{j_{1},j_{2}}=U\setminus\{\beta_{j_1},\beta_{j_2}\}$. For the $k$-subset $I=\{i_1,\cdots,i_{k}\}$, let
$V(U_{I}):=V(\beta_{i_{1}},\cdots,\beta_{i_{k}})=\prod\limits_{1\leq j_{1}<j_{2}\leq k}(\beta_{i_{j_{2}}}-\beta_{i_{j_{1}}}),S_{j}(U_{I})=\sum\limits_{\left|J\right|=j\atop J\subseteq [k]}\prod\limits_{s\in J}\beta_{i_{s}}$ and $\sigma_{j}(U_I)=(-1)^j S_{j}(U_I)$, where $S_0(U_I)=\sigma_{0}(U_I)=1$ and $S_{j}(U_I)=\sigma_{j}(U_I)=0$ if $j<0$ or $j>\left|U_I\right|$.

 TGRS codes are generalizations of GRS codes and were first introduced in~\cite{beelen2017twisted}.

\begin{definition}[\cite{beelen2017twisted}]
Let $\ell,k,n$ be positive integers  with $\ell \le k \le n \le q$, suppose that $\mathbf{h}=(h_1,h_2,\ldots,h_{\ell}),\mathbf{t}=(t_1,t_2,\ldots,t_{\ell})$ and 
$\boldsymbol{\eta}=(\eta_1,\eta_2,\ldots,\eta_{\ell})\in \mathbb{F}_q^{\ell}$, where $0 \le h_i \le k-1$ are distinct and $0 \le t_i \le n-k$ are also distinct.
Then
\[
\mathcal{S}
=
\left\{
\sum_{i=0}^{k-1} f_i x^i
+
\sum_{j=1}^{\ell} \eta_j f_{h_j} x^{k-1+t_j}
\;:\;
f_0,f_1,\ldots,f_{k-1}\in\mathbb{F}_q
\right\}
\]
is a $k$-dimensional subspace of $\mathbb{F}_q[x]$ over $\mathbb{F}_q$. Furthermore, let
$
\boldsymbol{\alpha}=\{\alpha_1,\alpha_2,\ldots,\alpha_n\}\subseteq \mathbb{F}_q,
$
where $\alpha_i$, $i=1,2,\ldots,n$, are distinct, and
$
\boldsymbol{v}=(v_1,v_2,\ldots,v_n)\in (\mathbb{F}_q^{*})^n.
$
The linear code
\[
\mathcal{C}
=
\left\{
\operatorname{ev}_{\boldsymbol{\alpha},\boldsymbol{v}}(f(x))
\,:\,
f(x)\in\mathcal{S}
\right\}
\]
is called a twisted generalized Reed-Solomon $(\operatorname{TGRS})$ code.
When $\boldsymbol{v}=(1,\ldots,1)$, the code is referred to as twisted Reed-Solomon $(\operatorname{TRS})$ code.

\end{definition}
Hu et al.~\cite{hu2025p} proposed a more general form of TGRS codes.
\begin{definition}[\cite{hu2025p}]\label{Def:L,P}
Let $n$, $k$, and $s$ be integers satisfying $0<k\leq n$ and $0\leq s\leq n-k$. Let $\mathcal{L}\subseteq[0,n-k-1]$ be the twist set with $s=|\mathcal{L}|$, let $\mathcal{P}\subseteq[0,k-1]$ be the position set, and let $B=[b_{i,j}]\in\mathbb{F}_q^{k\times(n-k)}$ be the coefficient matrix, where $0\leq i\leq k-1$ and $0\leq j\leq n-k-1$. Define
\begin{equation}
F(\mathcal{L},\mathcal{P},B)
=
\left\{
\sum_{i=0}^{k-1}f_i x^i
+
\sum_{i\in\mathcal{P}}f_i\sum_{j\in\mathcal{L}}b_{i,j}x^{k+j}
:
f_i\in\mathbb{F}_q,\ 0\leq i\leq k-1
\right\}.
\label{eq:LP-polynomial-space}
\end{equation}
Let $\boldsymbol{\alpha}=\{\alpha_1,\alpha_2,\ldots,\alpha_n\}\subseteq\mathbb{F}_q$ consist of pairwise distinct elements, and let $\boldsymbol{v}=(v_1,v_2,\ldots,v_n)\in(\mathbb{F}_q^*)^n$. The code
\begin{equation}
C(\mathcal{L},\mathcal{P},B)
=
\left\{
\operatorname{ev}_{\boldsymbol{\alpha},\boldsymbol{v}}(f)
=
\bigl(v_1f(a_1),\ldots,v_nf(a_n)\bigr)
:
f\in F(\mathcal{L},\mathcal{P},B)
\right\}
\label{eq:LP-TGRS-code}
\end{equation}
is called an $(\mathcal{L},\mathcal{P})$-twisted generalized Reed-Solomon code, abbreviated as an $(\mathcal{L},\mathcal{P})$-TGRS code. When $\boldsymbol{v}=(1,\ldots,1)$, it is called an $(\mathcal{L},\mathcal{P})$-twisted Reed-Solomon code, abbreviated as an $(\mathcal{L},\mathcal{P})$-TRS code.
\end{definition}

Roth and Lempel proposed Roth-Lempel (RL) codes in 1989~\cite{roth1989construction}.

\begin{definition}~\cite{roth1989construction}\label{def:ordinary-RL-code}
	Let $3\le k<n$ and $\delta\in\mathbb{F}_{q}$. Let $\boldsymbol{\alpha}=\{\alpha_1,\ldots,\alpha_n\}\subseteq\mathbb F_q$ with $\alpha_i\neq\alpha_j$ for $i\neq j$ and
$\boldsymbol{v}=(v_1,\ldots,v_n)\in (\mathbb{F}_q^*)^n$.
    The generalized Roth-Lempel code of dimension $k$ associated with $\boldsymbol{\alpha},\boldsymbol{v}$ and $\delta$ is defined by
	\[
	\operatorname{GRL}_{k,n+2}(\boldsymbol{\alpha},\boldsymbol{v},\delta)=\left\{\bigl(v_1 f(\alpha_1),\ldots,v_nf(\alpha_n),f_{k-1},f_{k-2}+\delta f_{k-1}\bigr)\mid f(x)=\sum_{i=0}^{k-1}f_i x^i,\ f_i\in\mathbb F_q\right\}.
	\]
	By definition, $\operatorname{GRL}_{k,n+2}(\boldsymbol{\alpha},\boldsymbol{v},\delta)$ is an $[n+2,k]$ linear code over $\mathbb F_q$. When $\boldsymbol{v}=(1,\ldots,1)$, the code is referred to as $\operatorname{RL}_{k,n+2}(\boldsymbol{\alpha},\delta)$.
\end{definition}
  In what follows, we keep the same Roth-Lempel extension coordinates, but replace the ordinary polynomial space by a  twisted polynomial space.

 \begin{definition}\label{def:TRL-code}
 With the notation as in Definition~\ref{Def:L,P}, let $\boldsymbol{\alpha}=\{\alpha_1,\alpha_2,\cdots,\alpha_n\}\subseteq \mathbb{F}_q$ with distinct $\alpha_1,\ldots,\alpha_n$ and
$\boldsymbol{v}=(v_1,\ldots,v_n)\in (\mathbb{F}_q^*)^n$. Let $\delta\in\mathbb{F}_{q}$,
then the twisted generalized 
Roth-Lempel $(\operatorname{TGRL})$ code is defined by
\begin{equation}
\operatorname{TGRL}(\boldsymbol{\alpha},\boldsymbol{v},\delta,\mathcal{L},\mathcal{P},B)
=\left\{
(v_1f(\alpha_1),\ldots,v_nf(\alpha_n),f_{k-1},f_{k-2}+\delta\cdot f_{k-1})
:
f(x)\in F(\mathcal{L},\mathcal{P},B)
\right\}.
\end{equation}
We call it the $(\mathcal{L},\mathcal{P})$-$\operatorname{TGRL}$ code. When $\boldsymbol{v}=(1,\ldots,1)$, the code is abbreviated as $\operatorname{TRL}(\boldsymbol{\alpha},\delta,\mathcal{L},\mathcal{P},B)$ and is called the $(\mathcal{L},\mathcal{P})$-$\operatorname{TRL}$ code.
 \end{definition}

In this paper, we shall consider the cases $\mathcal{L}=\{0\}$ and $\mathcal{P}=\{\ell\}$, where $0\leq\ell\leq k-1$. Specifically, let integers $\ell,k,n$ be such that $0\leq\ell\leq k-1\leq n$. For any $\eta\in\mathbb{F}_{q}$ denote by
\begin{equation*}
    S_{k,\ell,\eta}=\left\{\sum\limits_{0\leq i\leq k-1\atop i\neq\ell}f_{i}x^i+f_{\ell}\left(x^{\ell}+\eta x^k\right):f_{0},f_{1},\cdots,f_{k-1}\in\mathbb{F}_{q}\right\}.
\end{equation*}
Let $\delta\in\mathbb{F}_{q}$,
for any $\boldsymbol{\alpha}=\{\alpha_{1},\cdots,\alpha_{n}\}\subseteq\mathbb{F}_{q}$ with distinct $\alpha_1,\cdots,\alpha_n$ and $\boldsymbol{v}=(v_1,\cdots,v_n)\in(\mathbb{F}_{q}^{*})^n$,
we will focus on the following $\operatorname{TGRL}$ codes:
\begin{equation*}
    \operatorname{TGRL}_{k,n+2}(\boldsymbol{\alpha},\boldsymbol{v},\eta,\delta,\ell)=\left\{(v_1f(\alpha_{1}),\cdots,v_nf(\alpha_{n}),f_{k-1},f_{k-2}+\delta\cdot f_{k-1}):f\in S_{k,\ell,\eta}\right\}.
\end{equation*}
When $\boldsymbol{v}=(1,\ldots,1)$, the code is abbreviated as $\operatorname{TRL}_{k,n+2}(\boldsymbol{\alpha},\eta,\delta,\ell)$  and is called the $\operatorname{TRL}$ code.

 Obviously, $\operatorname{TGRL}_{k,n+2}(\boldsymbol{\alpha},\boldsymbol{v},\eta,\delta,\ell)$ is a $[n+2,k]$ linear code. Moreover, the generator matrix of this code is
\small{ \[
 G^{\operatorname{TGRL}}_{k,n+2}(\boldsymbol{\alpha},\boldsymbol{v},\eta,\delta,\ell)=
 \begin{pmatrix}
 v_1 & v_2 & \cdots & v_n & 0 & 0\\
 v_1\alpha_1 & v_2\alpha_2 & \cdots & v_n\alpha_n & 0 & 0\\
  \vdots & \vdots &  \vdots&\vdots & \vdots & \vdots\\
 v_1\alpha_1^{\ell-1} & v_2\alpha_2^{\ell-1} & \cdots & v_n\alpha_n^{\ell-1} & 0 & 0\\
  v_1\left(\alpha_1^{\ell}+\eta\alpha_{1}^k\right) &v_2\left(\alpha_2^{\ell}+\eta\alpha_{2}^{k}\right)&\cdots&v_n\left(\alpha_n^{\ell}+\eta\alpha_{n}^k\right) &0&0\\
 v_1\alpha_1^{\ell+1} & v_2\alpha_2^{\ell+1} & \cdots & v_n\alpha_n^{\ell+1} & 0 & 0\\
 \vdots & \vdots &\vdots  & \vdots & \vdots & \vdots\\
  v_1\alpha_1^{k-3} & v_2\alpha_2^{k-3} & \cdots & v_n\alpha_n^{k-3} & 0 & 0\\
 v_1\alpha_1^{k-2} & v_2\alpha_2^{k-2} & \cdots & v_n\alpha_n^{k-2} & 0 & 1\\
 v_1\alpha_1^{k-1} & v_2\alpha_2^{k-1} & \cdots & v_n\alpha_n^{k-1} & 1 & \delta\\
 \end{pmatrix}
 \]}
 for all $0\leq\ell\leq k-3$,
 \small{\[
 G^{\operatorname{TGRL}}_{k,n+2}(\boldsymbol{\alpha},\boldsymbol{v},\eta,\delta,k-2)=
 \begin{pmatrix}
 v_1 & v_2 & \cdots & v_n & 0 & 0\\
 v_1\alpha_1 & v_2\alpha_2 & \cdots & v_n\alpha_n & 0 & 0\\
  \vdots & \vdots &  \vdots&\vdots & \vdots & \vdots\\
  v_1\alpha_1^{k-3} & v_2\alpha_2^{k-3} & \cdots & v_n\alpha_n^{k-3} & 0 & 0\\
v_1\left(\alpha_1^{k-2}+\eta\alpha_{1}^k\right) &v_2\left(\alpha_2^{k-2}+\eta\alpha_{2}^{k}\right)&\cdots&v_n\left(\alpha_n^{k-2}+\eta\alpha_{n}^k\right) &0&1\\
 v_1\alpha_1^{k-1} & v_2\alpha_2^{k-1} & \cdots & v_n\alpha_n^{k-1} & 1 & \delta\\
 \end{pmatrix}
 \]}
 and
\small{  \[
 G^{\operatorname{TGRL}}_{k,n+2}(\boldsymbol{\alpha},\boldsymbol{v},\eta,\delta,k-1)=
 \begin{pmatrix}
 v_1 & v_2 & \cdots & v_n & 0 & 0\\
 v_1\alpha_1 & v_2\alpha_2 & \cdots & v_n\alpha_n & 0 & 0\\
  \vdots & \vdots &  \vdots&\vdots & \vdots & \vdots\\
  v_1\alpha_1^{k-3} & v_2\alpha_2^{k-3} & \cdots & v_n\alpha_n^{k-3} & 0 & 0\\
   v_1\alpha_1^{k-2} & v_2\alpha_2^{k-2} & \cdots & v_n\alpha_n^{k-2} & 0& 1\\
v_1\left(\alpha_1^{k-1}+\eta\alpha_{1}^k\right) &v_2\left(\alpha_2^{k-1}+\eta\alpha_{2}^{k}\right)&\cdots&v_n\left(\alpha_n^{k-1}+\eta\alpha_{n}^k\right) &1&\delta\\
 \end{pmatrix}.
 \]}

The Schur product of linear codes over a finite field has emerged as a fundamental operation in both classical and quantum coding theory~\cite{randriambololona2015products}. It is widely used in the construction of code-equivalence distinguishers in coding theory and code-based cryptography.
\begin{definition}\label{schur product}
For $\boldsymbol{x}=(x_1,x_2,\cdots ,x_n)$, $\boldsymbol{y}=(y_1,y_2,\cdots ,y_n) \in {\mathbb{F}_q ^n}$, the \textit{Schur product} of $\boldsymbol{x}$ and $\boldsymbol{y}$ is defined as $\boldsymbol{x}*\boldsymbol{y}:=\left( {{x_1}{y_1}, \ldots ,{x_n}{y_n}} \right)$. The Schur product of two linear codes $\mathcal{C}_1,\mathcal{C}_2 \subseteq \mathbb{F}_q^n$ is defined as
\[\mathcal{C}_1*\mathcal{C}_2: =\operatorname{Span}_{\mathbb{F}_{q}}\left\langle {{\textit{\textbf{c}}_1}*{\textit{\textbf{c}}_2}:{\textit{\textbf{c}}_1} \in \mathcal{C}_1,{\textit{\textbf{c}}_2} \in \mathcal{C}_2} \right\rangle.\]
In particular, the Schur product of $\mathcal{C}$ with itself is denoted by $\mathcal{C}^2$ and is often called the Schur square of $\mathcal{C}$.
\end{definition}

The notion of equivalence for linear codes is defined as follows.
\begin{definition}[\cite{beelen2017twisted}]
  Let $\mathcal{C}$,~$\mathcal{D}$ be $[n,k]$ linear codes over $\mathbb{F}_q$. We say that $\mathcal{C}$ and $\mathcal{D}$ are \textit{equivalent} if there is a permutation $\pi \in S_n$ and $\boldsymbol{v}$=$(v_1,v_2,\cdots ,v_n)\in {\left( {\mathbb{F}_q^ * } \right)^n}$ such that $\mathcal{C}={\varphi _{\pi ,\textit{\textbf{v}}}}(\mathcal{D})$ where 
\[{\varphi _{\pi ,v}}:\mathbb{F}_q^n \to \mathbb{F}_q^n,\left( {{c_1}, \ldots ,{c_n}} \right) \mapsto \left( {{v_1}{c_{\pi \left( 1 \right)}}, \ldots ,{v_n}{c_{\pi \left( n \right)}}} \right)\]
is the Hamming-metric isometry of $\mathbb{F}_{q}^n$.
\end{definition}

The following proposition determines the Schur square of a GRS code.

\begin{proposition}[\cite{couvreur2014distinguisher}]\label{Prop:schur of RS}
  If $k \leq \frac{n}{2}$, then
$GRS_{k}(\mathcal{A},\boldsymbol{v})^2 = GRS_{2k-1}(\mathcal{A},\boldsymbol{v}^2)$.

\end{proposition}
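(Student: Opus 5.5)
The statement to prove is Proposition~\ref{Prop:schur of RS}: for $k\le n/2$, $GRS_k(\mathcal{A},\boldsymbol{v})^2=GRS_{2k-1}(\mathcal{A},\boldsymbol{v}^2)$. The plan is to prove the two inclusions separately and then get equality from a dimension count.

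For the inclusion $\subseteq$, take two codewords $\operatorname{ev}_{\mathcal{A},\boldsymbol v}(f)$ and $\operatorname{ev}_{\mathcal{A},\boldsymbol v}(g)$ with $f,g\in\mathbb{F}_q[x]_{<k}$. Their Schur product is
\[
\bigl(v_1^2 f(\alpha_1)g(\alpha_1),\ldots,v_n^2 f(\alpha_n)g(\alpha_n)\bigr)=\operatorname{ev}_{\mathcal{A},\boldsymbol v^2}(fg).
\]
Since $\deg(fg)\le 2k-2$, this vector lies in $GRS_{2k-1}(\mathcal{A},\boldsymbol v^2)$. The Schur square is spanned by such products, so the whole square lies in this code.

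For the inclusion $\supseteq$, it suffices to show that the spanning vectors $\operatorname{ev}_{\mathcal{A},\boldsymbol v^2}(x^m)$, $0\le m\le 2k-2$, lie in the square. Every such $m$ can be written as $m=i+j$ with $0\le i,j\le k-1$; for instance take $i=\min(m,k-1)$ and $j=m-i$. Then $\operatorname{ev}_{\mathcal{A},\boldsymbol v^2}(x^m)=\operatorname{ev}_{\mathcal{A},\boldsymbol v}(x^i)*\operatorname{ev}_{\mathcal{A},\boldsymbol v}(x^j)$, so it belongs to $GRS_k(\mathcal{A},\boldsymbol v)^2$. This gives the reverse inclusion.

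The dimension is the only point where the hypothesis enters, and it is the main thing to check. The hypothesis $k\le n/2$ gives $2k-1\le n-1<n$. A nonzero polynomial of degree $<n$ cannot vanish at all $n$ distinct points of $\mathcal{A}$, and every $v_i^2\neq 0$. Hence the evaluation map $\operatorname{ev}_{\mathcal{A},\boldsymbol v^2}$ is injective on $\mathbb{F}_q[x]_{<2k-1}$, so $GRS_{2k-1}(\mathcal{A},\boldsymbol v^2)$ is a genuine GRS code of dimension $2k-1$.

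Both inclusions therefore concern this well-defined $(2k-1)$-dimensional code, and together they give the stated equality. Without the hypothesis the same argument still shows that the square equals the evaluation image of $\mathbb{F}_q[x]_{<2k-1}$, but that image can have dimension less than $2k-1$ (in the extreme case it is all of $\mathbb{F}_q^n$).
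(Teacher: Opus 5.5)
Your proof is correct. The identity $\operatorname{ev}_{\mathcal{A},\boldsymbol v}(f)*\operatorname{ev}_{\mathcal{A},\boldsymbol v}(g)=\operatorname{ev}_{\mathcal{A},\boldsymbol v^2}(fg)$ gives $\subseteq$, the splitting $x^m=x^i x^j$ with $0\le i,j\le k-1$ gives $\supseteq$, and $2k-1<n$ makes the right-hand side a genuine $(2k-1)$-dimensional GRS code. The paper states this result with a citation and proves nothing itself; your argument is the standard one behind that citation, and, as you note, the hypothesis only matters for the dimension (it could even be relaxed to $2k-1\le n$).
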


\begin{remark}
If two $[n,k]$-codes $\mathcal{C}_{1}$ and $\mathcal{C}_{2}$ over $\mathbb{F}_{q}$ are equivalent, then $\mathcal{C}_{1}^2$ and $\mathcal{C}_{2}^2$ are equivalent. Hence, if an $[n,k,n-k+1]$ code $\mathcal{C}$ with $k\leq\frac{n}{2}$ satisfies $\dim(\mathcal{C}^2)\neq 2k-1$, then it is not equivalent to  any $\operatorname{RS}/\operatorname{GRS}$ code. In this paper, we call such codes $\operatorname{non}$-$\operatorname{RS}$ $\operatorname{MDS}$ codes.
\end{remark}

\section{The MDS property of the twisted Roth-Lempel code}
In this section, we investigate the distance and structural properties of the codes $\operatorname{TRL}_{k,n+2}(\boldsymbol{\alpha},\eta,\delta,\ell)$. We first determine the range of their minimum distances and characterize the cases in which the minimum distance is $n-k$ or $n-k+1$. We then establish necessary and sufficient conditions for these codes to be MDS, analyze their Schur squares to distinguish them from RS and classical $\operatorname{RL}$ codes, and finally present explicit constructions of non-RS MDS $\operatorname{TRL}$ codes.
First, we determine the range of the minimum distance of the code $\operatorname{TRL}_{k,n+2}(\boldsymbol{\alpha},\eta,\delta,\ell)$.

\begin{lemma}\label{Lem:n-k+1 le d le n-k+3}
   Let $n,k,\ell$ be integers satisfying $3\le k<n$ and $0\le \ell\le k-1$. Let $\boldsymbol{\alpha}=\{\alpha_1,\ldots,\alpha_n\}\subseteq\mathbb F_q$ with $\alpha_i\neq\alpha_j$ for $i\neq j$.  Let $\eta,\delta\in\mathbb F_q^{*}$, then the minimum distance $d$  of the  code $\operatorname{TRL}_{k,n+2}(\boldsymbol{\alpha},\eta,\delta,\ell)$  satisfies 
   \[
   n-k\leq d\leq n-k+3\quad\mbox{for all}\quad 0\leq\ell\leq k-3
   \]
   and 
   \[
    n-k+1\leq d\leq n-k+3\quad\mbox{for all}\quad k-2\leq\ell\leq k-1.
   \]
\end{lemma}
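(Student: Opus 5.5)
The upper bound $d\le n-k+3$ is the Singleton bound for an $[n+2,k]$ code, $d\le (n+2)-k+1$. So the work is in the lower bounds, and I would get them by counting zeros of the underlying polynomial. Take a nonzero codeword $c=(f(\alpha_1),\ldots,f(\alpha_n),f_{k-1},f_{k-2}+\delta f_{k-1})$ with $f\in S_{k,\ell,\eta}$. The map $f\mapsto c$ is injective: if all $n$ evaluations vanish, then $f$ has degree at most $k<n$, so $f=0$. Hence $c\neq 0$ exactly when $f\neq 0$. Also $wt(c)\ge n-N$, where $N=|\{i:f(\alpha_i)=0\}|$, and $N\le \deg f$ because $f$ is a nonzero polynomial with distinct roots among the $\alpha_i$. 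I would then split into cases according to $\deg f$ and which of $f_{k-1}$, $f_{k-2}$ vanish.

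\emph{Case $0\le \ell\le k-3$.} Here $\deg f\le k$, with degree $k$ only when $f_\ell\neq 0$ (since $\eta\neq 0$).
\begin{itemize}
\item If $f_{k-1}\neq 0$, then $f$ contributes at least $n-k$ nonzero evaluation coordinates, and the coordinate $f_{k-1}$ adds one more, so $wt(c)\ge n-k+1$.
\item If $f_{k-1}=0$ and $f_\ell=0$, then $\deg f\le k-2$ and $wt(c)\ge n-k+2$.
\item If $f_{k-1}=0$ and $f_\ell\neq 0$, then $\deg f=k$ and the last two coordinates may both vanish (when also $f_{k-2}=0$). This only gives $wt(c)\ge n-k$, which is the claimed bound.
\end{itemize}
So $d\ge n-k$ in this range.

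\emph{Case $\ell\in\{k-2,k-1\}$.} The twist now sits on one of the coefficients that feed the extra coordinates, so any $f$ of degree $k$ forces a nonzero extension coordinate.
\begin{itemize}
\item For $\ell=k-1$: $\deg f=k$ iff $f_{k-1}\neq 0$, and then the coordinate $f_{k-1}\neq 0$, giving $wt(c)\ge n-k+1$. Otherwise $\deg f\le k-2$ and $wt(c)\ge n-k+2$.
\item For $\ell=k-2$: if $f_{k-2}\neq 0$ with $f_{k-1}=0$, then the last coordinate $f_{k-2}+\delta f_{k-1}=f_{k-2}\neq 0$, so $wt(c)\ge n-k+1$. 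If $f_{k-1}\neq 0$, the coordinate $f_{k-1}$ is nonzero and $\deg f\le k$, so again $wt(c)\ge n-k+1$. If $f_{k-1}=f_{k-2}=0$, then $\deg f\le k-3$ and $wt(c)\ge n-k+3$.
\end{itemize}
So $d\ge n-k+1$ in this range.

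The only place needing care is bookkeeping: in each case, check which of the two extension coordinates is forced nonzero by the twist. The hypothesis $\eta\neq0$ matters only for pinning down exact degrees; the bounds themselves hold regardless. $\delta$ plays no role in the lower bounds. No deeper obstacle is expected, since the statement gives only a range and not exact values.
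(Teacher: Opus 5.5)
Your proof is correct and follows essentially the same route as the paper. The Singleton bound gives $d\le n-k+3$, and the lower bounds come from bounding the number of zeros among the first $n$ coordinates by $\deg f\le k$, while tracking which of the extension coordinates $f_{k-1}$ and $f_{k-2}+\delta f_{k-1}$ are forced to be nonzero. Your case split on $f_{k-1}$ and $f_\ell$, instead of the paper's four zero patterns $T_1^{(\ell)},\ldots,T_4^{(\ell)}$ of the last two coordinates, is only a reorganization. Your explicit check that $c\neq 0$ forces $f\neq 0$ fills in a small step the paper leaves implicit.
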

\begin{proof}
Let $\mathcal{C} = \operatorname{TRL}_{k,n+2}(\boldsymbol{\alpha},\eta, \delta, \ell)$. By the Singleton bound, we know that $d(\mathcal{C}) \leq n-k+3$. For $0 \leq \ell \leq k-1$, we classify the code $\mathcal{C}$ into the following four cases according to whether the last two coordinates of the codeword are zero or not:
\begin{align*}
T_1^{(\ell)} &= \{(f(\alpha_1),\cdots,f(\alpha_n), f_{k-1}, f_{k-2}+\delta f_{k-1}) : f(x)\in S_{k,\ell,\eta}, f_{k-1}=f_{k-2}=0\},\\
T_2^{(\ell)} &= \{(f(\alpha_1),\cdots,f(\alpha_n), f_{k-1}, f_{k-2}+\delta f_{k-1}) : f(x)\in S_{k,\ell,\eta}, f_{k-1}=0, f_{k-2}\neq 0\},\\
T_3^{(\ell)} &= \{(f(\alpha_1),\cdots,f(\alpha_n), f_{k-1}, f_{k-2}+\delta f_{k-1}) : f(x)\in S_{k,\ell,\eta}, f_{k-1}\neq 0, f_{k-2}=-\delta f_{k-1}\},\\
T_4^{(\ell)} &= \{(f(\alpha_1),\cdots,f(\alpha_n), f_{k-1}, f_{k-2}+\delta f_{k-1}): f(x)\in S_{k,\ell,\eta}, f_{k-1}\neq 0, f_{k-2}\neq -\delta f_{k-1}\}.
\end{align*}
For $\ell=k-2$, we shall consider four cases separately:
\begin{itemize}
    \item [(i)] If $\mathbf{c}=(c_1,c_2,\cdots,c_{n+2})\in T_1^{(k-2)}\setminus\{\boldsymbol{0}\}$, then $c_{n+1}=c_{n+2}=0$ and there exists a polynomial $f(x)$ of degree at most $k-3$ such that $f(\alpha_i)=c_i$ for all $1\leq i\leq n$. Thus,
\[
\#\{1\leq i\leq n: c_i=0\} \leq \deg(f(x)) \leq k-3.
\]
Therefore,
\[
\min_{\mathbf{c}\in T_1^{(k-2)}\setminus\{\boldsymbol{0}\}}\#\{1\leq i\leq n+2: c_i\neq 0\} = n-\max_{\mathbf{c}\in T_1^{(k-2)}\setminus\{\boldsymbol{0}\}}\#\{1\leq i\leq n: c_i=0\} \geq n-k+3.
\]
\item [(ii)] If $\mathbf{c}=(c_1,c_2,\cdots,c_{n+2})\in T_2^{(k-2)}\setminus\{\boldsymbol{0}\}$, then $c_{n+1}=0,c_{n+2}\neq 0$ and  there exists a
\[
f(x)=\sum_{\substack{0\leq i\leq k-3}} f_i x^i+f_{k-2}(x^{k-2}+\eta x^k)\in \mathcal{S}_{k,k-2,\eta}
\]
such that $c_i=f(\alpha_i)$ for all $1\leq i\leq n$, where $f_{k-2}\neq 0$.
Since the degree of $f(x)$ is at most $k$, we have
\[
\#\{1\leq i\leq n: c_i=0\} \leq \deg(f(x)) \leq k.
\]
Hence,
\[
\min_{\mathbf{c}\in T_2^{(k-2)}\setminus\{\boldsymbol{0}\}}\#\{1\leq i\leq n+2: c_i\neq 0\} = 1+n-\max_{\mathbf{c}\in T_2^{(k-2)}\setminus\{\boldsymbol{0}\}}\#\{1\leq i\leq n: c_i=0\} \geq n-k+1.
\]

\item [(iii)] If $\boldsymbol{c}=(c_1,c_2,\cdots,c_{n+2})\in T_3^{(k-2)}\setminus\{\boldsymbol{0}\}$, then $c_{n+1}\neq 0,c_{n+2}=0$ and  there exists a
\[
f(x)=\sum_{\substack{0\leq i\leq k-3}} f_i x^i-f_{k-1}\delta(x^{k-2}+\eta x^k)+ f_{k-1}x^{k-1}\in \mathcal{S}_{k,k-2,\eta}
\]
such that $c_i=f(\alpha_i)$ for all $1\leq i\leq n$, where $f_{k-1}\neq 0$.
Since the degree of $f(x)$ is at most $k$, we have
\[
\#\{1\leq i\leq n: c_i=0\} \leq \deg(f(x)) \leq k.
\]
Hence,
\begin{equation*}
    \begin{aligned}
        \min_{\mathbf{c}\in T_3^{(k-2)}\setminus\{\boldsymbol{0}\}}\#\{1\leq i\leq n+2: c_i\neq 0\} &=1+\min_{\mathbf{c}\in T_3^{(k-2)}\setminus\{\boldsymbol{0}\}}\#\{1\leq i\leq n: c_i\neq 0\}\\
        &= 1+n-\max_{\mathbf{c}\in T_3^{(k-2)}\setminus\{\boldsymbol{0}\}}\#\{1\leq i\leq n: c_i=0\} \geq n-k+1.
    \end{aligned}
\end{equation*}
\item [(iv)]  If $\boldsymbol{c}=(c_1,c_2,\cdots,c_{n+2})\in T_4^{(k-2)}\setminus\{\boldsymbol{0}\}$, then $c_{n+1},c_{n+2}\neq 0$ and  there exists a
\[
f(x)=\sum_{\substack{0\leq i\leq k-1\\ i\neq k-2}} f_i x^i+f_{k-2}(x^{k-2}+\eta x^k)\in \mathcal{S}_{k,k-2,\eta}
\]
such that $c_i=f(\alpha_i)$ for all $1\leq i\leq n$, where $f_{k-1}\neq 0,f_{k-2}\neq -f_{k-1}\delta$.
Since the degree of $f(x)$ is at most $k$, we have
\[
\#\{1\leq i\leq n: c_i=0\} \leq \deg(f(x)) \leq k.
\]
Hence,
\begin{equation*}
    \begin{aligned}
        \min_{\mathbf{c}\in T_4^{(k-2)}\setminus\{\boldsymbol{0}\}}\#\{1\leq i\leq n+2: c_i\neq 0\} &=2+\min_{\mathbf{c}\in T_4^{(k-2)}\setminus\{\boldsymbol{0}\}}\#\{1\leq i\leq n: c_i\neq 0\}\\
        &= 2+n-\max_{\mathbf{c}\in T_4^{(k-2)}\setminus\{\boldsymbol{0}\}}\#\{1\leq i\leq n: c_i=0\} \geq n-k+2.
    \end{aligned}
\end{equation*}
\end{itemize}


When $\ell=k-1$, analogous to the case $\ell= k-2$, we obtain four cases concerning the lower bound on the minimum distance:
\begin{itemize}
    \item [(i)] $\operatorname{min}\left\{\operatorname{wt}(\boldsymbol{c}):\boldsymbol{c}\in T_1^{(k-1)}\backslash\{\boldsymbol{0}\}\right\}=n-\max\limits_{\boldsymbol{c}\in T_1^{(k-1)}\backslash\{\boldsymbol{0}\}}\#\left\{1\leq i\leq n:c_{i}=0\right\}\geq n-k+3$.
    \item [(ii)] $\operatorname{min}\left\{\operatorname{wt}(\boldsymbol{c}):\boldsymbol{c}\in T_2^{(k-1)}\backslash\{\boldsymbol{0}\}\right\}=n+1-\max\limits_{\boldsymbol{c}\in T_2^{(k-1)}\backslash\{\boldsymbol{0}\}}\#\left\{1\leq i\leq n:c_{i}=0\right\}\geq n-k+3$.
    \item [(iii)] $\operatorname{min}\left\{\operatorname{wt}(\boldsymbol{c}):\boldsymbol{c}\in T_3^{(k-1)}\backslash\{\boldsymbol{0}\}\right\}=n+1-\max\limits_{\boldsymbol{c}\in T_3^{(k-1)}\backslash\{\boldsymbol{0}\}}\#\left\{1\leq i\leq n:c_{i}=0\right\}\geq n-k+1$.
     \item [(iv)] $\operatorname{min}\left\{\operatorname{wt}(\boldsymbol{c}):\boldsymbol{c}\in T_4^{(k-1)}\backslash\{\boldsymbol{0}\}\right\}=n+2-\max\limits_{\boldsymbol{c}\in T_4^{(k-1)}\backslash\{\boldsymbol{0}\}}\#\left\{1\leq i\leq n:c_{i}=0\right\}\geq n-k+2$.
\end{itemize}

When $0\leq\ell\leq k-3$, analogous to the case $\ell= k-2$, we obtain four cases concerning the lower bound on the minimum distance:
\begin{itemize}
    \item [(i)] $\operatorname{min}\left\{\operatorname{wt}(\boldsymbol{c}):\boldsymbol{c}\in T_1^{(\ell)}\backslash\{\boldsymbol{0}\}\right\}=n-\max\limits_{\boldsymbol{c}\in T_1^{(\ell)}\backslash\{\boldsymbol{0}\}}\#\left\{1\leq i\leq n:c_{i}=0\right\}\geq n-k$.
    \item [(ii)] $\operatorname{min}\left\{\operatorname{wt}(\boldsymbol{c}):\boldsymbol{c}\in T_2^{(\ell)}\backslash\{\boldsymbol{0}\}\right\}=n+1-\max\limits_{\boldsymbol{c}\in T_2^{(\ell)}\backslash\{\boldsymbol{0}\}}\#\left\{1\leq i\leq n:c_{i}=0\right\}\geq n-k+1$.
    \item [(iii)] $\operatorname{min}\left\{\operatorname{wt}(\boldsymbol{c}):\boldsymbol{c}\in T_3^{(\ell)}\backslash\{\boldsymbol{0}\}\right\}=n+1-\max\limits_{\boldsymbol{c}\in T_3^{(\ell)}\backslash\{\boldsymbol{0}\}}\#\left\{1\leq i\leq n:c_{i}=0\right\}\geq n-k+1$.
     \item [(iv)] $\operatorname{min}\left\{\operatorname{wt}(\boldsymbol{c}):\boldsymbol{c}\in T_4^{(\ell)}\backslash\{\boldsymbol{0}\}\right\}=n+2-\max\limits_{\boldsymbol{c}\in T_4^{(\ell)}\backslash\{\boldsymbol{0}\}}\#\left\{1\leq i\leq n:c_{i}=0\right\}\geq n-k+2$.
\end{itemize}

In total, we see that the minimum distance of the code $\mathcal{C}$  satisfies $n-k\leq d\leq n-k+3$ for $0\leq\ell\leq k-3$ and 
$n-k+1\leq d\leq n-k+3$ for $k-2\leq\ell\leq k-1$.
\end{proof}

Next, we  present necessary and sufficient conditions for the minimum distance of the code \small{$\operatorname{TRL}_{k,n+2}(\boldsymbol{\alpha},\eta,\delta,\ell)$}  to be $n-k$ for all $0\leq\ell\leq k-3$.

\begin{lemma}\label{Lem:d=n-k}
     Let $n,k,\ell$ be integers satisfying $3\le k<n$ and $0\le \ell\le k-3$. Let $\boldsymbol{\alpha}=\{\alpha_1,\ldots,\alpha_n\}\subseteq\mathbb F_q$ with $\alpha_i\neq\alpha_j$ for $i\neq j$.  Let $\eta,\delta\in\mathbb{F}_{q}^{*}$, then the  minimum distance of the code $\operatorname{TRL}_{k,n+2}(\boldsymbol{\alpha},\eta,\delta,\ell)$ is $n-k$ if and only if there exists a $k$-subset $I\subseteq [n]$ such that $\sigma_{1}(\boldsymbol{\alpha}_{I})=\sigma_{2}(\boldsymbol{\alpha}_{I})=0$ and $\eta\cdot\sigma_{k-\ell}(\boldsymbol{\alpha}_{I})=1$.

\end{lemma}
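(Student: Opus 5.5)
From the proof of Lemma~\ref{Lem:n-k+1 le d le n-k+3}, when $0\le\ell\le k-3$ the classes $T_2^{(\ell)},T_3^{(\ell)},T_4^{(\ell)}$ always give weight at least $n-k+1$. So $d=n-k$ holds exactly when some nonzero codeword of $T_1^{(\ell)}$ has weight $n-k$. Such a codeword has $f_{k-1}=f_{k-2}=0$, and then
\[
f(x)=\sum_{0\le i\le k-3,\,i\ne\ell}f_ix^i+f_\ell\bigl(x^\ell+\eta x^k\bigr).
\]
Its weight equals $n-k$ if and only if $f$ vanishes on exactly $k$ of the points $\alpha_i$. Since $\deg f\le k$, this forces $\deg f=k$, so $f_\ell\ne0$, and $f$ has exactly $k$ distinct roots in $\boldsymbol{\alpha}$.

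For the forward direction, suppose $f$ vanishes on $\boldsymbol{\alpha}_I$ with $|I|=k$. Then
\[
f(x)=\eta f_\ell\prod_{i\in I}(x-\alpha_i)=\eta f_\ell\sum_{j=0}^{k}\sigma_j(\boldsymbol{\alpha}_I)x^{k-j}.
\]
We compare coefficients with the shape of $f$. The coefficients of $x^{k-1}$ and $x^{k-2}$ in $f$ are zero, so $\sigma_1(\boldsymbol{\alpha}_I)=\sigma_2(\boldsymbol{\alpha}_I)=0$. The coefficient of $x^\ell$ in $f$ is $f_\ell$, and it corresponds to $j=k-\ell$, so $\eta f_\ell\,\sigma_{k-\ell}(\boldsymbol{\alpha}_I)=f_\ell$. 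Since $f_\ell\ne0$, this gives $\eta\,\sigma_{k-\ell}(\boldsymbol{\alpha}_I)=1$. The remaining coefficients, in degrees $0,\dots,k-3$ other than $\ell$, are free, so they impose no constraint.

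For the converse, take $f(x)=\eta^{-1}\prod_{i\in I}(x-\alpha_i)$, whose coefficient of $x^k$ is $1/\eta$. We put $f_\ell:=1$ and set $f_i:=\eta^{-1}\sigma_{k-i}(\boldsymbol{\alpha}_I)$ for $i\le k-3$, $i\ne\ell$. The hypotheses $\sigma_1=\sigma_2=0$ make the coefficients of $x^{k-1}$ and $x^{k-2}$ vanish. The hypothesis $\eta\sigma_{k-\ell}=1$ makes the coefficient of $x^\ell$ equal to $\eta^{-1}\sigma_{k-\ell}(\boldsymbol{\alpha}_I)=\eta^{-2}$. That value should equal $f_\ell+0=1$, so this step has to be handled with care.

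The careful version is as follows. We write $f=c\prod_{i\in I}(x-\alpha_i)$ and require $c=\eta f_\ell$. The coefficient of $x^\ell$ is then $c\,\sigma_{k-\ell}(\boldsymbol{\alpha}_I)$, and it must equal $f_\ell=c/\eta$. This holds exactly when $\eta\sigma_{k-\ell}(\boldsymbol{\alpha}_I)=1$, so choosing $c=1$ and $f_\ell=\eta^{-1}$ resolves the issue. The resulting $f$ lies in $S_{k,\ell,\eta}$, has $f_{k-1}=f_{k-2}=0$, and vanishes on exactly the $k$ distinct points of $\boldsymbol{\alpha}_I$. Its codeword therefore has weight $n-k$, and Lemma~\ref{Lem:n-k+1 le d le n-k+3} gives $d\ge n-k$, so $d=n-k$.

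The only delicate step is this coefficient bookkeeping: the $x^\ell$ coefficient of $f$ is $f_\ell$ itself, not zero, so it must be matched against the leading coefficient $\eta f_\ell$.
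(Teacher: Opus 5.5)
Your proof is correct and follows the same route as the paper: you use Lemma~\ref{Lem:n-k+1 le d le n-k+3} to reduce to codewords in $T_1^{(\ell)}$, write $f=\eta f_\ell\prod_{i\in I}(x-\alpha_i)$, and compare the coefficients of $x^{k-1}$, $x^{k-2}$ and $x^\ell$. Your explicit converse, with leading coefficient $1$ and $f_\ell=\eta^{-1}$, spells out the direction the paper leaves implicit; in the final write-up, drop the abandoned first attempt with $f_\ell=1$.
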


\begin{proof}
    Let $\mathcal{C} = \operatorname{TRL}_{k,n+2}(\boldsymbol{\alpha}, \eta, \delta, \ell)$. It follows from the analysis of Lemma~\ref{Lem:n-k+1 le d le n-k+3} that for a given codeword $\boldsymbol{c}=(c_{1},c_{2},\cdots,c_{n+2})\in\mathcal{C}$ , the case $\operatorname{wt}(\boldsymbol{c})=n-k$  can only occur in Case (i) for all $0\leq\ell\leq k-3$. Thus, $\operatorname{wt}(\boldsymbol{c})=n-k$ 
     if and only if there exists a
\[
f(x) = \sum_{\substack{0 \leq i \leq k-3 \\ i \neq \ell}} f_i x^i + f_{\ell}(x^\ell + \eta x^k) \in S_{k,\ell,\eta}
\]
such that
\[
\mathbf{c} = (f(\alpha_1), f(\alpha_2), \ldots, f(\alpha_n), 0, 0)
\qquad \mbox{and}\ \qquad
\#\{1 \leq i \leq n : f(\alpha_i) = 0\} = k,
\]
where $f_{\ell}\neq 0$.
 Since $\deg(f(x)) = k$ and $\#\{1 \leq i \leq n : f(\alpha_i) = 0\} = k$, there exists a $k$-subset $I \subseteq [n]$ such that
\[
f(x) = f_{\ell} \eta \prod_{j=1}^{k}(x - \alpha_{i_j}) = f_{\ell} \eta \sum_{j=0}^{k} \sigma_j(\boldsymbol{\alpha}_{I}) x^{k-j}.
\]
Comparing the coefficients of $x^{k-1},x^{k-2}$ and $x^\ell$ on both sides, we obtain $$
f_{\ell}\eta\sigma_{1}(\boldsymbol{\alpha}_{I})=0,\quad
f_{\ell}\eta\sigma_{2}(\boldsymbol{\alpha}_{I})=0\quad\mbox{and}\quad 
f_{\ell}\eta\sigma_{k-\ell}(\boldsymbol{\alpha}_{I})=f_{\ell}.$$
Hence, $ \sigma_1(\boldsymbol{\alpha}_{I})=\sigma_2(\boldsymbol{\alpha}_{I})=0$ and $\eta \sigma_{k-\ell}(\boldsymbol{\alpha}_{I}) = 1$. Moreover, the minimum distance $d$ of the code $\mathcal{C}$ satisfies $n-k \leq d \leq n-k+3$. In summary, the minimum distance of the code $\mathcal{C}$ is $n-k$ if and only if there exists a $k$-subset $I\subseteq [n]$ such that $\sigma_1(\boldsymbol{\alpha}_{I})=\sigma_{2}(\boldsymbol{\alpha}_I)=0$ and $\eta \sigma_{k-\ell}(\boldsymbol{\alpha}_{I}) = 1$.
\end{proof}

The following example demonstrates the existence of a code $\operatorname{TRL}_{k,n+2}(\boldsymbol{\alpha},\eta,\delta,\ell)$  with minimum distance $n-k$ for some $0\leq\ell\leq k-3$.

\begin{example}
    Let $n=11,k=7$ and $\boldsymbol{\alpha}=\{22, 7, 11, 3, 14, 27, 0, 25, 8, 13, 21 \}\subseteq\mathbb{F}_{31}$. Choose $\eta=30,\delta=1,\ell=3$. Then we have
    \[
    \addtocounter{MaxMatrixCols}{13}
    G=\begin{bmatrix}
 1 & 1 & 1 & 1 & 1 & 1 & 1 & 1 & 1 & 1 & 1 & 0 & 0\\
22 & 7& 11 & 3 &14 &27 & 0 &25&  8& 13& 21&  0&  0\\
19& 18& 28&  9& 10& 16&  0&  5&  2 &14 & 7 & 0&  0\\
25&  5& 16& 10& 28& 14&  0&  7& 14&  5& 12&  0&  0\\
20& 14&  9 &19&  7&  8&  0 &25&  4& 10& 18&  0&  0\\
 6&  5&  6& 26&  5& 30&  0&  5&  1&  6&  6 & 0 & 1\\
 8&  4&  4& 16&  8&  4&  0&  1&  8& 16&  2&  1&  1
    \end{bmatrix}.
    \]
On the one hand,  Magma verifies that there exists a $7$-subset $I= \{2, 3, 5, 6, 7, 10,11\}\subseteq [11]$ such that $\sigma_{1}(\boldsymbol{\alpha}_{I})=\sigma_{2}(\boldsymbol{\alpha}_{I})=0$ and $\eta\cdot\sigma_{4}(\boldsymbol{\alpha}_{I})=1$. Thus, the minimum distance of the code $\operatorname{TRL}_{7,13}(\boldsymbol{\alpha},30,1,3)$ is $n-k=4$.
On the other hand, from the given generator matrix, it can be directly verified  that this code is a linear code with parameters $[13,7,4]$ over $\mathbb{F}_{31}$ by using Magma. 
\end{example}

The following lemmas provide the necessary and sufficient conditions under which the code $\operatorname{TRL}_{k,n+2}(\boldsymbol{\alpha},\eta,\delta,\ell)$  has minimum distance $n-k+1$ for $0\leq\ell\leq k-1$.

\begin{lemma}\label{Lem:d=n-k+1 and ell=k-2,k-1}
     Let $n,k$ be integers satisfying $3\le k<n$. Let $\boldsymbol{\alpha}=\{\alpha_1,\ldots,\alpha_n\}\subseteq\mathbb F_q$ with $\alpha_i\neq\alpha_j$ for $i\neq j$.  Let $\eta,\delta\in\mathbb{F}_{q}^{*}$. Then the minimum distance of the code $\operatorname{TRL}_{k,n+2}(\boldsymbol{\alpha},\eta,\delta,k-2)$ is $n-k+1$ if and only if there exists a $k$-subset $I\subseteq [n]$ such that $\sigma_{1}(\boldsymbol{\alpha}_{I})=0,\eta\cdot\sigma_{2}(\boldsymbol{\alpha}_{I})=1$ or there exists a $k$-subset $J\subseteq [n]$ such that $\delta\eta\sigma_{1}(\boldsymbol{\alpha}_{J})=-1,\eta\cdot\sigma_{2}(\boldsymbol{\alpha}_{J})=1$.  The minimum distance of the code $\operatorname{TRL}_{k,n+2}(\boldsymbol{\alpha},\eta,\delta,k-1)$ is $n-k+1$ if and only if there exists a $k$-subset $I\subseteq [n]$ such that $\sigma_{1}(\boldsymbol{\alpha}_{I})=\eta^{-1}$ and $\sigma_{2}(\boldsymbol{\alpha}_{I})=-\delta\eta^{-1}$.

\end{lemma}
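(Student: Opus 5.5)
The plan is to reuse the four-way partition $T_1^{(\ell)},\dots,T_4^{(\ell)}$ from the proof of Lemma~\ref{Lem:n-k+1 le d le n-k+3}, which classifies codewords by whether the last two coordinates $f_{k-1}$ and $f_{k-2}+\delta f_{k-1}$ vanish. For $\ell\in\{k-2,k-1\}$, Lemma~\ref{Lem:n-k+1 le d le n-k+3} already gives $d\ge n-k+1$, so $d=n-k+1$ holds exactly when some codeword has weight $n-k+1$.

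First I locate where weight $n-k+1$ can occur. The bounds in that proof show the following.
\begin{itemize}
\item For $\ell=k-2$, weight $n-k+1$ can occur only in $T_2^{(k-2)}$ or $T_3^{(k-2)}$, since $T_1$ gives at least $n-k+3$ and $T_4$ at least $n-k+2$.
\item For $\ell=k-1$, it can occur only in $T_3^{(k-1)}$.
\end{itemize}
In each of these classes exactly one of the two extra coordinates is nonzero. So weight $n-k+1$ means that $f$, which has degree at most $k$, vanishes at exactly $k$ of the $\alpha_i$.

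This forces $\deg f=k$. The coefficient of $x^k$ is $\eta f_\ell$, and $f_\ell\ne 0$ in each class considered, because $\delta\ne 0$ in $T_3$. Hence there is a $k$-subset $I\subseteq[n]$ with
\[
f(x)=\eta f_\ell\prod_{j\in I}(x-\alpha_j)=\eta f_\ell\sum_{j=0}^{k}\sigma_j(\boldsymbol{\alpha}_I)x^{k-j}.
\]
Comparing the coefficients of $x^{k-1}$ and $x^{k-2}$ with the defining constraints of each class gives the stated conditions.
\begin{itemize}
\item In $T_2^{(k-2)}$ we have $f_{k-1}=0$ and $f_{k-2}\neq0$. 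This yields $\eta\sigma_1(\boldsymbol{\alpha}_I)=0$ and $\eta f_{k-2}\sigma_2(\boldsymbol{\alpha}_I)=f_{k-2}$, that is, $\sigma_1=0$ and $\eta\sigma_2=1$.
\item In $T_3^{(k-2)}$ we have $f_{k-2}=-\delta f_{k-1}$ with $f_{k-1}\ne0$. The coefficient of $x^{k-1}$ gives $-\delta\eta f_{k-1}\sigma_1=f_{k-1}$, so $\delta\eta\sigma_1=-1$. The coefficient of $x^{k-2}$ gives $-\delta\eta f_{k-1}\sigma_2=-\delta f_{k-1}$, so $\eta\sigma_2=1$.
\item In $T_3^{(k-1)}$ we have $f_{k-2}=-\delta f_{k-1}$ and $f_{k-1}\neq 0$, with the twist now on $x^{k-1}$. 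This yields $\eta\sigma_1=1$ and $\eta f_{k-1}\sigma_2=-\delta f_{k-1}$, that is, $\sigma_1=\eta^{-1}$ and $\sigma_2=-\delta\eta^{-1}$.
\end{itemize}

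For the converse, given a $k$-subset satisfying one of the conditions, I set $f=c\prod_{j\in I}(x-\alpha_j)$ with the scalar $c$ chosen appropriately:
\begin{itemize}
\item $c=1$ in the $T_2^{(k-2)}$ case;
\item $c=-\delta\eta$, i.e.\ $f_{k-1}=1$, in the $T_3^{(k-2)}$ case;
\item $c=\eta$ in the $T_3^{(k-1)}$ case.
\end{itemize}
I then check that $f$ lies in $S_{k,\ell,\eta}$. The coefficients of $x^0,\dots,x^{k-3}$ are free, so the only constraints are that the coefficient of $x^k$ equals $\eta$ times the coefficient of $x^\ell$, and the prescribed relations for the $x^{k-1}$ and $x^{k-2}$ coefficients. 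These are exactly the hypotheses. The resulting codeword has $k$ zeros among the first $n$ coordinates and exactly one nonzero extra coordinate, so its weight is $n-k+1$.

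The main obstacle is bookkeeping rather than any conceptual difficulty. One must verify the membership in $S_{k,\ell,\eta}$ carefully, in particular that the coefficient of $x^\ell$ is tied to the leading coefficient through $\eta$ (for instance $\eta\cdot\sigma_2\cdot c/\eta$ must match $c/\eta$). One must also confirm that the extra coordinates of the constructed codeword have the claimed zero pattern, so that no case mixes into $T_4$.
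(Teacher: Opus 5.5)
Your proposal is correct and follows essentially the same route as the paper. You use the lower bounds from the distance-range lemma to confine weight $n-k+1$ to $T_2^{(k-2)}$, $T_3^{(k-2)}$ and $T_3^{(k-1)}$, write $f=\eta f_\ell\prod_{j\in I}(x-\alpha_j)$, and compare the coefficients of $x^{k-1}$ and $x^{k-2}$. Your explicit converse construction (with $c=1$, $-\delta\eta$, $\eta$ respectively) and your worked-out $\ell=k-1$ case supply details that the paper only asserts or calls ``similar''.
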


\begin{proof}
    Let $\mathcal{C}_{\ell} = \operatorname{TRL}_{k,n+2}(\boldsymbol{\alpha}, \eta, \delta, \ell)$ for $k-2\leq\ell\leq k-1$. It follows from the analysis of Lemma~\ref{Lem:n-k+1 le d le n-k+3} that for a given codeword $\boldsymbol{c}=(c_{1},c_{2},\cdots,c_{n+2})\in\mathcal{C}_{k-2}$ , the case $\operatorname{wt}(\boldsymbol{c})=n-k+1$  can only occur in Case (ii) or Case (iii). Thus, $\operatorname{wt}(\boldsymbol{c})=n-k+1$
     if and only if 
     \begin{itemize}
         \item [(1)] There exists a
         \[
f_1(x) = \sum_{0 \leq i \leq k-3 } f_{1,i} x^i + f_{1,k-2}(x^{k-2} + \eta x^k) \in S_{k,k-2,\eta}
\]
such that
\[
\mathbf{c} = (f_1(\alpha_1), f_1(\alpha_2), \cdots, f_1(\alpha_n), 0, c_{n+2})
\qquad \mbox{and}\ \qquad
\#\{1 \leq i \leq n : f_1(\alpha_i) = 0\} = k,
\]
where $c_{n+2}\neq 0$ and $f_{1,k-2}\neq 0$.
\item [(2)] There exists a
         \[
f_2(x) = \sum_{0 \leq i \leq k-3 } f_{2,i} x^i+f_{2,k-1}x^{k-1}-f_{2,k-1}\delta(x^{k-2} + \eta x^k) \in S_{k,k-2,\eta}
\]
such that
\[
\mathbf{c} = (f_2(\alpha_1), f_2(\alpha_2), \cdots, f_2(\alpha_n), c_{n+1}, 0)
\qquad \mbox{and}\ \qquad
\#\{1 \leq i \leq n : f_2(\alpha_i) = 0\} = k,
\]
where $c_{n+1}\neq 0$ and $f_{2,k-1}\neq 0$.
     
\end{itemize}

For the former, since  $\deg(f_1(x)) = k$ and $\#\{1 \leq i \leq n : f_1(\alpha_i) = 0\} = k$, there exists a $k$-subset $I\subseteq [n]$ such that
\[
f_1(x) = f_{1,k-2}\eta \prod_{i\in I}(x-\alpha_{i}) = f_{1,k-2}\eta \sum_{j=0}^{k} \sigma_j(\boldsymbol{\alpha}_{I}) x^{k-j}.
\]
Comparing the coefficients of $x^{k-1}$ and $x^{k-2}$ on both sides, we obtain $\sigma_{1}(\boldsymbol{\alpha}_{I})=0$ and $\eta\sigma_{2}(\boldsymbol{\alpha}_{I})=1$.

For the latter, since $\deg(f_2(x)) = k$ and $\#\{1 \leq i \leq n : f_2(\alpha_i) = 0\} = k$, there exists a $k$-subset $J\subseteq [n]$ such that
\[
f_{2}(x)=-f_{2,k-1}\delta\eta\prod\limits_{j\in J}(x-\alpha_{j})=-f_{2,k-1}\delta\eta\sum\limits_{j=0}^k\sigma_{j}(\boldsymbol{\alpha}_{J})x^{k-j}.
\]
Comparing the coefficients of $x^{k-1}$ and $x^{k-2}$ on both sides, we obtain $\delta\eta\sigma_{1}(\boldsymbol{\alpha}_{J})=-1$ and $\eta\cdot\sigma_{2}(\boldsymbol{\alpha}_{J})=1$.

 Moreover, the minimum distance $d$ of the code $\mathcal{C}_{k-2}$ satisfies $n-k+1 \leq d \leq n-k+3$. In summary, the minimum distance of the code $\operatorname{TRL}_{k,n+2}(\boldsymbol{\alpha},\eta,\delta,k-2)$ is $n-k+1$ if and only if there exists  a $k$-subset $I\subseteq [n]$ such that $\sigma_{1}(\boldsymbol{\alpha}_{I})=0,\eta\cdot\sigma_{2}(\boldsymbol{\alpha}_{I})=1$ or there exists a $k$-subset $J\subseteq [n]$ such that $\delta\eta\sigma_{1}(\boldsymbol{\alpha}_{J})=-1$ and $\eta\cdot\sigma_{2}(\boldsymbol{\alpha}_{J})=1$. 

It follows from the analysis of Lemma~\ref{Lem:n-k+1 le d le n-k+3} that for a given codeword $\boldsymbol{c}=(c_{1},c_{2},\cdots,c_{n+2})\in\mathcal{C}_{k-1}$ , the case $\operatorname{wt}(\boldsymbol{c})=n-k+1$  can only occur in Case (iii).
Similar to the above analysis, it is not difficult to prove
the minimum distance of the code $\mathcal{C}_{k-1}$ is $n-k+1$ if and only if there exists a $k$-subset $I\subseteq [n]$ such that $\sigma_{1}(\boldsymbol{\alpha}_{I})=\eta^{-1}$ and $\sigma_{2}(\boldsymbol{\alpha}_{I})=-\delta\eta^{-1}$.
\end{proof}

\begin{lemma}\label{Lem:d=n-k+1 and 0 le ell le k-3}
Let $n,k$ be integers satisfying $3\le k<n$ and $0\leq\ell\leq k-3$. Let $\boldsymbol{\alpha}=\{\alpha_1,\ldots,\alpha_n\}\subseteq\mathbb F_q$ with $\alpha_i\neq\alpha_j$ for $i\neq j$. Let $\eta,\delta\in\mathbb{F}_{q}^{*}$, then the minimum distance of the code $\operatorname{TRL}_{k,n+2}(\boldsymbol{\alpha},\eta,\delta,\ell)$ is $n-k+1$ if and only if
\begin{description}
    \item[\normalfont (A)] no $k$-subset $I_1\subseteq[n]$ such that $\sigma_{1}(\boldsymbol{\alpha}_{I_1})=\sigma_{2}(\boldsymbol{\alpha}_{I_1})=0$ and $\eta\sigma_{k-\ell}(\boldsymbol{\alpha}_{I_1})=1$;
    
    \item[\normalfont (B)] at least one of the following conditions holds:
    \begin{description}
        \item[\normalfont (i)] There exists a $(k-1)$-subset $I_{2}\subseteq[n]$ such that
        \[
        \sigma_{1}(\boldsymbol{\alpha}_{I_2})\notin\{\alpha_1,\cdots,\alpha_n\}\setminus\boldsymbol{\alpha}_{I_2},\ 
        \sigma_{2}(\boldsymbol{\alpha}_{I_2})=\sigma_{1}^2(\boldsymbol{\alpha}_{I_2})\ \mbox{and}\ 
        \sigma_{k-\ell}(\boldsymbol{\alpha}_{I_2})
        =\sigma_{1}(\boldsymbol{\alpha}_{I_2})
        \sigma_{k-\ell-1}(\boldsymbol{\alpha}_{I_2})+\eta^{-1};
        \]
        
        \item[\normalfont (ii)] there exists a $k$-subset $I_3\subseteq[n]$ such that $\sigma_{1}(\boldsymbol{\alpha}_{I_3})=0$ and $\eta\sigma_{k-\ell}(\boldsymbol{\alpha}_{I_3})=1$;
        
        \item[\normalfont (iii)] there exists a $k$-subset $I_4\subseteq[n]$ such that $\delta\sigma_{1}(\boldsymbol{\alpha}_{I_4})+\sigma_{2}(\boldsymbol{\alpha}_{I_4})=0$ and $\eta\sigma_{k-\ell}(\boldsymbol{\alpha}_{I_4})=1$.
    \end{description}
\end{description}
\end{lemma}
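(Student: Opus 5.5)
The plan is to reuse the case split $T_1^{(\ell)},\dots,T_4^{(\ell)}$ from the proof of Lemma~\ref{Lem:n-k+1 le d le n-k+3}. By that lemma $d\ge n-k$, and by Lemma~\ref{Lem:d=n-k} we have $d=n-k$ exactly when a $k$-subset as in (A) exists. So $d=n-k+1$ holds if and only if (A) holds and some nonzero codeword has weight $n-k+1$. By the weight bounds in Lemma~\ref{Lem:n-k+1 le d le n-k+3}, such a codeword can only lie in $T_1^{(\ell)}$ (with exactly $k-1$ zeros among the first $n$ coordinates), in $T_2^{(\ell)}$ or in $T_3^{(\ell)}$ (with exactly $k$ zeros in each case). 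Codewords in $T_4^{(\ell)}$ have weight at least $n-k+2$. The task is therefore to show that these three possibilities correspond to (B)(i), (B)(ii) and (B)(iii) respectively.

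\textbf{Case $T_2^{(\ell)}$.} Here $f_{k-1}=0$ and $f_{k-2}\neq 0$. The first $n$ coordinates of the codeword have exactly $k$ zeros, so $\deg f=k$ and therefore $f_\ell\neq 0$. Hence $f=f_\ell\eta\prod_{i\in I_3}(x-\alpha_i)$ for some $k$-subset $I_3$. Comparing the coefficients of $x^{k-1}$ and $x^\ell$ gives $\sigma_1(\boldsymbol{\alpha}_{I_3})=0$ and $\eta\sigma_{k-\ell}(\boldsymbol{\alpha}_{I_3})=1$. The coefficient of $x^{k-2}$ is $f_\ell\eta\sigma_2(\boldsymbol{\alpha}_{I_3})$, and it is automatically $f_{k-2}$. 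It must be nonzero, but if it were zero we would be in the situation excluded by (A). Conversely, given $I_3$ satisfying (B)(ii), the polynomial $\eta^{-1}$-scaled $\prod_{i\in I_3}(x-\alpha_i)$ lies in $S_{k,\ell,\eta}$. Under (A) its $x^{k-2}$ coefficient is nonzero, so it produces a codeword of weight exactly $n-k+1$.

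\textbf{Case $T_3^{(\ell)}$.} Here $f_{k-2}=-\delta f_{k-1}$ with $f_{k-1}\neq0$. Again $\deg f=k$ with $k$ roots among the $\alpha_i$. Writing $f=c\prod_{i\in I_4}(x-\alpha_i)$ with $c=f_\ell\eta$, we get $f_{k-1}=c\sigma_1$ and $f_{k-2}=c\sigma_2$. The constraint then reads $\sigma_2+\delta\sigma_1=0$, and the $x^\ell$ coefficient gives $\eta\sigma_{k-\ell}=1$. One needs $\sigma_1(\boldsymbol{\alpha}_{I_4})\neq0$. If $\sigma_1=0$, then $\sigma_2=0$, which contradicts (A). So under (A) the condition is equivalent to (B)(iii), and the converse construction is the same as in the previous case.

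\textbf{Case $T_1^{(\ell)}$.} This is the main obstacle. Here $f_{k-1}=f_{k-2}=0$, so $f=\sum_{i\le k-3,\,i\ne\ell}f_ix^i+f_\ell(x^\ell+\eta x^k)$, and we need exactly $k-1$ zeros among the $\alpha_i$. Necessarily $f_\ell\ne0$: otherwise $\deg f\le k-3$ and there would be too few zeros. So $f$ has degree $k$, vanishes on a $(k-1)$-subset $I_2$, and its remaining root $\beta$ is not among the other $\alpha_i$. Thus $f=f_\ell\eta(x-\beta)\prod_{i\in I_2}(x-\alpha_i)$. Writing $\sigma_j=\sigma_j(\boldsymbol{\alpha}_{I_2})$, the coefficients are $f_\ell\eta(\sigma_j-\beta\sigma_{j-1})$ at $x^{k-j}$. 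Vanishing at $x^{k-1}$ forces $\beta=\sigma_1$. Vanishing at $x^{k-2}$ then gives $\sigma_2=\sigma_1^2$. Equating the $x^\ell$ coefficient to $f_\ell$ gives $\sigma_{k-\ell}-\sigma_1\sigma_{k-\ell-1}=\eta^{-1}$. The remaining root condition $\beta=\sigma_1\notin\boldsymbol{\alpha}\setminus\boldsymbol{\alpha}_{I_2}$ is exactly the one stated in (B)(i). There is also the possibility $\beta\in\boldsymbol{\alpha}_{I_2}$, a double root. This still gives exactly $k-1$ zeros, so it is allowed, and (i) as stated admits it.

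For the converse, given such an $I_2$, the polynomial $\eta^{-1}(x-\sigma_1)\prod_{i\in I_2}(x-\alpha_i)$ has vanishing $x^{k-1}$ and $x^{k-2}$ coefficients, and its $x^\ell$ coefficient equals $1$, which is consistent with the twist. Its other coefficients at $x^{k-3},\dots$ are free, so it lies in $S_{k,\ell,\eta}$. It gives a codeword of weight exactly $n-(k-1)=n-k+1$.

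Combining the three cases with (A) yields the claimed equivalence.
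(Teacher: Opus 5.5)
Your proposal is correct and follows essentially the paper's route: the same split into $T_1^{(\ell)},T_2^{(\ell)},T_3^{(\ell)}$, factoring the degree-$k$ polynomial through its roots (all $k$ of them, or $k-1$ plus one extra root $\beta=\sigma_1(\boldsymbol{\alpha}_{I_2})$), and comparing the coefficients of $x^{k-1},x^{k-2},x^{\ell}$. You are in fact more careful than the paper in showing that (A) absorbs the side conditions $\sigma_2(\boldsymbol{\alpha}_{I_3})\neq 0$ and $\sigma_1(\boldsymbol{\alpha}_{I_4})\neq 0$. There is one harmless slip in the $T_1^{(\ell)}$ converse: scaling $(x-\sigma_1)\prod_{i\in I_2}(x-\alpha_i)$ by $\eta^{-1}$ gives $x^{\ell}$-coefficient $\eta^{-2}$, not $1$ (the factor $\eta$ would give $1$). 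Nothing breaks, since membership in $S_{k,\ell,\eta}$ only requires the $x^k$-coefficient to be $\eta$ times the $x^{\ell}$-coefficient, and any nonzero scaling preserves that.
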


     \begin{proof}
        Let $\mathcal{C}_{\ell} =\operatorname{TRL}_{k,n+2}(\boldsymbol{\alpha}, \eta, \delta, \ell)$ for $0\leq\ell\leq k-3$. It follows from the analysis of Lemma~\ref{Lem:n-k+1 le d le n-k+3} that for a given codeword $\boldsymbol{c}=(c_{1},c_{2},\cdots,c_{n+2})\in\mathcal{C}$ , the case $\operatorname{wt}(\boldsymbol{c})=n-k+1$  can only occur in Case (i)-(iii). We shall consider the following three cases to present the necessary and sufficient conditions for the minimum distance of the code $\mathcal{C}_{\ell}$  to be $n-k+1$.
        \begin{itemize}
            \item [(1)] Since $\min\{\mathrm{wt}(\mathbf{c}) : \mathbf{c} \in T_1^{(\ell)} \setminus \{\mathbf{0}\}\} \geq n-k$, for a given $\mathbf{c} = (c_1, \ldots, c_{n+2}) \in T_1^{(\ell)}$, we have $\mathrm{wt}(\mathbf{c}) = n-k+1$, if and only if $\mathrm{wt}(\mathbf{c})> n-k$ and $\mathrm{wt}(\mathbf{c})\leq n-k+1$, if and only if the following hold:

\begin{enumerate}
\item [(a)] There does not exist a
$
f(x) = \sum\limits_{\substack{0 \leq i \leq k-3 \\ i \neq \ell}} f_i x^i + f_\ell(x^\ell + \eta x^k) \in S_{k,\ell,\eta}
$
  such that
 $$\mathbf{c} = (f(\alpha_1), f(\alpha_2), \ldots, f(\alpha_n),0,0)\quad \mbox{and}\quad 
\#\{1 \leq i \leq n : f(\alpha_i) = 0\} = k,$$
where $f_\ell \neq 0$ .
\item [(b)] There exists a
$
g(x) = \sum\limits_{0 \leq i \leq k-3\atop i\neq\ell } g_i x^i + g_\ell(x^\ell + \eta x^k) \in S_{k,\ell,\eta}
$
such that $$\mathbf{c} = (g(\alpha_1), \ldots, g(\alpha_n), 0, 0)\quad \mbox{and}\quad 
\#\{1 \leq i \leq n : g(\alpha_i) = 0\} = k-1,$$
where $g_{\ell}\neq 0$.
\end{enumerate}

For Case (a), it is equivalent to the non-existence of a $k$-subset $I_1\subseteq [n]$ such that
\[
f(x) = f_\ell \eta \prod_{i\in I}(x - \alpha_{i}) = f_\ell \eta \sum_{j=0}^{k} \sigma_{k-j}(\boldsymbol{\alpha}_{I_1}) x^j.
\]
Comparing the coefficients of $x^{k-1},x^{k-2}$ and $x^\ell$ on both sides yields $\sigma_{1}(\boldsymbol{\alpha}_{I_1})=\sigma_{2}(\boldsymbol{\alpha}_{I_1})=0$ and $\eta \sigma_{k-\ell}(\boldsymbol{\alpha}_{I_1}) = 1$.

For Case (b), it is equivalent to the existence of a $k-1$-subset $I_2 = \{i_1, i_2, \ldots, i_{k-1}\} \subseteq [n]$ and $t \notin \{\alpha_1, \alpha_2, \ldots, \alpha_n\} \setminus \boldsymbol{\alpha}_{I_2}$ such that
\[
g(x) = g_\ell \eta (x-t) \prod_{s=1}^{k-1}(x - \alpha_{i_s}) = g_\ell \eta \sum_{s=0}^{k} \sigma_{k-s}(\{\boldsymbol{\alpha}_{I_2}, t\}) x^s.
\]
Comparing the coefficients of $x^{k-1},x^{k-2}$ and $x^\ell$ on both sides yields
\[
\sigma_{1}(\{\boldsymbol{\alpha}_{I_{2}},t\})=\sigma_{2}(\{\boldsymbol{\alpha}_{I_2},t\})=0\quad\mbox{and}\quad \sigma_{k-\ell}(\{\boldsymbol{\alpha}_{I_2},t\})=\eta^{-1}.
\]

For $\sigma_{1}(\{\boldsymbol{\alpha}_{I_{2}},t\})=0$ ,since $\sigma_{1}(\{\boldsymbol{\alpha}_{I_{2}},t\})=\sigma_{1}(\boldsymbol{\alpha}_{I_2})-t$,
 it suffices to require
 \[
\sigma_{1}(\boldsymbol{\alpha}_{I_2})=t\notin\{\alpha_1,\cdots,\alpha_n\}\setminus\boldsymbol{\alpha}_{I_2}.
 \]

 For $\sigma_{2}(\{\boldsymbol{\alpha}_{I_{2}},t\})=0$ ,since $\sigma_{2}(\{\boldsymbol{\alpha}_{I_{2}},t\})=\sigma_{2}(\boldsymbol{\alpha}_{I_2})-t\sigma_{1}(\boldsymbol{\alpha}_{I_2})$,
 it suffices to require
 \[
\sigma_{2}(\boldsymbol{\alpha}_{I_2})-\sigma_{1}^2(\boldsymbol{\alpha}_{I_2})=0.\]
 For $\sigma_{k-\ell}(\{\boldsymbol{\alpha}_{I_{2}},t\})=\eta^{-1}$ ,since $\sigma_{k-\ell}(\{\boldsymbol{\alpha}_{I_{2}},t\})=\sigma_{k-\ell}(\boldsymbol{\alpha}_{I_2})-t\sigma_{k-\ell-1}(\boldsymbol{\alpha}_{I_2})$,
 it suffices to require
 \[
\sigma_{k-\ell}(\boldsymbol{\alpha}_{I_2})-\sigma_{1}(\boldsymbol{\alpha}_{I_2})\sigma_{k-\ell-1}(\boldsymbol{\alpha}_{I_2})=\eta^{-1}.\]

Thus, $\mathrm{wt}(\mathbf{c}) = n-k+1$ if and only if the following hold:
\begin{itemize}
    \item [(i)] There does not exist a $k$-subset $I_1\subseteq [n]$ such that $\sigma_{1}(\boldsymbol{\alpha}_{I_1})=\sigma_{2}(\boldsymbol{\alpha}_{I_1})=0$ and $\eta\sigma_{k-\ell}(\boldsymbol{\alpha}_{I_1})=1$.
    \item [(ii)] There exists a $k-1$-subset $I_{2}\subseteq [n]$ such that
    \[
\sigma_{1}(\boldsymbol{\alpha}_{I_2})\notin\{\alpha_1,\cdots,\alpha_n\}\setminus\boldsymbol{\alpha}_{I_2},\ 
\sigma_{2}(\boldsymbol{\alpha}_{I_2})=\sigma_{1}^2(\boldsymbol{\alpha}_{I_2})\ \mbox{and}\  
\sigma_{k-\ell}(\boldsymbol{\alpha}_{I_2})=\sigma_{1}(\boldsymbol{\alpha}_{I_2})\sigma_{k-\ell-1}(\boldsymbol{\alpha}_{I_2})+\eta^{-1}.
\]
\end{itemize}

 \item [(2)] Since $\min\{\mathrm{wt}(\mathbf{c}) : \mathbf{c} \in T_2^{(\ell)} \setminus \{\mathbf{0}\}\} \geq n-k+1$, for a given $\mathbf{c} = (c_1, \ldots, c_{n+2}) \in T_2^{(\ell)}$, we have $\mathrm{wt}(\mathbf{c}) = n-k+1$ if and only if there exists a $k$-subset $I_3\subseteq [n]$ such that $\sigma_{1}(\boldsymbol{\alpha}_{I_3})=0$ and $\eta\sigma_{k-\ell}(\boldsymbol{\alpha}_{I_3})=1$.

  \item [(3)] Since $\min\{\mathrm{wt}(\mathbf{c}) : \mathbf{c} \in T_3^{(\ell)} \setminus \{\mathbf{0}\}\} \geq n-k+1$, for a given $\mathbf{c} = (c_1, \ldots, c_{n+2}) \in T_3^{(\ell)}$, we have $\mathrm{wt}(\mathbf{c}) = n-k+1$ if and only if there exists a $k$-subset $I_4\subseteq [n]$ such that $\delta\sigma_{1}(\boldsymbol{\alpha}_{I_4})+\sigma_{2}(\boldsymbol{\alpha}_{I_4})=0$ and $\eta\sigma_{k-\ell}(\boldsymbol{\alpha}_{I_4})=1$.
        \end{itemize}

    In summary, for $0\leq\ell\leq k-3$, the minimum distance of the code $\operatorname{TRL}_{k,n+2}(\boldsymbol{\alpha},\eta,\delta,\ell)$ is $n-k+1$ if and only if
    \begin{enumerate}
         \item [(A)] no $k$-subset $I_1\subseteq [n]$ such that $\sigma_{1}(\boldsymbol{\alpha}_{I_1})=\sigma_{2}(\boldsymbol{\alpha}_{I_1})=0$ and $\eta\sigma_{k-\ell}(\boldsymbol{\alpha}_{I_1})=1$;
         \item [(B)] at least one of (i)-(iii) holds:
          \begin{enumerate}
          \item [(i)] There exists a $k-1$-subset $I_{2}\subseteq [n]$ such that
    \[
\sigma_{1}(\boldsymbol{\alpha}_{I_2})\notin\{\alpha_1,\cdots,\alpha_n\}\setminus\boldsymbol{\alpha}_{I_2},\ 
\sigma_{2}(\boldsymbol{\alpha}_{I_2})=\sigma_{1}^2(\boldsymbol{\alpha}_{I_2})\ \mbox{and}\  
\sigma_{k-\ell}(\boldsymbol{\alpha}_{I_2})=\sigma_{1}(\boldsymbol{\alpha}_{I_2})\sigma_{k-\ell-1}(\boldsymbol{\alpha}_{I_2})+\eta^{-1};
 \]
 \item [(ii)] there exists a $k$-subset $I_3\subseteq [n]$ such that $\sigma_{1}(\boldsymbol{\alpha}_{I_3})=0$ and $\eta\sigma_{k-\ell}(\boldsymbol{\alpha}_{I_3})=1$;
          \item [(iii)] there exists a $k$-subset $I_4\subseteq [n]$ such that $\delta\sigma_{1}(\boldsymbol{\alpha}_{I_4})+\sigma_{2}(\boldsymbol{\alpha}_{I_4})=0$ and $\eta\sigma_{k-\ell}(\boldsymbol{\alpha}_{I_4})=1$.
      \end{enumerate}
     \end{enumerate}

     \end{proof}

The following examples provide a code $\operatorname{TRL}_{k,n+2}(\boldsymbol{\alpha},\eta,\delta,\ell)$ with minimum distance $n-k+1$  for several representative choices of $\ell$.

\begin{example}
\begin{description}
    \item [\normalfont(1)]   Let $n=12,k=8,\boldsymbol{\alpha}=\{23, 3, 12, 7, 5, 8, 35, 19, 4, 36, 18, 22\}\subseteq\mathbb{F}_{37}$ and $\ell=k-2=6$. Choose $\eta=10,\delta=36$. On the one hand, there exists a $8$-subset $I= \{ 2, 3, 5, 6, 7, 8, 9, 10\}\subseteq [12]$ such that $\delta\eta\sigma_1(\boldsymbol{\alpha}_{I})=-1$ and $\eta\sigma_2(\boldsymbol{\alpha}_I)=1$.
     Thus, the minimum distance of the code $\operatorname{TRL}_{8,14}(\boldsymbol{\alpha},\eta,\delta,6)$ is $n+1-k=5$ from Lemma~\ref{Lem:d=n-k+1 and ell=k-2,k-1}.
On the other hand, from the given generator matrix, it can be directly verified  that this code is a linear code with parameters $[14,8,5]$ over $\mathbb{F}_{37}$ by using Magma.
\item [\normalfont(2)]   Let $n=12,k=8,\boldsymbol{\alpha}=\{ 15, 0, 17, 35, 30, 23, 4, 31, 13, 9, 21, 10\}\subseteq\mathbb{F}_{37}$ and $\ell=k-1=7$. Choose $\eta=6,\delta=13$. On the one hand, there exists a $8$-subset $I= \{2, 4, 5, 6, 7, 8,11,12\}\subseteq [12]$ such that $\sigma_{1}(\boldsymbol{\alpha}_{I})=\eta^{-1}$ and $\sigma_{2}(\boldsymbol{\alpha}_{I})=-\delta\eta^{-1}$. Thus, the minimum distance of the code $\operatorname{TRL}_{8,14}(\boldsymbol{\alpha},\eta,\delta,7)$ is $n+1-k=5$ from Lemma~\ref{Lem:d=n-k+1 and ell=k-2,k-1}.
On the other hand, from the given generator matrix, it can be directly verified  that this code is a linear code with parameters $[14,8,5]$ over $\mathbb{F}_{37}$ by using Magma.
\item [\normalfont(3)] Let $n=14,k=9,\boldsymbol{\alpha}=\{1, 4, 6, 36, 7, 9, 38, 10, 41, 17, 21, 23, 24, 26\}\subseteq\mathbb{F}_{43}$ and $\ell=5$. Choose $\eta=3,\delta=9$. On the one hand,  Magma verifies that  there does not exist a $9$-subset $I_1\subseteq [14]$ such that $\sigma_{1}(\boldsymbol{\alpha}_{I_1})=\sigma_{2}(\boldsymbol{\alpha}_{I_1})=0,\sigma_{4}(\boldsymbol{\alpha}_{I_1})=\eta^{-1}$ and there exists a $9$-subset $I_2= \{1, 2, 6, 8, 9, 10,12,13,14\}\subseteq [14]$ such that $\delta\cdot\sigma_{1}(\boldsymbol{\alpha}_{I_2})+\sigma_{2}(\boldsymbol{\alpha}_{I_2})=0$ and $\eta\cdot\sigma_{k-\ell}(\boldsymbol{\alpha}_{I_2})=1$. Thus, the minimum distance of the code $\operatorname{TRL}_{9,16}(\boldsymbol{\alpha},\eta,\delta,5)$ is $n+1-k=6$ from Lemma~\ref{Lem:d=n-k+1 and 0 le ell le k-3}.
On the other hand, from the given generator matrix, it can be directly verified  that this code is a linear code with parameters $[16,9,6]$ over $\mathbb{F}_{43}$ by using Magma.
 \end{description}
   
\end{example}

Next, we present necessary and sufficient conditions for the code $\operatorname{TRL}_{k,n+2}(\boldsymbol{\alpha},\eta,\delta,\ell)$  to be an MDS code.
The following lemma plays an important role in determining the necessary and sufficient conditions for these codes to be MDS codes.

\begin{lemma}[{\cite[Lemma 2.3]{yan2024mutually}}]\label{Lem:4.1}
	Let $m$ be a fixed positive integer and $$I_{m}=\left\{0,1,\cdots,m-1\right\}=\left\{t_{1},\cdots,t_{s}\right\}\bigcup\left\{r_{1},r_{2},\cdots,r_{s^{'}}\right\}$$ be any partition of $I_{m}$ with $m=s+s^{\prime},0=t_{1}<t_2<\cdots<t_{s}=m-1$ and $r_{1}<r_{2}<\cdots<r_{s^{'}}$.
	For any $\mathcal{S}=\left\{a_{1},a_{2},\cdots,a_{s}\right\}\subseteq \mathbb{F}_{q}$, we have the following determinant formula
	\begin{equation*}
	\det \begin{pmatrix}
	a_{1}^{t_{1}}&a_{2}^{t_{1}}&\cdots&a_{s}^{t_{1}}\\
	a_{1}^{t_{2}}&a_{2}^{t_{2}}&\cdots&a_{s}^{t_{2}}\\
	\vdots&\vdots&\vdots&\vdots\\
	a_{1}^{t_{s}}&a_{2}^{t_{s}}&\cdots&a_{s}^{t_{s}}\\
	\end{pmatrix}=\prod\limits_{1\leq i<j\leq s}(a_{j}-a_{i})\cdot\vartriangle,
	\end{equation*}
	where $\vartriangle$ denotes the determinant of the following matrix
	$$\begin{pmatrix}
	S_{s-r_{1}}(\mathcal{S})&S_{s-r_{2}}(\mathcal{S})&\cdots&S_{s-r_{s^{\prime}}}(\mathcal{S})\\
	S_{s-r_{1}+1}(\mathcal{S})&S_{s-r_{2}+1}(\mathcal{S})&\cdots&S_{s-r_{s^{\prime}}+1}(\mathcal{S})\\
	\vdots&\vdots&\vdots&\vdots\\
	S_{s-r_{1}+s^{\prime}-1}(\mathcal{S})&S_{s-r_{2}+s^{\prime}-1}(\mathcal{S})&\cdots&S_{s-r_{s^{\prime}}+s^{\prime}-1}(\mathcal{S})
	\end{pmatrix}.$$
	
\end{lemma}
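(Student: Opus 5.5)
The determinant identity follows from the standard generalized-Vandermonde (Schur function) argument, organized through generating functions. We write $P(x)=\prod_{j=1}^{s}(x-a_j)$. For distinct $a_j$, consider the $m\times s$ matrix $M$ whose $i$-th row, $i=0,\dots,m-1$, is $(a_1^i,\dots,a_s^i)$. The target determinant is the $s\times s$ minor of $M$ on the rows $t_1,\dots,t_s$. When the $a_j$ are not distinct, both sides vanish: on the right because of the Vandermonde factor, on the left because two columns coincide. So we may assume the $a_j$ are distinct, or treat the $a_j$ as indeterminates and prove a polynomial identity.

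The key tool is the complementary-minor (Jacobi) identity for a kernel matrix. The rows $r_1,\dots,r_{s'}$ are the ones deleted. We build an $m\times s'$ matrix $K$ whose columns span the left kernel of $M$, i.e.\ $K^{T}M=0$. Its columns are the coefficient vectors of the polynomials $x^{u}P(x)$ for $u=0,\dots,s'-1$, which vanish at every $a_j$. Concretely, the entry of $K$ in row $i$ and column $u$ is the coefficient of $x^{i}$ in $x^{u}P(x)$, namely $(-1)^{s-i+u}S_{s-i+u}(\mathcal S)$. The block matrix $[M\mid K]$ is $m\times m$. By a Laplace expansion along the last $s'$ columns, combined with the fact that $M$ and $K$ are orthogonal complements, we get the standard relation
\[
\det M_{\{t\}}\cdot c=\pm\det K_{\{r\}}\cdot \det\nolimits_{\mathrm{full}},
\]
where $M_{\{t\}}$ is the minor of $M$ on the rows $t_1,\dots,t_s$ and $K_{\{r\}}$ is the minor of $K$ on the complementary rows $r_1,\dots,r_{s'}$. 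The normalizing constants $c$ and $\det_{\mathrm{full}}$ are fixed by the basic case $\{t\}=\{0,\dots,s-1\}$. In that case $M_{\{t\}}$ is the ordinary Vandermonde determinant $\prod_{i<j}(a_j-a_i)$. Also $K_{\{r\}}$, restricted to the rows $s,\dots,m-1$, is upper unitriangular, since $P$ is monic, so its determinant is $1$. Hence the general minor equals $\prod_{i<j}(a_j-a_i)$ times the $s'\times s'$ minor of $K$ on the rows $r_1,\dots,r_{s'}$, up to sign. That minor has entries $\pm S_{s-r_v+u}(\mathcal S)$, which is exactly the matrix $\vartriangle$ after transposition.

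The main obstacle is the sign bookkeeping: we must check that the factors $(-1)^{s-r_v+u}$ together with the Laplace sign $(-1)^{\sum t_i+\sum i}$ cancel. I would extract $(-1)^{r_v}$ from each column and $(-1)^{u+s}$ from each row of the minor, which contributes $(-1)^{\sum r_v+s s'+\binom{s'}{2}}$, and then compare this with the Laplace sign using $\sum t_i+\sum r_v=\binom{m}{2}$. If the signs prove too messy, the fallback is a cleaner route. The dual Jacobi–Trudi identity expresses the generalized Vandermonde quotient as a Schur polynomial $s_\lambda$, with $\lambda_i=t_{s+1-i}-(s-i)$. The conjugate partition $\lambda'$ is determined by the gaps $r_v$. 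We then have $s_\lambda=\det(e_{\lambda'_i-i+j})$ with $e_j=S_j$, and reindexing gives the stated matrix. The hypotheses $t_1=0$ and $t_s=m-1$ only ensure that the partition fits in an $s\times s'$ box, so the indices match.
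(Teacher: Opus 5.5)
The paper gives no proof of this lemma; it is quoted from Yan et al. Your proposal therefore does not compete with an in-paper argument. It is a correct, self-contained proof, and the two steps you left loose both close.

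\textbf{Making the duality step precise.} A Laplace expansion of $\det[M\mid K]$ by itself yields only one scalar identity, not a relation for each $T$. Write $M_T$ and $K_R$ for the row-submatrices on $T=\{t_i\}$ and $R=\{r_v\}$. Let $E_R$ be the $m\times s'$ matrix with columns $e_{r_1},\dots,e_{r_{s'}}$. Choose $X\in\mathbb{F}_q^{s\times m}$ such that $Q=\begin{pmatrix}X\\ K^{T}\end{pmatrix}$ is invertible; this is possible because the $x^uP(x)$ have distinct degrees, so $K$ has full column rank. Since $K^{T}M=0$,
\[
Q\,[M\mid E_R]=\begin{pmatrix} XM & XE_R\\ 0 & K_R^{T}\end{pmatrix},\qquad \det[M\mid E_R]=\varepsilon\,(-1)^{\sum_{v}r_v}\det M_T .
\]
Here $\varepsilon=\pm1$ is independent of $T$ (Laplace expansion along the unit-vector columns). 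Hence $\det M_T=c\,(-1)^{\sum_v r_v}\det K_R$ with $c$ independent of $T$.

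\textbf{The sign check.} Take $R_0=\{s,\dots,m-1\}$. Then $\sum R_0=ss'+\binom{s'}{2}$ and $\det K_{R_0}=1$, so $c=(-1)^{ss'+\binom{s'}{2}}\prod_{i<j}(a_j-a_i)$. Your extraction of signs gives $\det K_R=(-1)^{\sum_v r_v+ss'+\binom{s'}{2}}\vartriangle$. Each sign now appears twice, so $\det M_T=\prod_{i<j}(a_j-a_i)\cdot\vartriangle$ exactly, and the fallback is not needed.

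\textbf{What your route buys over the bare citation.} It uses nothing beyond Laplace expansion. As the computation shows, it never uses $t_1=0$ or $t_s=m-1$, so it proves the identity for every $s$-subset of $\{0,\dots,m-1\}$. A gap at $0$ or at $m-1$ simply produces a column $(S_s,0,\dots,0)^{T}$ or $(0,\dots,0,1)^{T}$ in the matrix defining $\vartriangle$.

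\textbf{The Jacobi--Trudi fallback.} It is also correct. With $\lambda'_j=s-1+j-r_j$ one gets $e_{\lambda'_i-i+j}=S_{s-r_i+j-1}$, which is the transpose of the matrix of $\vartriangle$. It is shorter if the bialternant formula and the dual Jacobi--Trudi identity are taken as known. Your last sentence needs one correction: every such $\lambda$ fits in the $s\times s'$ box regardless. The hypotheses $t_1=0$ and $t_s=m-1$ merely force $\lambda_1=s'$ and $\lambda_s=0$, and the identity holds without them.
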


\begin{theorem}\label{Thm: main MDS, 0 ell k-3}
      Let $n,k$ be integers satisfying $3\le k<n$. Let $\boldsymbol{\alpha}=\{\alpha_1,\ldots,\alpha_n\}\subseteq\mathbb F_q$ with $\alpha_i\neq\alpha_j$ for $i\neq j$ and $\eta,\delta\in\mathbb F_q^{*}$. For $0\leq\ell\leq k-3$, then the code $\operatorname{TRL}_{k,n+2}(\boldsymbol{\alpha},\eta,\delta,\ell)$ is an MDS code if and only if
      \begin{description}
		\item[\normalfont(i)] for any $k$-subset $\mathcal{L}\subseteq [n]$, we have $1+(-1)^{k-\ell-1}\eta S_{k-\ell}(\boldsymbol{\alpha}_{\mathcal{L}})\neq 0$.
		\item[\normalfont(ii)] for any $k-1$-subset $\mathcal{I}\subseteq [n]$, we have $1+(-1)^{k-\ell-2}\eta\left(S_{k-\ell-1}(\boldsymbol{\alpha}_{I})S_{1}(\boldsymbol{\alpha}_{I})-S_{k-\ell}(\boldsymbol{\alpha}_{I})\right)\neq 0$ and 
\[
\delta-S_{1}(\boldsymbol{\alpha}_{I})+(-1)^{k-\ell-2}\eta\left(\delta\left(
S_{k-\ell-1}(\boldsymbol{\alpha}_{I})S_{1}(\boldsymbol{\alpha}_{I})-S_{k-\ell}(\boldsymbol{\alpha}_{I})\right)-\left(S_{k-\ell-1}(\boldsymbol{\alpha}_{I})\cdot S_{2}(\boldsymbol{\alpha}_{I})-S_{k-\ell}(\boldsymbol{\alpha}_{I})\cdot S_{1}(\boldsymbol{\alpha}_{I})\right)\right)\neq 0.
\]
\item[\normalfont(iii)] for any $k-2$-subset $\mathcal{J}\subseteq [n]$, we have
$1+(-1)^{k-\ell-3}\eta\cdot\left|
    \begin{array}{ccc}
         S_{k-\ell-2}(\boldsymbol{\alpha}_{J})&1&0\\
         S_{k-\ell-1}(\boldsymbol{\alpha}_{J})&S_{1}(\boldsymbol{\alpha}_{J})&1\\
         S_{k-\ell}(\boldsymbol{\alpha}_{J})&S_{2}(\boldsymbol{\alpha}_{J})&S_{1}(\boldsymbol{\alpha}_{J})
    \end{array}
    \right|
    \neq 0$.
	 \end{description}


\end{theorem}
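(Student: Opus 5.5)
The code is MDS iff every $k\times k$ minor of $G^{\operatorname{TGRL}}_{k,n+2}(\boldsymbol{\alpha},\boldsymbol{1},\eta,\delta,\ell)$ is nonzero. I will classify the $k$-column subsets by how many of the last two columns they contain, and evaluate each minor with Lemma~\ref{Lem:4.1}.

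\textbf{No extension column.} The columns are indexed by a $k$-subset $\mathcal{L}\subseteq[n]$. Expand the determinant linearly in the twisted row $x^\ell+\eta x^k$. This gives the Vandermonde $V(\boldsymbol{\alpha}_{\mathcal{L}})$ plus $\eta$ times the generalized Vandermonde with exponents $\{0,\dots,k\}\setminus\{\ell\}$, where the row $x^k$ has to be moved from position $\ell$ to the bottom. Lemma~\ref{Lem:4.1} with $m=k+1$, $s'=1$, $r_1=\ell$ evaluates the second determinant as $V(\boldsymbol{\alpha}_{\mathcal{L}})S_{k-\ell}(\boldsymbol{\alpha}_{\mathcal{L}})$. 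The row move contributes the sign $(-1)^{k-\ell-1}$. So the minor equals $V(\boldsymbol{\alpha}_{\mathcal{L}})\bigl(1+(-1)^{k-\ell-1}\eta S_{k-\ell}(\boldsymbol{\alpha}_{\mathcal{L}})\bigr)$, which is condition (i).

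\textbf{Exactly one extension column.} Take a $(k-1)$-subset $\mathcal I$.
\begin{itemize}
\item With column $n+1=e_k$, expanding along it deletes the $x^{k-1}$ row. The remaining rows have exponents $\{0,\dots,k-2\}$, twisted at $\ell$. Splitting the twisted row again gives $V(\boldsymbol{\alpha}_{\mathcal I})$ plus $\eta$ times a generalized Vandermonde with exponents $\{0,\dots,k-2\}\cup\{k\}\setminus\{\ell\}$. By Lemma~\ref{Lem:4.1} with $m=k+1$ and missing set $\{\ell,k-1\}$, the latter is $V\cdot\det\begin{pmatrix}S_{k-1-\ell}&S_0\\ S_{k-\ell}&S_1\end{pmatrix}=V\,(S_{k-\ell-1}S_1-S_{k-\ell})$. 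Attaching the sign gives the first inequality of (ii).
\item With column $n+2=(0,\dots,0,1,\delta)^T$, the minor is a combination of two pieces. One is $\delta$ times the minor with row $x^{k-2}$ deleted, which is the same quantity as in the previous item. The other is minus the minor with row $x^{k-1}$ deleted, which keeps exponents $\{0,\dots,k-3,k-1\}$ twisted at $\ell$. For that piece, Lemma~\ref{Lem:4.1} gives $S_1$ for the untwisted part. The twisted part has missing set $\{\ell,k-2\}$, giving $\det\begin{pmatrix}S_{k-1-\ell}&S_1\\ S_{k-\ell}&S_2\end{pmatrix}$. Collecting terms reproduces the second expression in (ii).
\end{itemize}

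\textbf{Both extension columns.} Take a $(k-2)$-subset $\mathcal J$. Laplace expansion on the last two columns leaves the rows $0,\dots,k-3$ twisted at $\ell$, with determinant $\pm1$ from the $2\times2$ block. This yields $V(\boldsymbol{\alpha}_{\mathcal J})$ plus $\eta$ times a generalized Vandermonde with exponents $\{0,\dots,k-3\}\cup\{k\}\setminus\{\ell\}$. Lemma~\ref{Lem:4.1} with missing set $\{\ell,k-2,k-1\}$ and $s=k-2$ gives exactly the $3\times 3$ determinant in (iii), with entries $S_{k-2-\ell},S_{k-1-\ell},S_{k-\ell}$ and $S_0=1$, $S_1$, $S_2$. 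The sign from moving $x^k$ past $k-3-\ell$ rows is $(-1)^{k-\ell-3}$, which gives (iii).

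Since Vandermonde factors are nonzero for distinct $\alpha_i$, the code is MDS iff (i)–(iii) hold. The main obstacle is the bookkeeping in the middle case with column $n+2$. There I must get the Laplace signs right, and also index Lemma~\ref{Lem:4.1} correctly: $S_{s-r_j+i}$ with $s=k-1$. I would first verify the index conventions on a small case, say $k=3$, $\ell=0$.
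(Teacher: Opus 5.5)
Your proposal is correct and is essentially the paper's proof: the paper also sorts the $k\times k$ minors into the types $A$, $C_1$, $C_2$, $D$ according to how many of the columns $n+1,n+2$ are used, splits the twisted row $x^\ell+\eta x^k$ by linearity, and evaluates each generalized Vandermonde with Lemma~\ref{Lem:4.1}, obtaining exactly your factors and signs. There is one wording slip in the column-$(n+2)$ case: $\delta$ multiplies the cofactor with row $x^{k-1}$ deleted (exponents $0,\dots,k-2$, the same as for column $n+1$), and $-1$ multiplies the cofactor with row $x^{k-2}$ deleted (exponents $0,\dots,k-3,k-1$), so your two ``deleted row'' labels are swapped, but the exponent sets you actually compute with are the right ones.
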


\begin{proof}
    It is well-known that an $[n+2,k]$-linear code is MDS if and only if any $k$ columns of its generator matrix are linearly independent. Therefore, it is enough to prove that any $k\times k$ submatrix of $G_{k,n+2}^{\operatorname{TRL}}(\boldsymbol{\alpha},\eta,\delta,\ell)$ is invertible. For $0\leq\ell\leq k-3$, it suffices to show that for any subsets $\mathcal{L}=\left\{i_{1},\cdots,i_{k}\right\},\mathcal{I}=\left\{i_{1},\cdots,i_{k-1}\right\}$ and $\mathcal{J}=\left\{i_{1},\cdots,i_{k-2}\right\}$ of $[n]$, the matrices
    \begin{equation*}
 	A=\small{\begin{pmatrix}
		1&\cdots&1\\
		\alpha_{i_{1}}&\cdots&\alpha_{i_{k}}\\
		\vdots&\vdots&\vdots\\
		\alpha_{i_1}^{\ell-1}&\cdots&\alpha_{i_{k}}^{\ell-1}\\
\alpha_{i_1}^{\ell}+\eta\alpha_{i_1}^{k}&\cdots&\alpha_{i_{k}}^{\ell}+\eta\alpha_{i_{k}}^k\\
		\alpha_{i_1}^{\ell+1}&\cdots&\alpha_{i_{k}}^{\ell+1}\\
		\vdots&\vdots&\vdots\\
		\alpha_{i_1}^{k-1}&\cdots&\alpha_{i_{k}}^{k-1}\\
		\end{pmatrix}},
	C_{t}=\small{\begin{pmatrix}
		1&\cdots&1&0\\
		\alpha_{i_{1}}&\cdots&\alpha_{i_{k-1}}&0\\
		\vdots&\vdots&\vdots&\vdots\\
		\alpha_{i_1}^{\ell-1}&\cdots&\alpha_{i_{k-1}}^{\ell-1}&0\\
        \alpha_{i_1}^{\ell}+\eta\alpha_{i_1}^{k}&\cdots&\alpha_{i_{k-1}}^{\ell}+\eta\alpha_{i_{k-1}}^k&0\\
		\alpha_{i_1}^{\ell+1}&\cdots&\alpha_{i_{k-1}}^{\ell+1}&0\\
		\vdots&\vdots&\vdots&\vdots\\
		\alpha_{i_1}^{k-3}&\cdots&\alpha_{i_{k-1}}^{k-3}&0\\
        \alpha_{i_1}^{k-2}&\cdots&\alpha_{i_{k-1}}^{k-2}&c_{t1}\\
        \alpha_{i_1}^{k-1}&\cdots&\alpha_{i_{k-1}}^{k-1}&c_{t2}\\
		\end{pmatrix}},t=1,2
	\end{equation*}
	and 
	\begin{equation*}
     D=\small{\begin{pmatrix}
     	1&\cdots&1&0&0\\
     	\alpha_{i_{1}}&\cdots&\alpha_{i_{k-2}}&0&0\\
     	\vdots&\vdots&\vdots&\vdots&\vdots\\
     	\alpha_{i_{1}}^{\ell-1}&\cdots&\alpha_{i_{k-2}}^{\ell-1}&0&0\\
     	\alpha_{i_1}^{\ell}+\eta\alpha_{i_1}^{k}&\cdots&\alpha_{i_{k-2}}^{\ell}+\eta\alpha_{i_{k-2}}^k&0&0\\
     	\alpha_{i_{1}}^{\ell+1}&\cdots&\alpha_{i_{k-2}}^{\ell+1}&0&0\\
        \vdots&\vdots&\vdots&\vdots&\vdots\\
        \alpha_{i_{1}}^{k-3}&\cdots&\alpha_{i_{k-2}}^{k-3}&0&0\\
        \alpha_{i_{1}}^{k-2}&\cdots&\alpha_{i_{k-2}}^{k-2}&0&1\\
        \alpha_{i_{1}}^{k-1}&\cdots&\alpha_{i_{k-2}}^{k-1}&1&\delta\\
     	\end{pmatrix}}
	\end{equation*}
	are all invertible, where $(c_{11},c_{12})^T=(0,1)$ and  $(c_{21},c_{22})^T=(1,\delta)$.
    From Lemma~\ref{Lem:4.1}, we know that
	\begin{equation*}
	\begin{aligned}
	\det(A)&=V(\boldsymbol{\alpha}_{\mathcal{L}})+(-1)^{k-\ell-1}\eta S_{k-\ell}(\boldsymbol{\alpha}_{\mathcal{L}})\cdot V(\boldsymbol{\alpha}_{\mathcal{L}})\\
	&=V(\boldsymbol{\alpha}_{\mathcal{L}})\left(1+(-1)^{k-\ell-1}\eta S_{k-\ell}(\boldsymbol{\alpha}_{\mathcal{L}})\right).
	\end{aligned}
	\end{equation*} 
    Thus, $\det(A)\neq 0$ if and only if $1+(-1)^{k-\ell-1}\eta S_{k-\ell}(\boldsymbol{\alpha}_{\mathcal{L}})\neq 0$. We know that  
    \begin{equation*}
        \begin{aligned}
           \det(C_{1})&=V(\boldsymbol{\alpha}_{I})+(-1)^{k-\ell-2}\eta\cdot V(\boldsymbol{\alpha}_{I})\cdot\left|
    \begin{array}{cc}
         S_{k-\ell-1}(\boldsymbol{\alpha}_{I})&1\\
    S_{k-\ell}(\boldsymbol{\alpha}_{I})&S_{1}(\boldsymbol{\alpha}_{I})
    \end{array}\right| \\
    &=V(\boldsymbol{\alpha}_{I})\cdot\left(1+(-1)^{k-\ell-2}\eta\cdot\left(S_{k-\ell-1}(\boldsymbol{\alpha}_{I})S_{1}(\boldsymbol{\alpha}_{I})-S_{k-\ell}(\boldsymbol{\alpha}_{I})\right)\right).
        \end{aligned}
    \end{equation*}
Thus, $\det(C_{1})\neq 0$ if and only if $1+(-1)^{k-\ell-2}\eta\cdot\left( S_{k-\ell-1}(\boldsymbol{\alpha}_{I})S_{1}(\boldsymbol{\alpha}_{I})-S_{k-\ell}(\boldsymbol{\alpha}_{I})\right)\neq 0$.  We know that
    \begin{equation*}
        \begin{aligned}
            \det(C_2)&=\delta\cdot\left|
            \begin{array}{ccc}
                1&\cdots&1\\
		\alpha_{i_{1}}&\cdots&\alpha_{i_{k-1}}\\
		\vdots&\vdots&\vdots\\
		\alpha_{i_1}^{\ell-1}&\cdots&\alpha_{i_{k-1}}^{\ell-1}\\
        \alpha_{i_1}^{\ell}+\eta\alpha_{i_1}^{k}&\cdots&\alpha_{i_{k-1}}^{\ell}+\eta\alpha_{i_{k-1}}^k\\
		\alpha_{i_1}^{\ell+1}&\cdots&\alpha_{i_{k-1}}^{\ell+1}\\
		\vdots&\vdots&\vdots\\
		\alpha_{i_1}^{k-3}&\cdots&\alpha_{i_{k-1}}^{k-3}\\
        \alpha_{i_1}^{k-2}&\cdots&\alpha_{i_{k-1}}^{k-2}
            \end{array}\right|-\left|
            \begin{array}{ccc}
                1&\cdots&1\\
		\alpha_{i_{1}}&\cdots&\alpha_{i_{k-1}}\\
		\vdots&\vdots&\vdots\\
		\alpha_{i_1}^{\ell-1}&\cdots&\alpha_{i_{k-1}}^{\ell-1}\\
        \alpha_{i_1}^{\ell}+\eta\alpha_{i_1}^{k}&\cdots&\alpha_{i_{k-1}}^{\ell}+\eta\alpha_{i_{k-1}}^k\\
		\alpha_{i_1}^{\ell+1}&\cdots&\alpha_{i_{k-1}}^{\ell+1}\\
		\vdots&\vdots&\vdots\\
		\alpha_{i_1}^{k-3}&\cdots&\alpha_{i_{k-1}}^{k-3}\\
        \alpha_{i_1}^{k-1}&\cdots&\alpha_{i_{k-1}}^{k-1}
            \end{array}\right|\\
            &=\delta\cdot V(\boldsymbol{\alpha}_{I})+(-1)^{k-\ell-2}\delta\eta\cdot V(\boldsymbol{\alpha}_{I})\cdot(S_{k-\ell-1}(\boldsymbol{\alpha}_{I})\cdot S_{1}(\boldsymbol{\alpha}_{I})-S_{k-\ell}(\boldsymbol{\alpha}_{I}))\\
            &-V(\boldsymbol{\alpha}_{I})\cdot S_{1}(\boldsymbol{\alpha}_{I})-(-1)^{k-\ell-2}\eta\cdot V(\boldsymbol{\alpha}_{I})\cdot\left(S_{k-\ell-1}(\boldsymbol{\alpha}_{I})\cdot S_{2}(\boldsymbol{\alpha}_{I})-S_{k-\ell}(\boldsymbol{\alpha}_{I})\cdot S_{1}(\boldsymbol{\alpha}_{I})\right)
        \end{aligned}
    \end{equation*}
Thus, $\det(C_{2})\neq 0$ if and only if
\[
\delta-S_{1}(\boldsymbol{\alpha}_{I})+(-1)^{k-\ell-2}\eta\left(\delta\left(
S_{k-\ell-1}(\boldsymbol{\alpha}_{I})S_{1}(\boldsymbol{\alpha}_{I})-S_{k-\ell}(\boldsymbol{\alpha}_{I})\right)-\left(S_{k-\ell-1}(\boldsymbol{\alpha}_{I})\cdot S_{2}(\boldsymbol{\alpha}_{I})-S_{k-\ell}(\boldsymbol{\alpha}_{I})\cdot S_{1}(\boldsymbol{\alpha}_{I})\right)\right)\neq 0.
\]
We know that
    $$
    \det(D)=-V(\boldsymbol{\alpha}_{\mathcal{J}})-(-1)^{k-\ell-3}\eta\cdot V(\boldsymbol{\alpha}_{J})\cdot\left|
    \begin{array}{ccc}
         S_{k-\ell-2}(\boldsymbol{\alpha}_{J})&1&0\\
         S_{k-\ell-1}(\boldsymbol{\alpha}_{J})&S_{1}(\boldsymbol{\alpha}_{J})&1\\
         S_{k-\ell}(\boldsymbol{\alpha}_{J})&S_{2}(\boldsymbol{\alpha}_{J})&S_{1}(\boldsymbol{\alpha}_{J})
    \end{array}
    \right|
    $$
	Thus, $\det(D)\neq 0$ if and only if $1+(-1)^{k-\ell-3}\eta\cdot\left|
    \begin{array}{ccc}
         S_{k-\ell-2}(\boldsymbol{\alpha}_{J})&1&0\\
         S_{k-\ell-1}(\boldsymbol{\alpha}_{J})&S_{1}(\boldsymbol{\alpha}_{J})&1\\
         S_{k-\ell}(\boldsymbol{\alpha}_{J})&S_{2}(\boldsymbol{\alpha}_{J})&S_{1}(\boldsymbol{\alpha}_{J})
    \end{array}
    \right|
    \neq 0$.

Thus, for $0\leq\ell\leq k-3$, the code $\operatorname{TRL}_{k,n+2}(\boldsymbol{\alpha},\eta,\delta,\ell)$ is an MDS code if and only if
      \begin{enumerate}
		\item[(i)] for any $k$-subset $\mathcal{L}\subseteq [n]$, we have $1+(-1)^{k-\ell-1}\eta S_{k-\ell}(\boldsymbol{\alpha}_{\mathcal{L}})\neq 0$.
		\item[(ii)] for any $k-1$-subset $\mathcal{I}\subseteq [n]$, we have $1+(-1)^{k-\ell-2}\eta\cdot\left(S_{k-\ell-1}(\boldsymbol{\alpha}_{I})S_{1}(\boldsymbol{\alpha}_{I})-S_{k-\ell}(\boldsymbol{\alpha}_{I})\right)\neq 0$ and 
\[
\delta-S_{1}(\boldsymbol{\alpha}_{I})+(-1)^{k-\ell-2}\eta\left(\delta\left(
S_{k-\ell-1}(\boldsymbol{\alpha}_{I})S_{1}(\boldsymbol{\alpha}_{I})-S_{k-\ell}(\boldsymbol{\alpha}_{I})\right)-\left(S_{k-\ell-1}(\boldsymbol{\alpha}_{I})\cdot S_{2}(\boldsymbol{\alpha}_{I})-S_{k-\ell}(\boldsymbol{\alpha}_{I})\cdot S_{1}(\boldsymbol{\alpha}_{I})\right)\right)\neq 0.
\]
\item[(iii)] for any $k-2$-subset $\mathcal{J}\subseteq [n]$, we have
$1+(-1)^{k-\ell-3}\eta\cdot\left|
    \begin{array}{ccc}
         S_{k-\ell-2}(\boldsymbol{\alpha}_{J})&1&0\\
         S_{k-\ell-1}(\boldsymbol{\alpha}_{J})&S_{1}(\boldsymbol{\alpha}_{J})&1\\
         S_{k-\ell}(\boldsymbol{\alpha}_{J})&S_{2}(\boldsymbol{\alpha}_{J})&S_{1}(\boldsymbol{\alpha}_{J})
    \end{array}
    \right|
    \neq 0$.
	\end{enumerate}
\end{proof}
Employing the same proof method, we establish the necessary and sufficient conditions under which $\operatorname{TRL}_{k,n+2}(\boldsymbol{\alpha},\eta,\delta,\ell)$ is an MDS code for $k-2\leq\ell\leq k-1$.

\begin{theorem}\label{Thm: main MDS, ell k-2}
      Let $n,k$ be integers satisfying $3\le k<n$. Let $\boldsymbol{\alpha}=\{\alpha_1,\ldots,\alpha_n\}\subseteq\mathbb F_q$ with $\alpha_i\neq\alpha_j$ for $i\neq j$ and $\eta,\delta\in\mathbb F_q^{*}$. Then 
the code $\operatorname{TRL}_{k,n+2}(\boldsymbol{\alpha},\eta,\delta,k-2)$ is an MDS code if and only if
      \begin{description}
		\item[\normalfont(i)] for any $k$-subset $\mathcal{L}\subseteq [n]$, we have $\eta\cdot S_{2}(\boldsymbol{\alpha}_{\mathcal L})\neq 1$.
		\item[\normalfont(ii)] for any $k-1$-subset $\mathcal{I}\subseteq [n]$, we have $1+\eta(S_{1}^2(\boldsymbol{\alpha}_{\mathcal I})-S_{2}(\boldsymbol \alpha_{\mathcal I}))\neq 0$ and 
        \[
        \delta\left(1+\eta(S_{1}^2(\boldsymbol \alpha_{\mathcal I})- S_{2}(\boldsymbol \alpha_{\mathcal I}))\right)-S_{1}(\boldsymbol \alpha_{\mathcal I})\neq 0.
        \]
     \end{description}
\end{theorem}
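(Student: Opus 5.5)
The plan is to follow the proof of Theorem~\ref{Thm: main MDS, 0 ell k-3}. The code $\operatorname{TRL}_{k,n+2}(\boldsymbol{\alpha},\eta,\delta,k-2)$ is MDS if and only if every $k\times k$ submatrix of $G^{\operatorname{TGRL}}_{k,n+2}(\boldsymbol{\alpha},\boldsymbol{1},\eta,\delta,k-2)$ is invertible. Such a submatrix uses either $k$ evaluation columns, or $k-1$ evaluation columns together with one of the two extension columns, or $k-2$ evaluation columns together with both extension columns. In this generator matrix the row of index $k-2$ is $(\alpha_i^{k-2}+\eta\alpha_i^k)$, and the two extension columns are $(0,\ldots,0,0,1)^T$ and $(0,\ldots,0,1,\delta)^T$. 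I will treat the three cases in turn, evaluating each determinant by splitting the twisted row by multilinearity and applying Lemma~\ref{Lem:4.1}.

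\textbf{Case of $k$ evaluation columns $\mathcal L$.} Multilinearity in the twisted row gives $V(\boldsymbol\alpha_{\mathcal L})$ plus $\eta$ times the determinant with exponent rows $0,\ldots,k-3,k,k-1$. Swapping the last two rows costs a sign. The exponent set $\{0,\ldots,k-3,k-1,k\}$ omits only $r=k-2$ from $\{0,\ldots,k\}$, so Lemma~\ref{Lem:4.1} with $s=k$ gives the factor $S_{2}(\boldsymbol\alpha_{\mathcal L})$. Hence the determinant is $V(\boldsymbol\alpha_{\mathcal L})(1-\eta S_2(\boldsymbol\alpha_{\mathcal L}))$, which yields condition (i).

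\textbf{Case of $k-1$ evaluation columns $\mathcal I$ and the column $(0,\ldots,0,1)^T$.} Expanding along the extra column leaves, up to sign, the $(k-1)\times(k-1)$ matrix with rows $0,\ldots,k-3$ and the twisted row. Splitting the twisted row gives $V(\boldsymbol\alpha_{\mathcal I})$ plus $\eta$ times the generalized Vandermonde with exponents $\{0,\ldots,k-3,k\}$. This set omits $r_1=k-2$ and $r_2=k-1$ from $\{0,\ldots,k\}$, and $s=k-1$. By Lemma~\ref{Lem:4.1} the factor is
\[
\det\begin{pmatrix}S_1&1\\ S_2&S_1\end{pmatrix}=S_1^2(\boldsymbol\alpha_{\mathcal I})-S_2(\boldsymbol\alpha_{\mathcal I}),
\]
so the minor is $\pm V(\boldsymbol\alpha_{\mathcal I})\bigl(1+\eta(S_1^2(\boldsymbol\alpha_{\mathcal I})-S_2(\boldsymbol\alpha_{\mathcal I}))\bigr)$.

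\textbf{Case of $k-1$ evaluation columns $\mathcal I$ and the column $(0,\ldots,1,\delta)^T$.} Expanding along this column gives $\delta$ times the minor just computed, minus the minor obtained by deleting the twisted row instead. That second minor has rows $0,\ldots,k-3,k-1$ and contains no $\eta$. By Lemma~\ref{Lem:4.1} it equals $V(\boldsymbol\alpha_{\mathcal I})S_1(\boldsymbol\alpha_{\mathcal I})$. The two cofactors carry opposite signs (this sign bookkeeping must be checked), so the determinant is
\[
\pm V(\boldsymbol\alpha_{\mathcal I})\Bigl(\delta\bigl(1+\eta(S_1^2-S_2)\bigr)-S_1\Bigr),
\]
which gives the second inequality in (ii).

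\textbf{Case of $k-2$ evaluation columns $\mathcal J$ and both extension columns.} The two extension columns are supported only on the last two rows, and their $2\times2$ block $\begin{pmatrix}0&1\\1&\delta\end{pmatrix}$ has determinant $-1$. The determinant therefore equals $\pm$ the determinant of rows $0,\ldots,k-3$ on $\mathcal J$. This is an ordinary Vandermonde determinant, which is nonzero. So this case imposes no condition, and that explains why the theorem has no item (iii).

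The main obstacle is getting the signs right and correctly identifying the omitted exponents $r_i$ when applying Lemma~\ref{Lem:4.1}. The most delicate points are the row swap in the $\mathcal L$ case, which turns $+\eta$ into $-\eta$, and the relative sign of the two cofactors in the $\delta$-column expansion. I would verify these on a small case such as $k=3$ before finalizing.
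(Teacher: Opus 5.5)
Your proposal is correct and follows essentially the paper's route: the paper gives no separate proof and only says this case uses the same minor-by-minor method as Theorem~\ref{Thm: main MDS, 0 ell k-3} with Lemma~\ref{Lem:4.1}. Your three determinant evaluations and the signs you flagged all check out. The row swap gives $V(\boldsymbol\alpha_{\mathcal L})(1-\eta S_2(\boldsymbol\alpha_{\mathcal L}))$, and the $\delta$-column expansion has cofactors $+\delta$ and $-1$. Your block-triangular argument correctly explains why the case with both extension columns imposes no condition.
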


\begin{theorem}\label{Thm: main MDS, ell k-1}
      Let $n,k$ be integers satisfying $3\le k<n$. Let $\boldsymbol{\alpha}=\{\alpha_1,\ldots,\alpha_n\}\subseteq\mathbb F_q$ with $\alpha_i\neq\alpha_j$ for $i\neq j$ and $\eta,\delta\in\mathbb F_q^{*}$.  Then 
      the code $\operatorname{TRL}_{k,n+2}(\boldsymbol{\alpha},\eta,\delta,k-1)$ is an MDS code if and only if
      \begin{description}
		\item[\normalfont(i)] for any $k$-subset $\mathcal{L}\subseteq [n]$, we have $\eta\cdot S_{1}(\boldsymbol{\alpha}_{\mathcal L})\neq -1$.
		\item[\normalfont(ii)] for any $k-1$-subset $\mathcal{I}\subseteq [n]$, we have $\delta-S_{1}(\boldsymbol \alpha_{\mathcal I})-\eta\left(S_{1}^2(\boldsymbol{\alpha}_{\mathcal{I}})-S_{2}(\boldsymbol{\alpha}_{\mathcal{I}})\right)\neq 0$.
        \end{description}
\end{theorem}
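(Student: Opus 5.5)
As in the proof of Theorem~\ref{Thm: main MDS, 0 ell k-3}, I will use that an $[n+2,k]$ code is MDS if and only if every $k\times k$ submatrix of $G^{\operatorname{TGRL}}_{k,n+2}(\boldsymbol{\alpha},\boldsymbol{1},\eta,\delta,k-1)$ is invertible. For $\ell=k-1$ the rows of the generator matrix are $x^0,\dots,x^{k-2}$ and $x^{k-1}+\eta x^{k}$, evaluated at the $\alpha_i$. The two extra columns are $\boldsymbol e_k=(0,\dots,0,1)^T$ and $(0,\dots,0,1,\delta)^T$, the latter with entries in rows $k-1$ and $k$. I split a choice of $k$ columns into four types: (a) $k$ columns indexed by a $k$-subset $\mathcal L\subseteq[n]$; (b) a $(k-1)$-subset $\mathcal I$ together with column $n+1$; (c) a $(k-1)$-subset $\mathcal I$ together with column $n+2$; (d) a $(k-2)$-subset $\mathcal J$ together with both extra columns.

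\emph{Case (a).} By multilinearity in the last row, the determinant is
\[
V(\boldsymbol\alpha_{\mathcal L})+\eta\,\det\bigl(\text{rows }x^0,\dots,x^{k-2},x^{k}\bigr).
\]
Apply Lemma~\ref{Lem:4.1} with $m=k+1$, $s=k$ and missing exponent $r_1=k-1$. It gives $\det(\text{rows }x^0,\dots,x^{k-2},x^{k})=V(\boldsymbol\alpha_{\mathcal L})S_1(\boldsymbol\alpha_{\mathcal L})$. Hence the determinant equals $V(\boldsymbol\alpha_{\mathcal L})\bigl(1+\eta S_1(\boldsymbol\alpha_{\mathcal L})\bigr)$, which is nonzero exactly when condition (i) holds.

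\emph{Cases (b) and (d).} These impose no condition.
\begin{itemize}
\item In (b), expand along the column $\boldsymbol e_k$. The result is $\pm$ the ordinary Vandermonde determinant of rows $x^0,\dots,x^{k-2}$ on $k-1$ distinct points, which is nonzero.
\item In (d), the last two columns vanish in rows $1,\dots,k-2$, so the matrix is block triangular. Its determinant is $V(\boldsymbol\alpha_{\mathcal J})\cdot\det\begin{pmatrix}0&1\\1&\delta\end{pmatrix}=-V(\boldsymbol\alpha_{\mathcal J})\neq0$.
\end{itemize}

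\emph{Case (c).} This is the only computation of substance. Expand along the last column, whose entries $1$ and $\delta$ sit in rows $k-1$ and $k$ with cofactor signs $-$ and $+$. This gives
\[
\det=\delta\cdot V(\boldsymbol\alpha_{\mathcal I})-\det\bigl(\text{rows }x^0,\dots,x^{k-3},\,x^{k-1}+\eta x^{k}\bigr).
\]
Split the second determinant by linearity in its last row and evaluate both pieces with Lemma~\ref{Lem:4.1}.
\begin{itemize}
\item With exponent set $\{0,\dots,k-3,k-1\}$, the missing exponent is $r_1=k-2$ and $s=k-1$. This yields $V(\boldsymbol\alpha_{\mathcal I})S_1(\boldsymbol\alpha_{\mathcal I})$.
\item With exponent set $\{0,\dots,k-3,k\}$, the missing exponents are $r_1=k-2$ and $r_2=k-1$, with $m=k+1$ and $s=k-1$. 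This yields $V(\boldsymbol\alpha_{\mathcal I})$ times
\[
\det\begin{pmatrix}S_1&S_0\\ S_2&S_1\end{pmatrix}=S_1^2(\boldsymbol\alpha_{\mathcal I})-S_2(\boldsymbol\alpha_{\mathcal I}).
\]
\end{itemize}
Therefore
\[
\det=V(\boldsymbol\alpha_{\mathcal I})\Bigl(\delta-S_1(\boldsymbol\alpha_{\mathcal I})-\eta\bigl(S_1^2(\boldsymbol\alpha_{\mathcal I})-S_2(\boldsymbol\alpha_{\mathcal I})\bigr)\Bigr),
\]
and this is nonzero exactly when condition (ii) holds.

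Combining the four cases gives the stated equivalence. The main point needing care is the bookkeeping in case (c): the cofactor signs, and reading off the correct index pattern $S_{s-r_j+i}$ in Lemma~\ref{Lem:4.1} for the partition with two missing exponents. I would check these by testing small $k$, for example $k=3$, directly.
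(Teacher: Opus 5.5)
Your proposal is correct and follows the route the paper indicates (``the same proof method'' as Theorem~\ref{Thm: main MDS, 0 ell k-3}). Your cases (a)--(d) are exactly the minors $A$, $C_1$, $C_2$, $D$ there, each evaluated with Lemma~\ref{Lem:4.1}, and your case (c) determinant $V(\boldsymbol\alpha_{\mathcal I})\bigl(\delta-S_1-\eta(S_1^2-S_2)\bigr)$ agrees with the one the paper computes in case (3.4) of the proof of Theorem~\ref{Thm: NMDS,ell=k-1}.
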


Finally, we proceed to show the dimension of the Schur square code of $\operatorname{TRL}_{k,n+2}(\boldsymbol{\alpha},\eta,\delta,\ell)$, which will in turn show that this code  is a non-RS code.

\begin{theorem}\label{Thm: non-RS}
      Let $n,k,\ell$ be integers satisfying $5\le k\leq\frac{n-1}{2}$ and $0\le \ell\le k-1$. Let $\boldsymbol{\alpha}=\{\alpha_1,\ldots,\alpha_n\}\subseteq\mathbb F_q$ with $\alpha_i\neq\alpha_j$ for $i\neq j$.  Let $\eta,\delta\in\mathbb{F}_{q}^{*}$, then
      the dimension of the Schur square of code $\mathcal{C}=\operatorname{TRL}_{k,n+2}(\boldsymbol{\alpha},\eta,\delta,\ell)$  satisfies 
      $$\dim(\mathcal C^2)=\left\{
      \begin{array}{cc}
         2k+1,  &\mbox{if}\ k-2\leq\ell\leq k-1  \\
          2k+2, &\mbox{if}\ \ell=0\\
          2k+3,&\mbox{if}\ 1\leq\ell\leq k-3
      \end{array}\right..$$
      Thus, the code $\operatorname{TRL}_{k,n+2}(\boldsymbol{\alpha},\eta,\delta,\ell)$ is a non-RS code distinct from the $\operatorname{RL}$ code $\operatorname{RL}_{k,n+2}(\boldsymbol{\alpha},\delta)$.
\end{theorem}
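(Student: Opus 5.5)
The plan is to reduce $\dim(\mathcal C^2)$ to a linear-algebra count at the level of polynomials. Write every codeword as $(h(\alpha_1),\ldots,h(\alpha_n),a,b)$, and identify it with the triple $(h,a,b)$. A Schur product of two codewords corresponds to $(fg,\,f_{k-1}g_{k-1},\,(f_{k-2}+\delta f_{k-1})(g_{k-2}+\delta g_{k-1}))$, and here $\deg(fg)\le 2k$. Since $n\ge 2k+1$, evaluation at $\boldsymbol{\alpha}$ is injective on $\mathbb F_q[x]_{\le 2k}$. Hence $\dim(\mathcal C^2)$ equals the dimension of the space $W\subseteq \mathbb F_q[x]_{\le 2k}\times\mathbb F_q^2$ spanned by the triples attached to products of pairs of basis polynomials.

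Next I fix the basis of $S_{k,\ell,\eta}$ with its tags, read off from the generator matrix. It consists of $(x^i,0,0)$ for $i\le k-3$, $i\ne\ell$; the twisted element $(g,0,0)$ with $g=x^\ell+\eta x^k$ when $\ell\le k-3$; $(x^{k-2},0,1)$ (or $(g,0,1)$ if $\ell=k-2$); and $(x^{k-1},1,\delta)$ (or $(g,1,\delta)$ if $\ell=k-1$). The pairwise products fall into three groups:
\begin{itemize}
\item untagged monomials $x^{i+j}$;
\item the twisted products $x^ig=x^{i+\ell}+\eta x^{i+k}$ and $g^2=x^{2\ell}+2\eta x^{\ell+k}+\eta^2x^{2k}$;
\item the three tagged products $(x^{2k-4},0,1)$, $(x^{2k-3},0,\delta)$, $(x^{2k-2},1,\delta^2)$, suitably modified by $g$ in the boundary cases.
\end{itemize}

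For $1\le\ell\le k-3$ I show the lower bound by exhibiting $2k+3$ independent vectors. Because $k\ge5$, every degree $0,\ldots,2k-2$ is a sum $i+j$ of untwisted indices in $[0,k-1]\setminus\{\ell\}$. In particular $x^{2k-4}=x^{k-3}x^{k-1}$ and $x^{2k-3}$ arise untagged (with a tag only through $x^{k-1}$, which can be subtracted off using the low monomials), so the tags peel off. This gives $(x^m,0,0)$ for $m\le 2k-2$, together with $(0,0,1)$ and $(0,1,0)$ (using $\delta$ and the $x^{2k-2}$ product). Then $x^{k-1}g$ contributes the new monomial $x^{2k-1}$ and $g^2$ contributes $x^{2k}$. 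For the upper bound, $W$ sits inside $\mathbb F_q[x]_{\le 2k}\times\mathbb F_q^2$, which has dimension $2k+3$.

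For $\ell=0$ the same computation works, except that the constant $1$ is never available on its own: it only occurs inside $g$ and $g^2$. I will check that $W$ lies in a hyperplane cut out by one linear relation tying the constant term to the $x^k$ and $x^{2k}$ coefficients. Since every element of $W$ is a combination of products, I will verify the relation directly on each product; for the products involving $g$ this uses $x^ig$ with $i\ge1$. That gives $2k+2$.

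For $\ell\in\{k-2,k-1\}$ the twisted element carries a nonzero tag. In those cases the tags of $g^2$, $gx^{k-1}$ and so on are forced to agree with the tags of the top-degree parts, so the separation of $(0,1,0)$ and $(0,0,1)$ from the polynomial part partially fails. I expect two linear relations, giving $2k+1$.

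I expect the main obstacle to be these exact relation counts in the cases $\ell=0$ and $\ell=k-2,k-1$. It is easy to produce spanning vectors, but proving that no further independence occurs requires identifying the precise linear functionals that vanish on every product. My first attempt will be to guess them as combinations of the coefficients of $1$, $x^k$, $x^{2k}$ and the two tag coordinates, and then verify them product by product. The last step is a comparison. By Proposition~\ref{Prop:schur of RS} a GRS code of dimension $k$ has Schur square of dimension $2k-1$, and I will compute that $\operatorname{RL}_{k,n+2}(\boldsymbol\alpha,\delta)$ has Schur square of dimension $2k-1+2$. Comparing these with the computed values of $\dim(\mathcal C^2)$ and using that equivalent codes have equivalent Schur squares gives the non-RS and RL-inequivalence conclusions.
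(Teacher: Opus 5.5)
Your framework matches the paper's approach. You identify $\mathcal C^2$ with the span $W\subseteq\mathbb F_q[x]_{\le 2k}\times\mathbb F_q^2$ of the product triples, which is legitimate because evaluation is injective for $n\ge 2k+1$, and then count. Your plan to certify the upper bounds by functionals that vanish on every basis product is more careful than the paper, which only displays row-reduced matrices. The problem is a misconception about how the two tag coordinates separate, and it breaks the last sentence of the theorem.

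Among the untwisted basis vectors, a product has nonzero tags only when both indices lie in $\{k-2,k-1\}$. So an untagged monomial $x^{i+j}$ needs $\min(i,j)\le k-3$ and has degree at most $2k-4$. The degrees $2k-3$ and $2k-2$ occur only in $(x^{2k-3},0,\delta)$ and $(x^{2k-2},1,\delta^2)$. In any combination of untwisted products, the first tag therefore equals the $x^{2k-2}$-coefficient. Hence neither $(x^{2k-2},0,0)$ nor $(0,1,0)$ lies in their span, which has dimension at most $2k$.

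Your claim that these products already give $(x^m,0,0)$ for all $m\le 2k-2$ together with $(0,0,1)$ and $(0,1,0)$ is false. For the same reason $\dim\operatorname{RL}_{k,n+2}(\boldsymbol{\alpha},\delta)^2=2k$, not $2k-1+2$. This is not cosmetic. With your value $2k+1$, the final comparison cannot separate the TRL code from the RL code when $\ell\in\{k-2,k-1\}$, because there $\dim\mathcal C^2=2k+1$ as well. Only the correct value $2k$ makes the inequivalence claim go through.

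The count $2k+3$ (and $2k+2$ for $\ell=0$) is still right, but it needs the twist at this point. For $0\le\ell\le k-3$ the untagged $x^{2k-2}$ must come from $(x^{k-2}g,0,0)=(x^{k-2+\ell}+\eta x^{2k-2},0,0)$. Only after that does $(x^{2k-2},1,\delta^2)$ yield $(0,1,0)$. Two edge cases also need the twist:
\begin{itemize}
\item For $\ell=1$, the degree $1$ is not a sum of untwisted indices; use $x=g-\eta x^k$.
\item For $\ell=k-3$, $x^{k-3}$ is not in $S_{k,\ell,\eta}$, so $x^{2k-4}$ must be extracted from $x^{k-4}g=x^{2k-7}+\eta x^{2k-4}$.
\end{itemize}

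For the upper bounds, write $c_m$ for the coefficient of $x^m$ and $(a,b)$ for the tags. The relations, each checked directly on the basis products, are:
\begin{itemize}
\item For $\ell=0$: $c_{2k}=\eta^2c_0$. It cannot involve $c_k$, since $(x^2,0,0)*(x^{k-2},0,1)=(x^k,0,0)\in W$.
\item For $\ell=k-1$: $c_{2k-1}=2\eta a$ and $c_{2k}=\eta^2 a$.
\item For $\ell=k-2$: $c_{2k}+\delta\eta c_{2k-1}+\delta^2\eta^2a-\eta^2b=0$ and $c_{2k-2}+2\delta c_{2k-1}+(2\eta\delta^2-1)a-2\eta b=0$.
\end{itemize}
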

\begin{proof}
    Let $\mathcal{C}:=\operatorname{TRL}_{k,n+2}(\boldsymbol{\alpha},\eta,\delta,\ell)$ and $G=G_{k,n+2}^{\operatorname{TRL}}(\boldsymbol{\alpha},\eta,\delta,\ell)$, we show the dimension of the Schur square of $\mathcal{C}$. Let $\boldsymbol{g}_{i}$ be the $i+1$-th row of the matrix $G$ and $\boldsymbol{g}_{i,j}=\boldsymbol{g}_{i}\ast \boldsymbol{g}_{j}$ for any $0\leq i,j\leq k-1$.  We divide our discussion into four cases:
    \begin{itemize}
        \item[(i)] If $\ell=0$, then $\mathcal{C}^2$ is generated by the following matrix
        $$\begin{pmatrix}
\alpha_1^2 & \alpha_2^2 & \cdots & \alpha_n^2&0&0 \\
\alpha_1^3 & \alpha_2^3 & \cdots & \alpha_n^3&0&0 \\
\vdots & \vdots & \ddots & \vdots&\vdots&\vdots \\
\alpha_1^{2k-6} & \alpha_2^{2k-6} & \cdots & \alpha_n^{2k-6}&0&0 \\
(1+\eta\alpha_{1}^k)^2&(1+\eta\alpha_{1}^k)^2&\cdots&(1+\eta\alpha_{1}^k)^2&0&0\\
\alpha_1(1+\eta\alpha_{1}^k)&\alpha_2(1+\eta\alpha_{1}^k)&\cdots&\alpha_{n}(1+\eta\alpha_{1}^k)&0&0\\
\vdots & \vdots & \ddots & \vdots&\vdots&\vdots \\
\alpha_1^{k-1}(1+\eta\alpha_{1}^k)&\alpha_2^{k-1}(1+\eta\alpha_{1}^k)&\cdots&\alpha_{n}^{k-1}(1+\eta\alpha_{1}^k)&0&0\\
\alpha_{1}^{2k-4}&\alpha_{2}^{2k-4}&\cdots&\alpha_{n}^{2k-4}&0&1\\
\alpha_{1}^{2k-3}&\alpha_{2}^{2k-3}&\cdots&\alpha_{n}^{2k-3}&0&\delta\\
\alpha_{1}^{2k-2}&\alpha_{2}^{2k-2}&\cdots&\alpha_{n}^{2k-2}&1&\delta^2\\
\end{pmatrix},$$
which is  row equivalent to
$$\begin{pmatrix}
\alpha_1 & \alpha_2 & \cdots & \alpha_n&0&0 \\
\alpha_1^2 & \alpha_2^2 & \cdots & \alpha_n^2&0&0 \\
\vdots & \vdots & \ddots & \vdots&\vdots&\vdots \\
\alpha_1^{2k-3} & \alpha_2^{2k-3} & \cdots & \alpha_n^{2k-3}&0&0 \\
\alpha_1^{2k-2} & \alpha_2^{2k-2} & \cdots & \alpha_n^{2k-2}&0&0 \\
\alpha_1^{2k-1} & \alpha_2^{2k-1} & \cdots & \alpha_n^{2k-1}&0&0 \\
1+\eta^2\alpha_1^{2k} & 1+\eta^2\alpha_{2k} & \cdots & 1+\eta^2\alpha_n^{2k} &1&0\\
0&0&\cdots&0&0&1\\
0&0&\cdots&0&1&0\\
\end{pmatrix},$$
whose rank is $2k+2$ because of $2k+2\leq n+2$.
\item [(ii)] If $1\leq\ell\leq k-3$, $\mathcal{C}^2$ is generated by a matrix equivalent to the following form:
\[
\begin{pmatrix}
1 & 1 & \cdots & 1 & 0 & 0\\
\alpha_1 & \alpha_2 & \cdots & \alpha_n & 0 & 0\\
\vdots & \vdots & \ddots & \vdots & \vdots & \vdots  \\
\alpha_1^{2k-3} & \alpha_2^{2k-3} & \cdots & \alpha_n^{2k-3} & 0 & 0\\
\alpha_1^{2k-2} & \alpha_2^{2k-2} & \cdots & \alpha_n^{2k-2} & 0 & 0 \\
\alpha_1^{2k-1} & \alpha_2^{2k-1} & \cdots & \alpha_n^{2k-1} & 0 & 0\\
\alpha_{1}^{2k}&\alpha_{2}^{2k}&\cdots&\alpha_{n}^{2k}&0&0\\
0&0&\cdots&0&0&1\\
0&0&\cdots&0&1&0\\
\end{pmatrix},
\]
whose rank is $2k+3$ because of $2k+3\leq n+2$.
\item [(iii)] If $\ell=k-2$, $\mathcal{C}^2$ is generated by a matrix equivalent to the following form:
\[
\begin{pmatrix}
1 & 1 & \cdots & 1 & 0 & 0\\
\alpha_1 & \alpha_2 & \cdots & \alpha_n & 0 & 0\\
\vdots & \vdots & \ddots & \vdots & \vdots & \vdots  \\
\alpha_{1}^{2k-4}&\alpha_{2}^{2k-4}&\cdots&\alpha_{n}^{2k-4}&0&0\\
\alpha_1^{2k-3} & \alpha_2^{2k-3} & \cdots & \alpha_n^{2k-3} & 0 & 0\\
\alpha_1^{2k-2} & \alpha_2^{2k-2} & \cdots & \alpha_n^{2k-2} & 1 & \delta^2 \\
\eta\alpha_1^{2k-1} & \eta\alpha_2^{2k-1} & \cdots & \eta\alpha_n^{2k-1} & 0 & \delta\\
\eta^2\alpha_{1}^{2k}+2\eta\alpha_1^{2k-2}&\eta^2\alpha_{2}^{2k}+2\eta\alpha_2^{2k-2}&\cdots&\eta^2\alpha_{n}^{2k}+2\eta\alpha_n^{2k-2}&0&1\\
\end{pmatrix},
\]
whose rank is $2k+1$ because of $2k<2k+1\leq n$.

\item [(iv)] If $\ell=k-1$, $\mathcal{C}^2$ is generated by a matrix equivalent to the following form:
\[
\begin{pmatrix}
1 & 1 & \cdots & 1 & 0 & 0\\
\alpha_1 & \alpha_2 & \cdots & \alpha_n & 0 & 0\\
\vdots & \vdots & \ddots & \vdots & \vdots & \vdots  \\
\alpha_{1}^{2k-4}&\alpha_{2}^{2k-4}&\cdots&\alpha_{n}^{2k-4}&0&0\\
\eta\alpha_1^{2k-3} & \eta\alpha_2^{2k-3} & \cdots & \eta\alpha_n^{2k-3} & 0 & 0\\
\eta\alpha_1^{2k-2} & \eta\alpha_2^{2k-2} & \cdots & \eta\alpha_n^{2k-2} & 0 & 0 \\
2\eta\alpha_1^{2k-1}+\eta^2\alpha_{1}^{2k} & 2\eta\alpha_2^{2k-1}+\eta^2\alpha_{2}^{2k} & \cdots & 2\eta\alpha_n^{2k-1}+\eta^2\alpha_{n}^{2k} & 1 & 0\\
0&0&\cdots&0&0&1
\end{pmatrix},
\]
whose rank is $2k+1$ because of $2k+1<n+2$.
\end{itemize}

In summary, the dimension of the Schur square of code $\mathcal{C}$  satisfies 
      $$\dim(\mathcal C^2)=\left\{
      \begin{array}{cc}
         2k+1,  &\mbox{if}\ k-2\leq\ell\leq k-1  \\
          2k+2, &\mbox{if}\ \ell=0\\
          2k+3,&\mbox{if}\ 1\leq\ell\leq k-3
      \end{array}\right..$$
By Proposition~\ref{Prop:schur of RS}, the dimension of the Schur square of RS codes is $2k-1$, and since the dimension of the Schur square of $\operatorname{RL}$ codes $\operatorname{RL}_{k,n+2}(\boldsymbol{\alpha},\delta)$ is $2k$, the code $\mathcal{C}$ is a non-RS code inequivalent to the corresponding $\operatorname{RL}$ code.
\end{proof}

Combining the MDS criteria of Theorems~\ref{Thm: main MDS, 0 ell k-3}---\ref{Thm: main MDS, ell k-1} with the Schur square distinguisher in Theorem~\ref{Thm: non-RS}, we can construct a family of non-RS MDS codes.
\begin{proposition}
    Let $p$ be a prime, $q_1 = p^{m_1}$, $q_2 = p^{m_2}$, where $m_1 \mid m_2$ and $m_2 \geq 3m_1$. Let $\boldsymbol{\alpha}=\{\alpha_{1},\cdots,\alpha_{n}\}\subseteq\mathbb{F}_{q_1}$, $\delta\in\mathbb{F}_{q_2}^{*}\setminus\mathbb{F}_{q_1}$, where $\alpha_1, \cdots, \alpha_n$ are distinct in $\mathbb{F}_{q_1}$. Let $H_1=\left\{a_1+a_2\delta:a_1,a_2\in\mathbb{F}_{q_1}\right\},H_2=\left\{a_1+a_2\delta^{-1}:a_1,a_2\in\mathbb{F}_{q_1}\right\}$ and $\eta\in\mathbb{F}_{q_2}^{*}\setminus \left(H_1\cup H_2\right)$.
    For $k-2\leq\ell\leq k-1$,
   if $5\leq k\leq\frac{n-1}{2}$, then  the code $\operatorname{TRL}_{k,n+2}(\boldsymbol{\alpha},\eta,\delta,\ell)$ over $\mathbb F_{q_2}$ is a non-RS MDS code.
\end{proposition}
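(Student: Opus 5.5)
The plan is to combine the MDS criteria of Theorems~\ref{Thm: main MDS, ell k-2} and~\ref{Thm: main MDS, ell k-1} with the Schur square computation of Theorem~\ref{Thm: non-RS}. Everything rests on one observation. Since $\boldsymbol{\alpha}\subseteq\mathbb{F}_{q_1}$, every elementary symmetric function $S_j(\boldsymbol{\alpha}_{\mathcal{L}})$, $S_j(\boldsymbol{\alpha}_{\mathcal{I}})$ lies in $\mathbb{F}_{q_1}$. Each excluded condition in those theorems is affine-linear in $\eta$ (and in $\delta$). So if an excluded equality held, it would force $\eta$ to lie in $H_1$ or $H_2$, or force $\delta$ to lie in $\mathbb{F}_{q_1}$. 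Both are ruled out by hypothesis, because $\mathbb{F}_{q_1}\subseteq H_1\cap H_2$ and $\delta\notin\mathbb{F}_{q_1}$.

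First I check that such an $\eta$ exists. We have $|H_1\cup H_2|\le 2q_1^2-q_1$, since the two sets share $\mathbb{F}_{q_1}$. Also $q_2\ge q_1^3>2q_1^2-q_1+1$ for every $q_1\ge 2$. Hence $\mathbb{F}_{q_2}^*\setminus(H_1\cup H_2)\neq\emptyset$.

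Next I verify the MDS conditions, writing $A=S_1^2(\boldsymbol{\alpha}_{\mathcal I})-S_2(\boldsymbol{\alpha}_{\mathcal I})\in\mathbb{F}_{q_1}$.

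\textbf{Case $\ell=k-2$.}
\begin{itemize}
\item Condition (i) of Theorem~\ref{Thm: main MDS, ell k-2}. If $S_2(\boldsymbol{\alpha}_{\mathcal L})=0$, the condition is trivial. Otherwise $\eta S_2=1$ would give $\eta\in\mathbb{F}_{q_1}$, a contradiction.
\item First part of (ii). If $A=0$ it holds trivially. Otherwise $1+\eta A=0$ would give $\eta=-A^{-1}\in\mathbb{F}_{q_1}$, a contradiction.
\item Second part of (ii). If $A=0$, the expression reduces to $\delta-S_1\neq0$, true since $\delta\notin\mathbb{F}_{q_1}$. If $A\neq0$, solving $\delta(1+\eta A)=S_1$ gives $\eta=(S_1A^{-1})\delta^{-1}-A^{-1}\in H_2$, a contradiction.
\end{itemize}

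\textbf{Case $\ell=k-1$.}
\begin{itemize}
\item Condition (i) of Theorem~\ref{Thm: main MDS, ell k-1}. Either $S_1=0$, or $\eta=-S_1^{-1}\in\mathbb{F}_{q_1}$ would be forced, which is excluded.
\item Condition (ii). If $A=0$, the expression is $\delta-S_1\neq0$. If $A\neq 0$, vanishing would give $\eta=A^{-1}\delta-A^{-1}S_1\in H_1$, which is excluded.
\end{itemize}

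Thus in both cases the code over $\mathbb{F}_{q_2}$ is an $[n+2,k,n-k+3]$ MDS code. The theorems are stated over an arbitrary field and apply verbatim over $\mathbb{F}_{q_2}$ with $\boldsymbol{\alpha}\subseteq\mathbb{F}_{q_1}\subseteq\mathbb{F}_{q_2}$; the hypothesis $3\le k<n$ holds since $5\le k\le (n-1)/2$.

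Finally, for non-RS-ness I invoke Theorem~\ref{Thm: non-RS}, whose hypotheses $5\le k\le\frac{n-1}{2}$ and $\eta,\delta\neq0$ are exactly ours. It gives $\dim(\mathcal C^2)=2k+1$ for $\ell\in\{k-2,k-1\}$. Since $k\le\frac{n+2}{2}$, the Remark after Proposition~\ref{Prop:schur of RS} applies: a GRS code of length $n+2$ and dimension $k$ would have Schur square of dimension $2k-1\neq 2k+1$. Hence the code is a non-RS MDS code.

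The only real obstacle is bookkeeping. One must handle the degenerate subcases $A=0$ and $S_j=0$ separately, where $\eta$ cannot be solved for; there the non-vanishing comes instead from $\delta\notin\mathbb{F}_{q_1}$. One must also identify correctly which of $H_1$ or $H_2$ captures the solved-for $\eta$ (the $\delta^{-1}$ form arises only in the second condition for $\ell=k-2$).
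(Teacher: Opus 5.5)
Your proposal is correct and follows essentially the same route as the paper. Since every $S_j(\boldsymbol{\alpha}_{\cdot})$ lies in $\mathbb{F}_{q_1}$, each excluded equality in Theorems~\ref{Thm: main MDS, ell k-2} and~\ref{Thm: main MDS, ell k-1} would force $\eta\in\mathbb{F}_{q_1}$, $\eta\in H_1$, $\eta\in H_2$, or $\delta\in\mathbb{F}_{q_1}$, and non-RS-ness then follows from Theorem~\ref{Thm: non-RS}; your extra check that $\mathbb{F}_{q_2}^*\setminus(H_1\cup H_2)\neq\emptyset$ (via $q_2\ge q_1^3$) is a useful detail the paper omits.
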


\begin{proof}
Let
$
\mathcal{C}_{\ell}=\operatorname{TRL}_{k,n+2}(\boldsymbol{\alpha},\eta,\delta,\ell)$ for $k-2\leq\ell\leq k-1$.
By Theorem~\ref{Thm: non-RS}, $\mathcal{C}_{\ell}$ is a non-RS code. Therefore, it suffices to show that $\mathcal{C}_{\ell}$ is an MDS code. We distinguish the following two cases for $\ell=k-1$:
\begin{enumerate}
    \item[(i)] For $k$-subsets $L\subseteq [n]$, since $\eta^{-1}\in\mathbb{F}_{q_2}^{*}\setminus\mathbb{F}_{q_1}$ and $-S_1(\boldsymbol{\alpha}_{L})\in\mathbb{F}_{q_1}$, we have $\eta\cdot S_1(\boldsymbol{\alpha}_{L})\neq -1$.
    \item [(ii)] For any $k-1$-subset $I\subseteq [n]$, since $\delta\in \mathbb F^*_{q_2}\setminus\mathbb F_{q_1}$ and $S_1(\boldsymbol{\alpha}_{I})\in\mathbb{F}_{q_1}$, we have $\delta-S_1(\boldsymbol{\alpha}_{I})\neq 0$. If $S_{1}^2(\boldsymbol{\alpha}_{I})-S_2(\boldsymbol{\alpha}_{I})=0$, then
    $\delta-S_{1}(\boldsymbol \alpha_{ I})-\eta\left(S_{1}^2(\boldsymbol{\alpha}_{{I}})-S_{2}(\boldsymbol{\alpha}_{{I}})\right)=\delta-S_1(\boldsymbol{\alpha}_{I})\neq 0$. If $S_{1}^2(\boldsymbol{\alpha}_{I})-S_2(\boldsymbol{\alpha}_{I})\neq 0$, since $\left(\delta-S_{1}(\boldsymbol \alpha_{ I})\right)\cdot \left(S_{1}^2(\boldsymbol{\alpha}_{{I}})-S_{2}(\boldsymbol{\alpha}_{{I}})\right)^{-1}\in H_1$ and $\eta\in\mathbb{F}_{q_2}^{*}\setminus \left(H_1\cup H_2\right)$, we have 
    \[
    \delta-S_{1}(\boldsymbol \alpha_{ I})-\eta\left(S_{1}^2(\boldsymbol{\alpha}_{{I}})-S_{2}(\boldsymbol{\alpha}_{{I}})\right)\neq 0.
    \]
    \end{enumerate}
    
    Combining the above two cases with Theorem~\ref{Thm: main MDS, ell k-1}, we know that $\mathcal{C}_{k-1}$ is an MDS code. Therefore, $\mathcal{C}_{k-1}$ is a non-RS MDS code.

    To prove that $\mathcal{C}_{k-2}$ is an MDS code, we divide the proof into the following two cases.
    \begin{enumerate}
        \item [(i)] For $k$-subsets $L\subseteq [n]$, since $\eta^{-1}\in\mathbb{F}_{q_2}^{*}\setminus\mathbb{F}_{q_1}$ and $S_2(\boldsymbol{\alpha}_{L})\in\mathbb{F}_{q_1}$, we have $\eta\cdot S_2(\boldsymbol{\alpha}_{L})\neq 1$.
        \item [(ii)] For any $k-1$-subset $I\subseteq [n]$, since $\eta\in\mathbb F^{*}_{q_2}\setminus\mathbb{F}_{q_1}$ and $S_{1}^2(\boldsymbol{\alpha}_{I})-S_{2}(\boldsymbol{\alpha}_{I})\in\mathbb F_{q_1}$, we have $1+\eta\left(S_{1}^2(\boldsymbol{\alpha}_{I})-S_{2}(\boldsymbol{\alpha}_{I})\right)\neq 0$. Next, we prove 
        \[
        \delta(1+\eta(S_{1}^2(\boldsymbol{\alpha}_{I})-S_2(\boldsymbol{\alpha}_{I})))-S_1(\boldsymbol{\alpha}_{I})\neq 0.
        \]
        If $S_{1}^2(\boldsymbol{\alpha}_{I})-S_2(\boldsymbol{\alpha}_{I})=0$, then 
  \[
        \delta(1+\eta(S_{1}^2(\boldsymbol{\alpha}_{I})-S_2(\boldsymbol{\alpha}_{I})))-S_1(\boldsymbol{\alpha}_{I})=\delta-S_{1}(\boldsymbol{\alpha}_{I})\neq 0.
        \]

        If $S_{1}^2(\boldsymbol{\alpha}_{I})-S_2(\boldsymbol{\alpha}_{I})\neq 0$, since $\left(\delta^{-1}\cdot S_{1}(\boldsymbol{\alpha}_{I})-1\right)\cdot\left(S_{1}^2(\boldsymbol{\alpha}_{I})-S_2(\boldsymbol{\alpha}_{I})\right)^{-1}\in H_1\cup H_2$ and $\eta\in\mathbb{F}_{q_2}^{*}\setminus (H_1\cup H_2)$, we have
        \[
        \eta\neq \left(\delta^{-1}\cdot S_{1}(\boldsymbol{\alpha}_{I})-1\right)\cdot\left(S_{1}^2(\boldsymbol{\alpha}_{I})-S_2(\boldsymbol{\alpha}_{I})\right)^{-1},
        \]
        i.e. $\delta\left(1+\eta(S_{1}^2(\boldsymbol \alpha_{\mathcal I})- S_{2}(\boldsymbol \alpha_{\mathcal I}))\right)-S_{1}(\boldsymbol \alpha_{\mathcal I})\neq 0$.
\end{enumerate}

   Combining the above two cases with Theorem~\ref{Thm: main MDS, ell k-2}, we know that $\mathcal{C}_{k-2}$ is an MDS code. Therefore, $\mathcal{C}_{k-2}$ is a non-RS MDS code.
\end{proof}

\begin{proposition}
Let $\mathbb F_{q_1}\subseteq\mathbb F_{q_2}\subseteq \mathbb F_q$ be a chain of subfields of $\mathbb F_q$. Let $\boldsymbol{\alpha}=\{\alpha_{1},\cdots,\alpha_{n}\}\subseteq\mathbb{F}_{q_1}$, $\delta\in\mathbb{F}_{q_2}^{*}\setminus\mathbb{F}_{q_1}$ and $\eta\in\mathbb F_q^{*}\setminus\mathbb F_{q_2}$, where $\alpha_1, \cdots, \alpha_n$ are distinct in $\mathbb{F}_{q_1}$. 
    For $0\leq\ell\leq k-3$,
   if $5\leq k\leq\frac{n-1}{2}$, then  the code $\operatorname{TRL}_{k,n+2}(\boldsymbol{\alpha},\eta,\delta,\ell)$ over $\mathbb F_q$ is a non-RS MDS code.
\end{proposition}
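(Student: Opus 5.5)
The plan mirrors the preceding proposition. By Theorem~\ref{Thm: non-RS} with $5\le k\le\frac{n-1}{2}$ and $0\le\ell\le k-3$, $\dim(\mathcal C^2)\in\{2k+2,2k+3\}\neq 2k-1$, so $\mathcal C=\operatorname{TRL}_{k,n+2}(\boldsymbol\alpha,\eta,\delta,\ell)$ is not equivalent to any GRS code. It therefore remains to verify conditions (i)--(iii) of Theorem~\ref{Thm: main MDS, 0 ell k-3}. Throughout, every elementary symmetric function $S_j(\boldsymbol\alpha_X)$ of a subset $X$ lies in $\mathbb F_{q_1}\subseteq\mathbb F_{q_2}$, because $\boldsymbol\alpha\subseteq\mathbb F_{q_1}$.

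\textbf{Conditions (i), the first half of (ii), and (iii).} Each has the shape $1+\eta\,a\neq0$ with $a\in\mathbb F_{q_1}$. For (i), $a=\pm S_{k-\ell}(\boldsymbol\alpha_{\mathcal L})$. For the first part of (ii), $a=\pm\bigl(S_{k-\ell-1}S_1-S_{k-\ell}\bigr)(\boldsymbol\alpha_{\mathcal I})$. For (iii), $a$ is $\pm$ the $3\times3$ determinant of symmetric functions. If $a=0$, the expression equals $1\neq0$. If $a\neq0$, then $1+\eta a=0$ would force $\eta=-a^{-1}\in\mathbb F_{q_1}\subseteq\mathbb F_{q_2}$, contradicting $\eta\notin\mathbb F_{q_2}$.

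\textbf{Second half of (ii).} Write $A=(-1)^{k-\ell-2}\bigl(S_{k-\ell-1}S_1-S_{k-\ell}\bigr)(\boldsymbol\alpha_{\mathcal I})$ and $B=(-1)^{k-\ell-2}\bigl(S_{k-\ell-1}S_2-S_{k-\ell}S_1\bigr)(\boldsymbol\alpha_{\mathcal I})$, both in $\mathbb F_{q_1}$. The expression to show nonzero is
\[
(\delta-S_1(\boldsymbol\alpha_{\mathcal I}))+\eta(\delta A-B).
\]
If $\delta A-B=0$, the expression reduces to $\delta-S_1(\boldsymbol\alpha_{\mathcal I})$, which is nonzero because $\delta\notin\mathbb F_{q_1}$ while $S_1\in\mathbb F_{q_1}$. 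If $\delta A-B\neq0$, this element lies in $\mathbb F_{q_2}$ since $\delta\in\mathbb F_{q_2}$. Vanishing would then give
\[
\eta=-(\delta-S_1)(\delta A-B)^{-1}\in\mathbb F_{q_2},
\]
again a contradiction.

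This case is the only step needing care, and the subfield chain $\mathbb F_{q_1}\subseteq\mathbb F_{q_2}\subseteq\mathbb F_q$ is designed exactly for it. It replaces the $H_1\cup H_2$ avoidance used in the preceding proposition: every coefficient of $\eta$ and every constant term lies in $\mathbb F_{q_2}$, so $\eta\notin\mathbb F_{q_2}$ suffices.

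With all of (i)--(iii) verified, Theorem~\ref{Thm: main MDS, 0 ell k-3} shows $\mathcal C$ is MDS. Combined with the Schur square computation, $\mathcal C$ is a non-RS MDS code.
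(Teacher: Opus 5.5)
Your proposal is correct and follows the paper's proof essentially step for step. The Schur-square dimension theorem rules out GRS equivalence, and each condition of the MDS criterion for $0\le\ell\le k-3$ is verified by the same field-membership argument: $\eta\notin\mathbb F_{q_2}$, while every other quantity lies in $\mathbb F_{q_2}$, with $\delta\notin\mathbb F_{q_1}$ handling the case where the coefficient of $\eta$ vanishes. Your explicit treatment of the zero-coefficient cases is slightly more careful than the paper, which spells this out only for condition (i).
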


\begin{proof}
Let
$
\mathcal{C}_{\ell}=\operatorname{TRL}_{k,n+2}(\boldsymbol{\alpha},\eta,\delta,\ell)$ for $0\leq\ell\leq k-3$.
By Theorem~\ref{Thm: non-RS}, $\mathcal{C}_{\ell}$ is a non-RS code. Therefore, it suffices to show that $\mathcal{C}_{\ell}$ is an MDS code. We distinguish the following three cases for $0\leq\ell\leq k-3$:
\begin{enumerate}
    \item[(i)] For a $k$-subset $L\subseteq [n]$, if
    $S_{k-\ell}(\boldsymbol{\alpha}_L)=0$, then $1+(-1)^{k-\ell-1}\eta\, S_{k-\ell}(\boldsymbol{\alpha}_L)\ne 0$. If $S_{k-\ell}(\boldsymbol{\alpha}_L)\ne 0$, since
    $\eta^{-1}\in \mathbb F_{q}^{*}\setminus\mathbb F_{q_2}$ and 
    $(-1)^{k-\ell}S_{k-\ell}(\boldsymbol{\alpha}_L)\in\mathbb F_{q_1}\subseteq \mathbb F_{q_2}$,
    we have
    $1+(-1)^{k-\ell-1}\eta\,S_{k-\ell}(\boldsymbol{\alpha}_L)\ne 0$. Hence, for every $k$-subset $L\subseteq [n]$, we always have
    \[
    1+(-1)^{k-\ell-1}\eta\,S_{k-\ell}(\boldsymbol{\alpha}_L)\ne 0.
    \]

    \item[(ii)] For the $k-1$ subsets $I\subseteq [n]$, since
    $\eta\in\mathbb F_{q}^{*}\setminus\mathbb F_{q_2}$ and $S_{k-\ell-1}(\boldsymbol{\alpha}_I)S_1(\boldsymbol{\alpha}_I)-S_{k-\ell}(\boldsymbol{\alpha}_I)\in\mathbb F_{q_1}$,
    it follows that
    \[
    1+(-1)^{k-\ell-2}\eta\bigl(S_{k-\ell-1}(\boldsymbol{\alpha}_I)S_1(\boldsymbol{\alpha}_I)-S_{k-\ell}(\boldsymbol{\alpha}_I)\bigr)\ne 0.
    \]
 Since
   $ S_1(\boldsymbol{\alpha}_I)\in\mathbb F_{q_1}$ and $ \delta\in \mathbb F_{q_2}^*\setminus\mathbb F_{q_1}$, we get
    $\delta-S_1(\boldsymbol{\alpha}_I)\ne 0$. Since $\delta-S_1(\boldsymbol{\alpha}_I)\ne 0,\eta\in\mathbb F_q^{*}\setminus\mathbb F_{q_2}$ and
   \[
   \delta\left(
S_{k-\ell-1}(\boldsymbol{\alpha}_{I})S_{1}(\boldsymbol{\alpha}_{I})-S_{k-\ell}(\boldsymbol{\alpha}_{I})\right)-\left(S_{k-\ell-1}(\boldsymbol{\alpha}_{I})\cdot S_{2}(\boldsymbol{\alpha}_{I})-S_{k-\ell}(\boldsymbol{\alpha}_{I})\cdot S_{1}(\boldsymbol{\alpha}_{I})\right)\in\mathbb F_{q_2},
   \]
    we obtain
    \[
\delta-S_{1}(\boldsymbol{\alpha}_{I})+(-1)^{k-\ell-2}\eta\left(\delta\left(
S_{k-\ell-1}(\boldsymbol{\alpha}_{I})S_{1}(\boldsymbol{\alpha}_{I})-S_{k-\ell}(\boldsymbol{\alpha}_{I})\right)-\left(S_{k-\ell-1}(\boldsymbol{\alpha}_{I})\cdot S_{2}(\boldsymbol{\alpha}_{I})-S_{k-\ell}(\boldsymbol{\alpha}_{I})\cdot S_{1}(\boldsymbol{\alpha}_{I})\right)\right)\neq 0.
\]

    \item[(iii)] For $k-2$ subsets $J\subseteq [n]$, since
    $\eta\in\mathbb F_{q}^{*}\setminus\mathbb F_{q_2}$
    and 
    \[
    \left|
    \begin{array}{ccc}
         S_{k-\ell-2}(\boldsymbol{\alpha}_{J})&1&0\\
         S_{k-\ell-1}(\boldsymbol{\alpha}_{J})&S_{1}(\boldsymbol{\alpha}_{J})&1\\
         S_{k-\ell}(\boldsymbol{\alpha}_{J})&S_{2}(\boldsymbol{\alpha}_{J})&S_{1}(\boldsymbol{\alpha}_{J})
    \end{array}
    \right|\in\mathbb{F}_{q_1}\subseteq \mathbb F_{q_2},
    \]
    we have
    $1+(-1)^{k-\ell-3}\eta\cdot\left|
    \begin{array}{ccc}
         S_{k-\ell-2}(\boldsymbol{\alpha}_{J})&1&0\\
         S_{k-\ell-1}(\boldsymbol{\alpha}_{J})&S_{1}(\boldsymbol{\alpha}_{J})&1\\
         S_{k-\ell}(\boldsymbol{\alpha}_{J})&S_{2}(\boldsymbol{\alpha}_{J})&S_{1}(\boldsymbol{\alpha}_{J})
    \end{array}
    \right|
    \neq 0$.
\end{enumerate}

 For $0\leq\ell\leq k-3$, combining the above three cases with Theorem~\ref{Thm: main MDS, 0 ell k-3}, we know that $\mathcal{C}_{\ell}$ is an MDS code. Therefore, $\mathcal{C}_{\ell}$ is a non-RS MDS code.
\end{proof}

\begin{example}
   Let $n=11,k=5,\boldsymbol{\alpha}=\{\alpha_1,\alpha_2,\cdots,\alpha_{11}\}\subseteq\mathbb{F}_{11}$, where $\alpha_{i}=i-1$ for all $1\leq i\leq 11$. Let $\delta$ be a primitive element of $\mathbb{F}_{11^3},H_1=\left\{a_1+a_2\delta:a_1,a_2\in\mathbb{F}_{11}\right\}$ and $H_2=\left\{a_1+a_2\delta^{-1}:a_1,a_2\in\mathbb{F}_{11}\right\}$. Let $\eta=\delta+\delta^{-1}$, because $\delta$ is a primitive element of $\mathbb{F}_{11^3}$, we know that $\eta\in \mathbb F_{11^3}\setminus \left(H_1\cup H_2\right)$. 
   \begin{description}
\item [\normalfont(1)] If $\ell=4$, then the code $\operatorname{TRL}_{5,13}(\boldsymbol{\alpha},\eta,\delta,4)$ is an MDS code with parameters $[13,5,9]$ over $\mathbb{F}_{11^3}$ by Magma.  Moreover, the Schur square code of $\operatorname{TRL}_{5,13}(\boldsymbol{\alpha},\eta,\delta,4)$ has dimension $11\neq 9$. Thus, the codes $\operatorname{TRL}_{5,13}(\boldsymbol{\alpha},\eta,\delta,4)$ are non-RS MDS codes.
\item [\normalfont(2)] If $\ell=3$, then the code $\operatorname{TRL}_{5,13}(\boldsymbol{\alpha},\eta,\delta,3)$ is an MDS code with parameters $[13,5,9]$ over $\mathbb{F}_{11^3}$ by Magma.  Moreover, the Schur square code of $\operatorname{TRL}_{5,13}(\boldsymbol{\alpha},\eta,\delta,3)$ has dimension $11\neq 9$. Thus, the codes $\operatorname{TRL}_{5,13}(\boldsymbol{\alpha},\eta,\delta,3)$ are non-RS MDS codes.
    \end{description}
   \end{example}
   \begin{example}
   Let $n=11,k=5,\boldsymbol{\alpha}=\{\alpha_1,\alpha_2,\cdots,\alpha_{11}\}\subseteq\mathbb{F}_{11}$, where $\alpha_{i}=i-1$ for all $1\leq i\leq 11$. Let $\delta$ be a primitive element of $\mathbb{F}_{11^2}$ and $\eta$ be a primitive element of $\mathbb{F}_{11^4}$, then $\delta\in\mathbb F_{11^2}^{*}\setminus\mathbb F_{11}$ and $\eta\in\mathbb F_{11^4}^{*}\setminus\mathbb F_{11^2}$.
   \begin{description}
   \item [\normalfont(1)] If $\ell=0$, then the code $\operatorname{TRL}_{5,13}(\boldsymbol{\alpha},\eta,\delta,0)$ is an MDS code with parameters $[13,5,9]$ over $\mathbb{F}_{11^4}$ by Magma.  Moreover, the Schur square code of $\operatorname{TRL}_{5,13}(\boldsymbol{\alpha},\eta,\delta,0)$ has dimension $12\neq 9$. Thus, the codes $\operatorname{TRL}_{5,13}(\boldsymbol{\alpha},\eta,\delta,0)$ are non-RS MDS codes.
\item [\normalfont(2)] If $\ell=1$, then the code $\operatorname{TRL}_{5,13}(\boldsymbol{\alpha},\eta,\delta,1)$ is an MDS code with parameters $[13,5,9]$ over $\mathbb{F}_{11^4}$ by Magma.  Moreover, the Schur square code of $\operatorname{TRL}_{5,13}(\boldsymbol{\alpha},\eta,\delta,1)$ has dimension $13\neq 9$. Thus, the codes $\operatorname{TRL}_{5,13}(\boldsymbol{\alpha},\eta,\delta,1)$ are non-RS MDS codes.
\item [\normalfont(3)] If $\ell=2$, then the code $\operatorname{TRL}_{5,13}(\boldsymbol{\alpha},\eta,\delta,2)$ is an MDS code with parameters $[13,5,9]$ over $\mathbb{F}_{11^4}$ by Magma.  Moreover, the Schur square code of $\operatorname{TRL}_{5,13}(\boldsymbol{\alpha},\eta,\delta,2)$ has dimension $13\neq 9$. Thus, the codes $\operatorname{TRL}_{5,13}(\boldsymbol{\alpha},\eta,\delta,2)$ are non-RS MDS codes.
    \end{description}
\end{example}

\section{The NMDS Characterization  of code \texorpdfstring{$\operatorname{TRL}_{k,n+2}(\boldsymbol{\alpha},\eta,\delta,\ell)$ }{TRL(k,n+2,alpha,eta,delta,ell)}}\label{sec4}
In this section, we give the necessary and sufficient conditions for the code $\operatorname{TRL}_{k,n+2}(\boldsymbol{\alpha},\eta,\delta,\ell)$ to be an NMDS code for $k-2\leq\ell\leq k-1$.

The following lemma is a key result about the NMDS property of linear codes, which is useful to determine the NMDS property for $\operatorname{TRL}$ codes.
\begin{lemma}\cite{dodunekov1994near}\label{Lem:equ NMDS}
    Let $\mathcal{C}$ be a linear code over $\mathbb F_{q}$ with parameters $[n,k]$. Then $\mathcal{C}$ is an NMDS code if and only if the following conditions hold:
    \begin{description}
        \item [\normalfont(i)] Any $k-1$ columns of a generator matrix of $\mathcal{C}$ are linearly independent;
        \item [\normalfont(ii)] There exist $k$ columns of a generator matrix of $\mathcal{C}$ that are linearly dependent;
        \item [\normalfont(iii)]  Any $k+1$ columns of a generator matrix of $\mathcal{C}$ have rank $k$.
     \end{description}
\end{lemma}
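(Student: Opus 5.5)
The plan is to translate each of the three column conditions into a statement about the minimum distance of $\mathcal{C}$ or of $\mathcal{C}^{\perp}$. Throughout, fix a generator matrix $G$ of $\mathcal{C}$ with columns $\boldsymbol{g}^{(1)},\ldots,\boldsymbol{g}^{(n)}\in\mathbb{F}_q^k$. By definition, $\mathcal{C}$ is NMDS exactly when $S(\mathcal{C})=S(\mathcal{C}^{\perp})=1$. Since $\mathcal{C}$ is an $[n,k]$ code and $\mathcal{C}^{\perp}$ is an $[n,n-k]$ code, this is equivalent to $d(\mathcal{C})=n-k$ and $d(\mathcal{C}^{\perp})=k$.

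\emph{Step 1: the dual code.} A vector $\boldsymbol{y}\in\mathbb{F}_q^n$ lies in $\mathcal{C}^{\perp}$ if and only if $G\boldsymbol{y}^T=\boldsymbol{0}$, that is, $\sum_i y_i\boldsymbol{g}^{(i)}=\boldsymbol{0}$. Hence a nonzero dual codeword with support contained in $T\subseteq[n]$ exists if and only if the columns indexed by $T$ are linearly dependent. It follows that $d(\mathcal{C}^{\perp})\ge k$ is equivalent to condition (i). Given (i), $d(\mathcal{C}^{\perp})=k$ is then equivalent to condition (ii).

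\emph{Step 2: the code itself.} Every codeword has the form $\boldsymbol{c}=\boldsymbol{x}G$ with $\boldsymbol{x}\in\mathbb{F}_q^k$, and $c_i=\boldsymbol{x}\cdot\boldsymbol{g}^{(i)}$. So $\boldsymbol{c}$ vanishes on a set $S\subseteq[n]$ exactly when $\boldsymbol{x}$ is orthogonal to every column indexed by $S$. A nonzero such $\boldsymbol{x}$ exists if and only if those columns span a space of dimension less than $k$. Because $G$ has full row rank, $\boldsymbol{x}\neq\boldsymbol{0}$ gives $\boldsymbol{c}\neq\boldsymbol{0}$. Therefore every nonzero codeword has at most $k$ zeros, i.e.\ $d(\mathcal{C})\ge n-k$, if and only if every $k+1$ columns have rank $k$, which is condition (iii). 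Next, if (ii) holds, some $k$ columns have rank less than $k$. This yields a nonzero codeword with at least $k$ zeros, so $d(\mathcal{C})\le n-k$.

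\emph{Step 3: assemble.} For sufficiency, assume (i)--(iii). Step 1 gives $d(\mathcal{C}^{\perp})=k$, and Step 2 gives $d(\mathcal{C})=n-k$; hence both Singleton defects equal $1$. For necessity, assume $\mathcal{C}$ is NMDS. Then $d(\mathcal{C}^{\perp})=k$ gives (i) and (ii) by Step 1, and $d(\mathcal{C})=n-k$ gives (iii) by Step 2.

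There is no real obstacle. The only points needing care are the correspondence between codeword zeros and column rank, which uses that $G$ has full row rank, and the observation that the upper bound $d(\mathcal{C})\le n-k$ comes from condition (ii) rather than from (iii).
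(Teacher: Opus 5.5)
Your proof is correct and complete. You turn (i) and (ii) into $d(\mathcal{C}^{\perp})=k$ through linear dependencies among columns, and (iii) into $d(\mathcal{C})\ge n-k$ through the zeros of $\boldsymbol{x}G$ (using that $G$ has full row rank), with (ii) supplying $d(\mathcal{C})\le n-k$; this is the standard argument.

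The paper gives no proof of this lemma and simply cites Dodunekov and Landjev, whose argument is essentially this one. The upper bound $d(\mathcal{C})\le n-k$ could equally be read off from $d(\mathcal{C}^{\perp})=k$: if $\mathcal{C}$ were MDS, its dual would be MDS with $d(\mathcal{C}^{\perp})=k+1$.
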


\begin{theorem}\label{Thm: NMDS,ell=k-1}
      Let $n,k$ be integers satisfying $3\le k<n$. Let $\boldsymbol{\alpha}=\{\alpha_1,\ldots,\alpha_n\}\subseteq\mathbb F_q^*$ with $\alpha_i\neq\alpha_j$ for $i\neq j$ and $\eta,\delta\in\mathbb F_q^{*}$.  Then the code $\operatorname{TRL}_{k,n+2}(\boldsymbol{\alpha},\eta,\delta,k-1)$ is an NMDS code if and only if 
      \begin{description}
          \item [\normalfont(A)] no $k$-subset $J\subseteq [n]$ satisfies $S_1(\boldsymbol{\alpha}_J)=-\eta^{-1},S_{2}(\boldsymbol{\alpha}_{J})=-\delta\eta^{-1}$;
           and 
           \item [\normalfont(B)] at least one of the following conditions holds:
           \begin{description} 
		\item[\normalfont(i)] there exists a $k$-subset $\mathcal{L}\subseteq [n]$ such that $\eta\cdot S_{1}(\boldsymbol{\alpha}_{\mathcal L})= -1$.
		\item[\normalfont(ii)] there exists a $k-1$-subset $\mathcal{I}\subseteq [n]$ such that $\delta-S_{1}(\boldsymbol \alpha_{\mathcal I})-\eta\left(S_{1}^2(\boldsymbol{\alpha}_{\mathcal{I}})-S_{2}(\boldsymbol{\alpha}_{\mathcal{I}})\right)= 0$
        \end{description} 
        \end{description}
\end{theorem}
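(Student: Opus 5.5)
Our plan is to verify the three conditions of Lemma~\ref{Lem:equ NMDS} for the generator matrix $G=G^{\operatorname{TRL}}_{k,n+2}(\boldsymbol{\alpha},\boldsymbol{1},\eta,\delta,k-1)$ one at a time.

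\textbf{Condition (ii).} This says that some $k$ columns of $G$ are dependent, i.e.\ the code is not MDS. By Theorem~\ref{Thm: main MDS, ell k-1}, the code fails to be MDS exactly when there is a $k$-subset $\mathcal L$ with $\eta S_1(\boldsymbol\alpha_{\mathcal L})=-1$, or a $(k-1)$-subset $\mathcal I$ with $\delta-S_1(\boldsymbol\alpha_{\mathcal I})-\eta(S_1^2(\boldsymbol\alpha_{\mathcal I})-S_2(\boldsymbol\alpha_{\mathcal I}))=0$. This is precisely (B).

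\textbf{Condition (i).} We show that any $k-1$ columns of $G$ are always independent, so (i) imposes no constraint. Independence of $k-1$ columns means the dual code has distance at least $k$. Equivalently, there is no nonzero codeword vanishing on $n+2-(k-1)$ prescribed coordinates. This follows from Lemma~\ref{Lem:n-k+1 le d le n-k+3} applied to the code itself, provided we check that no nonzero codeword of weight $\le n-k+2$ is supported so that a $(k-1)$-set of columns is dependent. The cleaner route is direct. Take $k-1$ columns: $s$ evaluation columns at $\alpha_{i_1},\dots,\alpha_{i_s}$, together with $k-1-s$ of the two extra columns.
\begin{itemize}
\item If $s=k-1$, the first $k-2$ rows restricted to these columns already form the Vandermonde block in powers $0,\dots,k-2$. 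Hence the columns are independent.
\item If $s=k-2$ or $s=k-3$, we reduce to a Vandermonde of size $s$ on rows $0,\dots,s-1$ (all $\le k-3$). The extra columns contribute a triangular $2\times2$ or $1\times1$ nonzero block in the last two rows, because the last two rows of the extra columns are $(0,1),(1,\delta)$. So the columns are again independent.
\end{itemize}

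\textbf{Condition (iii).} This requires that every $k+1$ columns have rank $k$, equivalently that the code has no codeword of weight $\le n+2-(k+1)=n-k+1$. Since the code has $d\ge n-k+1$ by Lemma~\ref{Lem:n-k+1 le d le n-k+3}, (iii) is equivalent to $d\ne n-k+1$. By Lemma~\ref{Lem:d=n-k+1 and ell=k-2,k-1}, $d=n-k+1$ iff some $k$-subset $I$ has $\sigma_1(\boldsymbol\alpha_I)=\eta^{-1}$ and $\sigma_2(\boldsymbol\alpha_I)=-\delta\eta^{-1}$. Using $\sigma_1=-S_1$ and $\sigma_2=S_2$, this reads $S_1(\boldsymbol\alpha_I)=-\eta^{-1}$ and $S_2(\boldsymbol\alpha_I)=-\delta\eta^{-1}$. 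Excluding such subsets is exactly (A).

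\textbf{Main obstacle.} The step needing most care is (iii): the rank condition on $k+1$ columns must be matched correctly with the dual-distance formulation. If instead we argue directly on $(k+1)$-column submatrices, we would compute $k\times k$ minors inside each one using Lemma~\ref{Lem:4.1}. That route is messier. So we rely on the duality fact that, once (i) and (ii) hold, (iii) is equivalent to $d(\mathcal C)=n-k+1$ failing.
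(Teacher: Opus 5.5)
Your proof is correct. For condition (iii) of Lemma~\ref{Lem:equ NMDS} it takes a genuinely different route from the paper.

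\textbf{How the paper handles (iii).} The paper checks (iii) directly. It splits the $(k+1)$-subsets $J\subseteq[n+2]$ into four cases according to which of the coordinates $n+1,n+2$ they contain. It computes the $k\times k$ minors via Lemma~\ref{Lem:4.1} together with $S_1(\boldsymbol\alpha_{J\setminus\{j\}})=S_1-\alpha_j$ and $S_2(\boldsymbol\alpha_{J\setminus\{j\}})=S_2-\alpha_jS_1+\alpha_j^2$. It finds that the rank can drop only when $J$ consists of $k$ evaluation coordinates plus $n+2$, with $S_1=-\eta^{-1}$ and $S_2=-\delta\eta^{-1}$.

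\textbf{How you handle (iii).} You use that ``every $k+1$ columns have rank $k$'' is equivalent to $d(\mathcal C)\ge n-k+2$. This holds unconditionally: it is the standard link between $d(\mathcal C)$ and ranks of column sets of $G$, not a duality fact, and it does not need (i) or (ii). You then combine $d\ge n-k+1$ from Lemma~\ref{Lem:n-k+1 le d le n-k+3} with the characterization of $d=n-k+1$ in Lemma~\ref{Lem:d=n-k+1 and ell=k-2,k-1}. With $\sigma_1=-S_1$ and $\sigma_2=S_2$, this is exactly (A).

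\textbf{Trade-offs.} Your route is shorter and shows where (A) comes from: a weight-$(n-k+1)$ codeword in $T_3^{(k-1)}$. However, it rests on the $\ell=k-1$ half of Lemma~\ref{Lem:d=n-k+1 and ell=k-2,k-1}, which the paper only sketches. You should include the short verification:
\begin{itemize}
\item such a codeword needs $f_{k-1}\neq0$, $f_{k-2}=-\delta f_{k-1}$ and $f=\eta f_{k-1}\prod_{i\in I}(x-\alpha_i)$;
\item comparing coefficients of $x^{k-1}$ and $x^{k-2}$ gives $\sigma_1(\boldsymbol\alpha_I)=\eta^{-1}$ and $\sigma_2(\boldsymbol\alpha_I)=-\delta\eta^{-1}$;
\item the converse follows from the same construction.
\end{itemize}
The paper's computation, by contrast, is self-contained and parallels its $\ell=k-2$ proof.

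\textbf{Condition (i).} Your row choice (powers $0,\dots,s-1$ plus the appropriate last row or rows) avoids the factor $\prod_i\alpha_i$ that appears in the paper's minors on rows $[2,k]$. So your argument never uses the hypothesis $\boldsymbol\alpha\subseteq\mathbb F_q^*$.

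\textbf{Two small fixes.} For $s=k-1$ you need the first $k-1$ rows (powers $0,\dots,k-2$), not $k-2$. Also delete the sentence equating independence of $k-1$ columns with ``no nonzero codeword vanishing on $n+2-(k-1)$ prescribed coordinates''. That is a statement about $d(\mathcal C)\ge k$ rather than $d(\mathcal C^\perp)\ge k$, and your direct argument does not use it.
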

\begin{proof}
    Let $\mathcal{C}=\operatorname{TRL}_{k,n+2}(\boldsymbol{\alpha},\eta,\delta,k-1)$ and $G=G_{k,n+2}^{\operatorname{TRL}}(\boldsymbol{\alpha},\eta,\delta,k-1)$. 
Given a column index set $I\subseteq [n+2]$ and a row index set $J\subseteq [k]$, let $G(I,J)$ denote the $|J|\times |I|$ submatrix of $G$ formed by the column set $I$ and the row set $J$. 
In particular, $G_I$ denotes the submatrix $G(I,[k])$. 
We first prove that for any $k-1$-subset $I\subseteq [n+2]$, the matrix $G_I$ has full rank. 
We consider the following three cases.
\begin{itemize}
    \item [(1.1)] For any $k-1$-subset $I\subseteq [n]$, since $\det(G(I,[k-1]))=V(\boldsymbol{\alpha}_I)\neq 0$, the matrix $G_I$ has full rank.
\item [(1.2)] For any $k-2$-subset $I\subseteq [n]$, let $I'=I\cup\{n+1\}$. Since 
\[
\det(G(I',[2,k]))=\prod_{i\in I}\alpha_i\cdot V(\boldsymbol{\alpha}_I)\neq 0,
\]
the matrix $G_{I'}$ has full rank. Let $I''=I\cup\{n+2\}$. Since 
\[
\det(G(I'',[k-1]))=V(\boldsymbol{\alpha}_I)\neq 0,
\]
the matrix $G_{I''}$ has full rank.
\item [(1.3)] For any $k-3$-subset $J\subseteq [n]$, let $J'=J\cup\{n+1,n+2\}$. Since 
\[
\det(G(J',[2,k]))=-\prod_{i\in J}\alpha_i\cdot V(\boldsymbol{\alpha}_{J})\neq 0,
\]
the matrix $G_{J'}$ has full rank. 
\end{itemize}
Therefore, for any $k-1$-subset $I\subseteq [n+2]$, the matrix $G_I$ has full rank.

By Theorem~\ref{Thm: main MDS, ell k-1}, the generator matrix $G$ of the code $\mathcal{C}$ has $k$ linearly dependent columns if and only if $\mathcal{C}$ is not an MDS code, if and only if one of the following two conditions holds:
\begin{itemize}
    \item[(2.1)] there exists a $k$-subset $L\subseteq [n]$ such that $\eta\cdot S_1(\boldsymbol{\alpha}_L)=-1$;
    \item[(2.2)] there exists a $k-1$-subset $I\subseteq [n]$ such that $\delta-S_1(\boldsymbol{\alpha}_I)-\eta\left(S_1^2(\boldsymbol{\alpha}_I)-S_2(\boldsymbol{\alpha}_I)\right)=0$.
\end{itemize}

Finally, we prove that any $k+1$ columns of generator matrix $G$ have rank $k$, if and only if there does not exist a $k$-subset $J\subseteq [n]$ such that
\[
S_1(\boldsymbol{\alpha}_J)=-\eta^{-1},\quad\mbox{and}\quad S_2(\boldsymbol{\alpha}_J)=\eta^{-1}\big(S_1(\boldsymbol{\alpha}_J)-\delta\big)-S_1^2(\boldsymbol{\alpha}_J).
\]

For any $J=\{j_1,j_2,\ldots,j_{k+1}\}\subseteq [n+2]$ and a subset $I\subseteq [k+1]$, where $1\leq j_1<\cdots<j_{k+1}\leq n+2$, let $J_I=J\setminus\{j_i:i\in I\}$. In particular, for $i\in [k+1]$ and $J_i=J\setminus\{j_i\}$. We consider the following four cases.
\begin{itemize}
    \item [(3.1)]  If $j_k<j_{k+1}\leq n$, that is, $J\subseteq [n]$. It is known that for  $1\leq i\leq k+1$, we have
\[
\det(G(J_i,[k]))=V(\boldsymbol{\alpha}_{J_i})+\eta V(\boldsymbol{\alpha}_{J_i})\cdot S_1(\boldsymbol{\alpha}_{J_i})=V(\boldsymbol{\alpha}_{J_i})\cdot\left(1+\eta\cdot S_1(\boldsymbol{\alpha}_{J_i})\right).
\]
Since for $i_1\neq i_2\in J$,
\[
\big(1+\eta S_1(\boldsymbol{\alpha}_{J_{i_1}})\big)-\big(1+\eta S_1(\boldsymbol{\alpha}_{J_{i_2}})\big)=\eta(\boldsymbol{\alpha}_{j_{i_2}}-\boldsymbol{\alpha}_{j_{i_1}})\neq 0,
\]
there does not exist $i_1\neq i_2$ such that $\det(G(J_{i_1},[k]))=\det(G(J_{i_2},[k]))=0$. Therefore, there exists a $k$-subset $J'\subseteq J$ such that $\det(G(J',[k]))\neq 0$, that is, the rank of the matrix $G_J$ is $k$.

\item [(3.2)]  If $(j_k,j_{k+1})=(n+1,n+2)$, let $J_1^{'}=J_1\setminus\{n+1,n+2\}$. Then
\[
\det(G(J_1,[k]))=-V(\boldsymbol{\alpha}_{J_1^{'}})\neq 0.
\]
Therefore, the rank of the matrix $G_J$ is $k$.

\item[(3.3)]  If $j_k\leq n$ and $j_{k+1}=n+1$, let $J_1^{''}=J_{1}\setminus\{n+1\}$. Then
\[
\det(G(J_{1},[k]))=V(\boldsymbol{\alpha}_{J_1^{''}})\neq 0.
\]
Therefore, the rank of the matrix $G_J$ is $k$.

\item [(3.4)] If $j_k\leq n$ and $j_{k+1}=n+2$, combining Cases (3.1), (3.2), and (3.3), we will prove that the rank of $G_J$ is less than $k$ if and only if
\[
S_1(\boldsymbol{\alpha}_{J_{k+1}})=-\eta^{-1}\quad\text{and}\quad S_2(\boldsymbol{\alpha}_{J_{k+1}})= \eta^{-1}\big(S_1(\boldsymbol{\alpha}_{J_{k+1}})-\delta\big)+S_1^2(\boldsymbol{\alpha}_{J_{k+1}}).
\]

For $1\leq i\leq k$, let $J_{i,k+1}=J\setminus\{j_i,j_{k+1}\}$. Let $S_1=S_1(\boldsymbol{\alpha}_{J_{k+1}})$ and $S_2=S_2(\boldsymbol{\alpha}_{J_{k+1}})$. Then $$S_1(\boldsymbol{\alpha}_{J_{i,k+1}})=S_1-\alpha_{j_i}$$
and
\begin{align*}
S_2(\boldsymbol{\alpha}_{J_{i,k+1}})&=S_2(\boldsymbol{\alpha}_{J_{k+1}})-\alpha_{j_i}\cdot S_1(\boldsymbol{\alpha}_{J_{i,k+1}})\\
&=S_2-\alpha_{j_i}\cdot (S_1-\alpha_{j_i})=S_2-\alpha_{j_i}\cdot S_1+\alpha_{j_i}^2.
\end{align*}

Thus,
\begin{align*}
\det(G(J_{i},[k]))&=\delta\cdot V(\boldsymbol{\alpha}_{J_{i,k+1}})-\Big(V(\boldsymbol{\alpha}_{J_{i,k+1}})\cdot S_1(\boldsymbol{\alpha}_{J_{i,k+1}})+\eta V(\boldsymbol{\alpha}_{J_{i,k+1}})\cdot\begin{vmatrix} S_1(\boldsymbol{\alpha}_{J_{i,k+1}}) & 1 \\ S_2(\boldsymbol{\alpha}_{J_{i,k+1}}) & S_1(\boldsymbol{\alpha}_{J_{i,k+1}}) \end{vmatrix}\Big)\\
&=V(\boldsymbol{\alpha}_{J_{i,k+1}})\Big(\delta-S_1(\boldsymbol{\alpha}_{J_{i,k+1}})-\eta\big(S_1^2(\boldsymbol{\alpha}_{J_{i,k+1}})-S_2(\boldsymbol{\alpha}_{J_{i,k+1}})\big)\Big)\\
&=V(\boldsymbol{\alpha}_{J_{i,k+1}})\Big(\delta-S_1+\eta(S_2-S_1^2)+\alpha_{j_i}(\eta S_1+1)\Big)
\end{align*}
for all $1\leq i\leq k$ and
\[
\det(G(J_{k+1},[k]))=V(\boldsymbol{\alpha}_{J_{k+1}})+\eta V(\boldsymbol{\alpha}_{J_{k+1}})\cdot S_1(\boldsymbol{\alpha}_{J_{k+1}})=V(\boldsymbol{\alpha}_{J_{k+1}})(1+\eta S_1).
\]

Since $V(\boldsymbol{\alpha}_{J_{k+1}})\neq 0$ and $V(\boldsymbol{\alpha}_{J_{i,k+1}})\neq 0$ for all $1\leq i\leq k$, the rank of the matrix $G_J$ is less than $k$ if and only if $\det(G(J_i,[k]))=0$ for all $1\leq i\leq k$ and $\det(J_{k+1},[k])=0$, if and only if
\[
S_1=-\eta^{-1}\quad\text{and}\quad \delta-S_1+\eta(S_2-S_1^2)=0,
\]
if and only if 
\[S_1=-\eta^{-1}\quad\mbox{and}\quad S_2=-\delta\eta^{-1}.\]
\end{itemize}

 By Lemma~\ref{Lem:equ NMDS} and combining the above cases, we conclude that the code $\mathcal C$ is an NMDS code if and only if
 \begin{itemize}
          \item [(A)] no $k$-subset $J\subseteq [n]$ satisfies $S_1(\boldsymbol{\alpha}_J)=-\eta^{-1},S_{2}(\boldsymbol{\alpha}_{J})=-\delta\eta^{-1}$;
           and 
           \item [(B)] at least one of (i) or (ii) holds:
           \begin{itemize}
		\item[(i)] there exists a $k$-subset $\mathcal{L}\subseteq [n]$ such that $\eta\cdot S_{1}(\boldsymbol{\alpha}_{\mathcal L})= -1$.
		\item[(ii)] there exists a $k-1$-subset $\mathcal{I}\subseteq [n]$ such that $\delta-S_{1}(\boldsymbol \alpha_{\mathcal I})-\eta\left(S_{1}^2(\boldsymbol{\alpha}_{\mathcal{I}})-S_{2}(\boldsymbol{\alpha}_{\mathcal{I}})\right)= 0$
        \end{itemize}
      \end{itemize}

\end{proof}

\begin{theorem}\label{Thm: NMDS,ell=k-2}
      Let $n$ and $k$ be integers satisfying $3\leq k<n$. Let $\boldsymbol{\alpha}=\{\alpha_1,\ldots,\alpha_n\}\subseteq \mathbb{F}_q^*$ with $\alpha_i\neq\alpha_j$ for $i\neq j$, and let $\eta,\delta\in\mathbb{F}_q^*$. Then the code $\operatorname{TRL}_{k,n+2}(\boldsymbol{\alpha},\eta,\delta,k-2)$ is an NMDS code if and only if the following conditions hold:
\begin{description}
    \item [\normalfont(A)]For every $(k-1)$-subset $I\subseteq[n]$, $\bigl(S_1(\boldsymbol{\alpha}_I),S_2(\boldsymbol{\alpha}_I)\bigr)\in\mathbb{F}_q^2\setminus\{(0,\eta^{-1})\}$.
    \item [\normalfont(B)]For every $k$-subset $J\subseteq[n]$, $\bigl(S_1(\boldsymbol{\alpha}_J),S_2(\boldsymbol{\alpha}_J)\bigr)\in\mathbb{F}_q^2\setminus\{(0,\eta^{-1}),(\eta^{-1}\delta^{-1},\eta^{-1})\}$.
    \item [\normalfont(C)] At least one of the following conditions holds:
    \begin{description}
        \item [(i)] There exists a $k$-subset $L\subseteq[n]$ such that $\eta S_2(\boldsymbol{\alpha}_L)=1$.
        \item [(ii)] There exists a $(k-1)$-subset $I\subseteq[n]$ such that $1+\eta\bigl(S_1^2(\boldsymbol{\alpha}_I)-S_2(\boldsymbol{\alpha}_I)\bigr)\in\{0,\delta^{-1}S_1(\boldsymbol{\alpha}_I)\}$.
     \end{description} \end{description}
\end{theorem}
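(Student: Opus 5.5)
The plan is to verify the three conditions of Lemma~\ref{Lem:equ NMDS} for $G=G^{\operatorname{TRL}}_{k,n+2}(\boldsymbol{\alpha},\eta,\delta,k-2)$. Rather than expanding minors as in the proof of Theorem~\ref{Thm: NMDS,ell=k-1}, I would use the polynomial (kernel) description throughout. For a column set $J\subseteq[n+2]$, the rank of $G_J$ equals $k-\dim K_J$, where $K_J$ is the space of $f\in S_{k,k-2,\eta}$ whose codeword vanishes on $J$.

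A polynomial $f$ of degree at most $k$ lies in $S_{k,k-2,\eta}$ iff $[x^k]f=\eta\,[x^{k-2}]f$. Column $n+1$ adds the constraint $[x^{k-1}]f=0$. Column $n+2$ adds the constraint $[x^{k-2}]f+\delta[x^{k-1}]f=0$. If $f$ vanishes on $\boldsymbol{\alpha}_{J\cap[n]}$ with $|J\cap[n]|=m$, write $f=g(x)\prod_{i\in J\cap[n]}(x-\alpha_i)$ with $\deg g\le k-m$. The problem then reduces to counting linear constraints on the coefficients of $g$, which are expressed through $\sigma_1,\sigma_2$ of the chosen subset.

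Condition (i) of Lemma~\ref{Lem:equ NMDS} requires $\dim K_J=1$ for every $(k-1)$-set $J$. I would treat the three types of $J$ separately.
\begin{itemize}
\item For $J\subseteq[n]$, write $g=ax+b$. The single constraint is $a=\eta(a\sigma_2+b\sigma_1)$. So $\dim K_J=2$ exactly when $\sigma_1=0$ and $\eta\sigma_2=1$, i.e.\ $(S_1,S_2)=(0,\eta^{-1})$. This is the exclusion in (A).
\item For $k-2$ points together with $n+1$, write $g=ax^2+bx+c$. The constraints are $b+a\sigma_1=0$ and $a=\eta(a\sigma_2+b\sigma_1+c)$. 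They are independent because $c$ occurs only in the second, so $\dim K_J=1$.
\item For $k-2$ points together with $n+2$, the constraint rows are $(\sigma_2+\delta\sigma_1,\ \sigma_1+\delta,\ 1)$ and $(1-\eta\sigma_2,\ -\eta\sigma_1,\ -\eta)$. Proportionality would force $\delta=0$, so they are independent.
\item For $k-3$ points together with both extra columns, we get $f_{k-1}=f_{k-2}=0$. Then $\deg f\le k-3$, and clearly $\dim K_J=1$.
\end{itemize}

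Condition (iii) of Lemma~\ref{Lem:equ NMDS} requires $K_J=0$ for every $(k+1)$-set $J$.
\begin{itemize}
\item If $J\subseteq[n]$, a nonzero $f$ of degree at most $k$ cannot have $k+1$ roots.
\item If $J$ consists of $k-1$ points and both extra columns, then $\deg f\le k-3$ and $f=0$.
\item If $J$ consists of $k$ points and $n+1$, then $f=c\prod(x-\alpha_i)$. A nonzero such $f$ exists iff $\sigma_1=0$ and $\eta\sigma_2=1$.
\item If $J$ consists of $k$ points and $n+2$, comparing the coefficients $c=\eta c\sigma_2$ and $c\sigma_2=-\delta c\sigma_1$ gives $\eta S_2=1$ and $S_1=\eta^{-1}\delta^{-1}$.
\end{itemize}
These two exceptional pairs are exactly the ones excluded in (B).

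Condition (ii) of Lemma~\ref{Lem:equ NMDS} says that the code is not MDS. By Theorem~\ref{Thm: main MDS, ell k-2}, this holds iff some $k$-subset has $\eta S_2=1$, or some $(k-1)$-subset violates one of the two inequalities in (ii) of that theorem. Rewriting $\delta(1+\eta(S_1^2-S_2))-S_1=0$ as $1+\eta(S_1^2-S_2)=\delta^{-1}S_1$ yields exactly (C).

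The main obstacle is the bookkeeping in condition (i): one must check that the constraints stay independent in the cases with extra columns, since otherwise spurious conditions would appear. The independence check for column $n+2$, which uses $\delta\ne0$, is the delicate step. I would also double-check the sign conventions between $\sigma_j$ and $S_j$; these give $S_1=\eta^{-1}\delta^{-1}$ rather than its negative.
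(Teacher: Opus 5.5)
Your proof is correct. It reaches the same three reductions as the paper: condition (i) of Lemma~\ref{Lem:equ NMDS} is equivalent to (A), condition (iii) to (B), and condition (ii), via Theorem~\ref{Thm: main MDS, ell k-2}, to (C). The route differs, however.

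The paper works with explicit minors. It evaluates generalized Vandermonde determinants through Lemma~\ref{Lem:4.1}: $\det G(I,[k]\setminus\{i\})$ for $(k-1)$-sets, and the $k$ minors $\det G(J_i,[k])$ for $(k+1)$-sets, written as affine functions of $\alpha_{j_i}$. It then argues about which of these can vanish simultaneously. Even the case $J\subseteq[n]$ with $k+1$ columns is handled indirectly there: the paper picks a $(k-1)$-subset with $S_1\neq0$ and compares two $k$-minors.

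You instead compute the left kernel $K_J$, divide out the known roots, and reduce every case to a linear system in at most three unknowns, namely the coefficients of the cofactor $g$. This makes the $J\subseteq[n]$ case plain root counting and makes the independence checks transparent.

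A side benefit is that your argument never uses $\alpha_i\neq0$. The paper's cases (1.1) and (1.3) rely on $S_{k-1}(\boldsymbol{\alpha}_I)=\prod_{i\in I}\alpha_i\neq0$, but only because of its particular choice of minors. Your route therefore shows that the hypothesis $\boldsymbol{\alpha}\subseteq\mathbb{F}_q^*$ is unnecessary for this characterization.

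Both you and the paper lean on Theorem~\ref{Thm: main MDS, ell k-2}, which the paper states without proof. Your kernel method proves it in a few lines:
\begin{itemize}
\item $k$ points give the condition $\eta S_2=1$;
\item $k-1$ points with column $n+1$ force $b=-a\sigma_1$ and give $1+\eta(S_1^2-S_2)=0$;
\item $k-1$ points with column $n+2$ give the $2\times2$ determinant $\delta\bigl(1+\eta(S_1^2-S_2)\bigr)-S_1=0$;
\item $k-2$ points with both extra columns are automatically full rank.
\end{itemize}
Adding this would make your argument fully self-contained.
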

\begin{proof}
    Let $\mathcal{C}=\operatorname{TRL}_{k,n+2}(\boldsymbol{\alpha},\eta,\delta,k-2)$ and $G=G_{k,n+2}^{\operatorname{TRL}}(\boldsymbol{\alpha},\eta,\delta,k-2)$. 
Given a column index set $I\subseteq [n+2]$ and a row index set $J\subseteq [k]$, let $G(I,J)$ denote the $|J|\times |I|$ submatrix of $G$ formed by the set of columns $I$ and the set of rows $J$. 
In particular, $G_I$ denotes the submatrix $G(I,[k])$. 
We first prove that any $k-1$ columns of the generator matrix $G$ are linearly independent if and only if for any $k-1$-subset ${I}\subseteq [n]$, we have $S_1(\boldsymbol{\alpha}_{I})\neq 0$ or $S_{2}(\boldsymbol{\alpha}_{I})\neq\eta^{-1}$.
We consider the following three cases.
\begin{itemize}
    \item [(1.1)] For any $k-1$-subset $I\subseteq [n]$, we have \[\det(G(I,[k-1]))=V(\boldsymbol{\alpha}_{I})\left(1+\eta\cdot (S_{1}^2(\boldsymbol{\alpha}_{I})-S_2(\boldsymbol{\alpha}_{I})\right)\ ,\ 
    \det(G(I,[k]\setminus\{k-1\}))=V(\boldsymbol{\alpha}_{I})\cdot S_{1}(\boldsymbol{\alpha}_{I})
    \]
    and
    \[
    \det(G(I,[k]\setminus\{i\})=V(\boldsymbol{\alpha}_{I})\cdot\left(S_{k-i}(\boldsymbol{\alpha}_{I})-\eta\cdot S_2(\boldsymbol{\alpha}_{I})\cdot S_{k-i}(\boldsymbol{\alpha}_{I})+\eta\cdot S_1(\boldsymbol{\alpha}_{I})\cdot S_{k-i+1}(\boldsymbol{\alpha}_{I})\right)
    \]
    for all $1\leq i\leq k-2$, where $S_{k}(\boldsymbol{\alpha}_{I})=0$. We prove that the matrix $G_I$ has full rank if and only if $S_{1}(\boldsymbol{\alpha}_{I})\neq 0$ or $S_2(\boldsymbol{\alpha}_{I})\neq\eta^{-1}$.

    If $S_{1}(\boldsymbol{\alpha}_{I})\neq 0$ or $S_2(\boldsymbol{\alpha}_{I})\neq\eta^{-1}$, since $S_{k-1}(\boldsymbol{\alpha}_{I})=\prod\limits_{i\in I}\alpha_{i}\neq 0$ and $V(\boldsymbol{\alpha}_{I})\neq 0$, then $\det(G(I,[k]\setminus\{k-1\}))\neq 0$ or $\det(G(I,[k]\setminus\{1\}))\neq 0$. Thus, the matrix $G_I$ has full rank.

    If  the matrix $G_I$ has full rank, then there exists $1\leq i\leq k$ such that $\det(G(I,[k]\setminus\{i\})\neq 0$. If $\det(G(I,[k]\setminus\{k-1\})\neq 0$ or $\det(G(I,[k]\setminus\{1\})\neq 0$, then $S_{1}(\boldsymbol{\alpha}_{I})\neq 0$ or $S_2(\boldsymbol{\alpha}_{I})\neq\eta^{-1}$. If $\det(G(I,[k]\setminus\{1\})=\det(G(I,[k]\setminus\{k-1\})=0$, then $S_{1}(\boldsymbol{\alpha}_{I})=0$ and $S_2(\boldsymbol{\alpha}_{I})=\eta^{-1}$. Thus,
$$\det(G(I,[k-1]))=V(\boldsymbol{\alpha}_{I})\left(1+\eta\cdot (S_{1}^2(\boldsymbol{\alpha}_{I})-S_2(\boldsymbol{\alpha}_{I})\right)=0
    $$
    and
    \begin{equation}
        \begin{aligned}
    \det(G(I,[k]\setminus\{i\})&=V(\boldsymbol{\alpha}_{I})\cdot\left(S_{k-i}(\boldsymbol{\alpha}_{I})-\eta\cdot S_2(\boldsymbol{\alpha}_{I})\cdot S_{k-i}(\boldsymbol{\alpha}_{I})+\eta\cdot S_1(\boldsymbol{\alpha}_{I})\cdot S_{k-i+1}(\boldsymbol{\alpha}_{I})\right)\\
    &=V(\boldsymbol{\alpha}_{I})\cdot S_{k-i}(\boldsymbol{\alpha}_{I})\cdot\left(1-\eta\cdot S_{2}(\boldsymbol{\alpha}_{I})\right)=0
        \end{aligned}
    \end{equation}
    for all $i\in [k]\setminus\{1,k-1,k\}$,
     which contradicts the fact that the matrix $G_I$ has full rank.  Therefore, the matrix $G_I$ has full rank, then $\det(G(I,[k]\setminus\{1\})\neq 0$ or $\det(G(I,[k]\setminus\{k-1\})\neq 0$, which implies $S_{1}(\boldsymbol{\alpha}_{I})\neq 0$ or $S_2(\boldsymbol{\alpha}_{I})\neq\eta^{-1}$.
\item [(1.2)] For any $k-2$-subset $I\subseteq [n]$, let $I'=I\cup\{n+1\}$. Since 
\[
\det(G(I',[k]\setminus\{k-1\}))=V(\boldsymbol{\alpha}_I)\neq 0,
\]
the matrix $G_{I'}$ has full rank. Let $I''=I\cup\{n+2\}$. Since 
\[
\det(G(I'',[k-1]))=V(\boldsymbol{\alpha}_I)\neq 0,
\]
the matrix $G_{I''}$ has full rank.
\item [(1.3)] For any $k-3$-subset $J\subseteq [n]$, let $J'=J\cup\{n+1,n+2\}$. Since 
\[
\det(G(J',[2,k]))=-V(\boldsymbol{\alpha}_{J})\cdot\prod_{i\in J}\alpha_i\neq 0,
\]
the matrix $G_{J'}$ has full rank.
\end{itemize}

Therefore, any $k-1$ columns of the generator matrix $G$ are linearly independent if and only if for any $k-1$-subset ${I}\subseteq [n]$, we have $S_1(\boldsymbol{\alpha}_{I})\neq 0$ or $S_{2}(\boldsymbol{\alpha}_{I})\neq\eta^{-1}$.


By Theorem~\ref{Thm: main MDS, ell k-2}, the generator matrix $G$ of the code $\mathcal C$ has $k$ linearly dependent columns if and only if $C$ is not an MDS code, if and only if one of the following two conditions holds:
      \begin{enumerate}
      \item[(2.1)] there exists a $k$-subset ${L}\subseteq [n]$ such that $\eta\cdot S_{2}(\boldsymbol{\alpha}_{ L})= 1$.
		\item[(2.2)] there exists a $k-1$-subset ${I}\subseteq [n]$ such that $1+\eta(S_{1}^2(\boldsymbol{\alpha}_{ I})-S_{2}(\boldsymbol \alpha_{ I}))= 0$ or
        \[
        \delta\left(1+\eta(S_{1}^2(\boldsymbol \alpha_{ I})- S_{2}(\boldsymbol \alpha_{ I}))\right)-S_{1}(\boldsymbol \alpha_{ I})= 0.
        \]
        \end{enumerate}

Finally, we prove that any $k+1$ columns of the generator matrix $G$ have rank $k$, if and only if for any $k$-subset $J\subseteq [n]$, we have $[S_1(\boldsymbol{\alpha}_{J})\neq 0\ \mbox{or}\ S_2(\boldsymbol{\alpha}_{J})\neq \eta^{-1}]$ and $[S_1(\boldsymbol{\alpha}_{J})\neq \eta^{-1}\delta^{-1}\ \mbox{or}\ S_2(\boldsymbol{\alpha}_{J})\neq\eta^{-1}]$.

For any $J=\{j_1,j_2,\ldots,j_{k+1}\}\subseteq [n+2]$ and a subset $I\subseteq [k+1]$, where $1\leq j_1<\cdots<j_{k+1}\leq n+2$, let $J_I=J\setminus\{j_i:i\in I\}$. In particular, for $i^{'}\neq i^{''} \in [k+1]$, let $J_i=J\setminus\{j_i\}$ and $J_{i^{'},i^{''}}=J\setminus\{j_{i^{'}},j_{i^{''}}\}$. Let $S_1=S_1(\boldsymbol{\alpha}_{J_{k+1}})$ and $S_2=S_2(\boldsymbol{\alpha}_{J_{k+1}})$. Then
\[
S_1(\boldsymbol{\alpha}_{J_{i,k+1}})=S_1-\alpha_{j_{i}}\quad\mbox{and}\quad
S_2(\boldsymbol{\alpha}_{J_{i,k+1}})=S_2-\alpha_{j_{i}}\cdot S_1+\alpha_{j_{i}}^2.
\]

Next, we consider the following four cases.
\begin{itemize}
    \item [(3.1)]  If $j_k<j_{k+1}\leq n$, that is, $J\subseteq [n]$. It is known that for  $1\leq i\leq k+1$, we have
\[
\det(G(J_i,[k]))=V(\boldsymbol{\alpha}_{J_i})-\eta V(\boldsymbol{\alpha}_{J_i})\cdot S_2(\boldsymbol{\alpha}_{J_i})=V(\boldsymbol{\alpha}_{J_i})\cdot\left(1-\eta\cdot S_2(\boldsymbol{\alpha}_{J_i})\right).
\]
Firstly, there exists a $k-1$-subset $J^{'}\subseteq J$ such that $S_{1}(\boldsymbol{\alpha}_{J^{'}})\neq 0$. For simplicity, let $J^{\prime}=J\setminus\{j_1,j_2\}$. Let $J^{(1)}=J^{\prime}\cup\{j_1\}$ and $J^{(2)}=J^{\prime}\cup\{j_2\}$. If $\det(G(J^{(1)},[k]))=\det(G(J^{(2)},[k]))=0$, then 
\[
1-\eta\cdot S_2(\boldsymbol{\alpha}_{J^{(1)}})=1-\eta\cdot S_2(\boldsymbol{\alpha}_{J^{(2)}}),
\]
that is, $\eta(\alpha_{j_{1}}-\alpha_{j_{2}})\cdot S_{1}(\boldsymbol{\alpha}_{J^{'}})=0$, which contradicts $\eta,\alpha_{j_{1}}-\alpha_{j_{2}},S_{1}(\boldsymbol{\alpha}_{J^{'}})\in\mathbb{F}_{q}^{*}$.
Thus, there exists $1\leq t\leq 2$ such that $\det(G(J^{(t)},[k])\neq 0$. In other words, the rank of matrix $G_{J}$ is $k$.

\item [(3.2)]  If $(j_k,j_{k+1})=(n+1,n+2)$, let $J_1^{'}=J_1\setminus\{n+1,n+2\}$. Then
\[
\det(G(J_1,[k]))=-V(\boldsymbol{\alpha}_{J_1^{'}})\neq 0.
\]
Therefore, the rank of the matrix $G_J$ is $k$.

\item[(3.3)]  If $j_k\leq n$ and $j_{k+1}=n+1$, then 
\begin{equation*}
    \begin{aligned}
        \det(G(J_{i},[k]))&=V(\boldsymbol{\alpha}_{J_{i,k+1}})\left(1+\eta\cdot \left(S_{1}^2(\boldsymbol{\alpha}_{J_{i,k+1}})-S_{2}(\boldsymbol{\alpha}_{J_{i,k+1}})\right)\right)\\
        &=V(\boldsymbol{\alpha}_{J_{i,k+1}})\cdot \left(1+\eta\left((S_1-\alpha_{j_i}\right)^2-(S_2-\alpha_{j_{i}}\cdot S_1+\alpha_{j_{i}}^2))\right)\\
        &=V(\boldsymbol{\alpha}_{J_{i,k+1}})\cdot (1+\eta(S_1^2-\alpha_{j_i}\cdot S_1-S_2))\\
        &=V(\boldsymbol{\alpha}_{J_{i,k+1}})(1+\eta(S_1^2-S_2)-\alpha_{j_i}\cdot\eta\cdot S_1)
    \end{aligned}
\end{equation*}
and
\begin{equation*}
    \det(G(J_{k+1},[k]))=V(\boldsymbol{\alpha}_{J_{k+1}})\left(1-\eta\cdot S_2\right).
\end{equation*}
We now prove that the rank of the matrix $G_J$ is $k$ if and only if $S_1\neq 0$ or $S_2\neq\eta^{-1}$. If $S_1\neq 0$, then there exists $1\leq i\leq k$ such that $1+\eta(S_1^2-S_2)-\alpha_{j_i}\cdot\eta\cdot S_1\neq 0$. Thus, there exists $1\leq i\leq k$ such that $\det(G(J_i,[k]))\neq 0$. If $S_2\neq\eta^{-1}$, then $\det(G(J_{k+1},[k]))\neq 0$. Combining these two cases, we know that when $S_1\neq 0$ or $S_2\neq \eta^{-1}$,  the rank of the matrix $G_J$ is $k$. If the rank of the matrix $G_J$ is $k$, then there exists $1\leq i\leq k+1$ such that $\det(G(J_{i},[k]))\neq 0$. If $\det(G(J_{k+1},[k]))\neq 0$, then $S_2\neq \eta^{-1}$. If $\det(G(J_{k+1},[k]))=0$ and there exists $1\leq i\leq k$ such that $\det(G(J_{i},[k]))\neq 0$. Then $S_2=\eta^{-1}$ and $1+\eta(S_1^2-S_2)-\alpha_{j_i}\cdot\eta\cdot S_1\neq 0$. Thus,$\eta\cdot S_1(S_1-\alpha_{j_i})\neq 0$. So $S_1\neq 0$. In summary, we have proved that the rank of the matrix $G_J$ is $k$ if and only if $S_1\neq 0$ or $S_2\neq\eta^{-1}$.


\item [(3.4)] If $j_k\leq n$ and $j_{k+1}=n+2$, then
\begin{align*}
\det(G(J_{i},[k]))&=\delta\cdot V(\boldsymbol{\alpha}_{J_{i,k+1}})\cdot\left(1+\eta\left(S_{1}^2(\boldsymbol{\alpha}_{J_{i,k+1}})-S_{2}(\boldsymbol{\alpha}_{J_{i,k+1}})\right)\right)-V(\boldsymbol{\alpha}_{J_{i,k+1}})\cdot S_{1}(\boldsymbol{\alpha}_{J_{i,k+1}})\\
&=V(\boldsymbol{\alpha}_{J_{i,k+1}})\left(\delta+\delta\eta(S_1^2-S_2)-S_1+\alpha_{j_i}\cdot(1-\eta\delta\cdot S_1)\right)
\end{align*}

and

\[
\det(G(J_{k+1},[k]))=V(\boldsymbol{\alpha}_{J_{k+1}})\cdot (1-\eta\cdot S_2).
\]
We now prove that the rank of the matrix $G_J$ is $k$ if and only if $S_1\neq \delta^{-1}\eta^{-1}$ or $S_2\neq\eta^{-1}$.  
 If $S_1\neq \eta^{-1}\delta^{-1}$, since $k-1\geq 2$, there exists $1\leq i\leq k$ such that $\delta+\delta\eta(S_1^2-S_2)-S_1+\alpha_{j_i}\cdot(1-\eta\delta\cdot S_1)\neq 0$. Thus, there exists $1\leq i\leq k$ such that $\det(G(J_i,[k]))\neq 0$. If $S_2\neq\eta^{-1}$, then $\det(G(J_{k+1},[k]))\neq 0$. Combining these two cases, we know that when $S_1\neq \eta^{-1}\delta^{-1}$ or $S_2\neq \eta^{-1}$,  the rank of the matrix $G_J$ is $k$. If the rank of the matrix $G_J$ is $k$, then there exists $1\leq i\leq k+1$ such that $\det(G(J_{i},[k]))\neq 0$. If $\det(G(J_{k+1},[k]))\neq 0$, then $S_2\neq \eta^{-1}$. If $\det(G(J_{k+1},[k]))=0$ and there exists $1\leq i\leq k$ such that $\det(G(J_{i},[k]))\neq 0$, then $S_2=\eta^{-1}$ and $\delta+\delta\eta(S_1^2-S_2)-S_1+\alpha_{j_i}\cdot(1-\eta\delta\cdot S_1)\neq 0$. Thus, $(\delta\eta S_1-1)(S_1-\alpha_{j_i})\neq 0$. So $S_1\neq\eta^{-1}\delta^{-1}$. In summary, we have proved that the rank of the matrix $G_J$ is $k$ if and only if $S_1\neq \eta^{-1}\delta^{-1}$ or $S_2\neq\eta^{-1}$.
\end{itemize}

 By Lemma~\ref{Lem:equ NMDS} and combining the above cases, we conclude that the code $\mathcal{C}$ is an NMDS code if and only if
     \begin{itemize}
      \item [(A)] for any $k-1$-subset $I\subseteq [n]$, we have $S_{1}(\boldsymbol{\alpha}_{I})\neq 0$ or $S_2(\boldsymbol{\alpha}_{I})\neq\eta^{-1}$.
          \item [(B)] for any $k$-subset $J\subseteq [n]$, we have 
      $[S_1(\boldsymbol{\alpha}_{J})\neq 0$ or $S_2(\boldsymbol{\alpha}_{J})\neq \eta^{-1}]$ and $$[S_1(\boldsymbol{\alpha}_{J})\neq \eta^{-1}\delta^{-1}\ \mbox{or}\  S_2(\boldsymbol{\alpha}_{J})\neq \eta^{-1}];$$
       \item [(C)] at least one of (i) or (ii) holds:
      \begin{itemize}
      \item[(i)] there exists a $k$-subset $L\subseteq [n]$ such that $\eta\cdot S_{2}(\boldsymbol{\alpha}_{ L})= 1$.
		\item[(ii)] there exists a $k-1$-subset ${I}\subseteq [n]$ such that $1+\eta(S_{1}^2(\boldsymbol{\alpha}_{ I})-S_{2}(\boldsymbol \alpha_{ I}))= 0$ or
        \[
        \delta\left(1+\eta(S_{1}^2(\boldsymbol \alpha_{ I})- S_{2}(\boldsymbol \alpha_{ I}))\right)-S_{1}(\boldsymbol \alpha_{ I})= 0.
        \]
        \end{itemize}
      \end{itemize}

\end{proof}

Finally, we give examples of codes $\operatorname{TRL}_{k,n+2}(\boldsymbol{\alpha},\eta,\delta,\ell)$ that are NMDS codes for $k-2\leq\ell\leq k-1$.

\begin{example}
   
   \begin{description}
   \item [\normalfont(1)] Let $n=11,k=5,\eta=\delta=1$ and $\boldsymbol{\alpha}=\{1,2,\cdots,11\}\subseteq\mathbb{F}_{37}$. For $\ell=4$, Magma verifies that there does not exist a $5$-subset $J\subseteq [11]$ such that $S_1(\boldsymbol{\alpha}_J)=-1,S_{2}(\boldsymbol{\alpha}_{J})=-1$ and there exists a $5$-subset $\mathcal{I}=\{3,6,8,9,10\}\subseteq [11]$ such that $S_{1}(\boldsymbol{\alpha}_{I})=-1$. Therefore, by Theorem~\ref{Thm: NMDS,ell=k-1}, the code $\operatorname{TRL}_{5,13}(\boldsymbol{\alpha},\eta,\delta,4)$ is an NMDS code. Additionally, direct verification by Magma shows that the minimum distances of this code and its dual code are $n-k+2=8$ and $k=5$, respectively. Thus, the codes $\operatorname{TRL}_{5,13}(\boldsymbol{\alpha},\eta,\delta,4)$ are  NMDS codes with parameters $[13,5,8]$ over $\mathbb{F}_{37}$.
   \item [\normalfont(2)] Let $n=11,k=5,\eta=3,\delta=1$ and $\boldsymbol{\alpha}=\{1,2,\cdots,11\}\subseteq\mathbb{F}_{37}$. For $\ell=3$, on the one hand, Magma verifies that for any $4$-subset $I\subseteq [n]$, we have $S_{1}(\boldsymbol{\alpha}_{I})\neq 0$ or $S_2(\boldsymbol{\alpha}_{I})\neq\eta^{-1}$ and 
   for any $5$-subset $J\subseteq [11]$, we have
      $[S_1(\boldsymbol{\alpha}_{J})\neq 0$ or $S_2(\boldsymbol{\alpha}_{J})\neq 25]$ and $[S_1(\boldsymbol{\alpha}_{J})\neq 25$ or $S_2(\boldsymbol{\alpha}_{J})\neq 25]$. On the other hand, 
      there exists a $5$-subset $\mathcal{L}=\{3,4,5,7,10\}\subseteq [11]$ such that $ S_{2}(\boldsymbol{\alpha}_{\mathcal L})= 25$.
		 Therefore, by Theorem~\ref{Thm: NMDS,ell=k-2}, the code $\operatorname{TRL}_{5,13}(\boldsymbol{\alpha},\eta,\delta,3)$ is an NMDS code. Additionally, direct verification by Magma shows that the minimum distances of this code and its dual code are $n-k+2=8$ and $k=5$, respectively. Thus, the codes $\operatorname{TRL}_{5,13}(\boldsymbol{\alpha},\eta,\delta,3)$ are  NMDS codes with parameters $[13,5,8]$ over $\mathbb{F}_{37}$.
   \end{description}
\end{example}

\section{Conclusion}
In this paper, we introduce the concept of $(\mathcal{L},\mathcal{P})$-$\operatorname{TRL}$ codes, investigate the minimum distance of a class of twisted Roth–Lempel codes $\operatorname{TRL}_{k,n+2}(\boldsymbol{\alpha},\eta,\delta,\ell)$ and use these results to obtain additional constructions of non-RS MDS codes. Specifically, our main contributions are summarized as follows.
\begin{itemize}
    \item   We determine the necessary and sufficient conditions for the code $\operatorname{TRL}_{k,n+2}(\boldsymbol{\alpha},\eta,\delta,\ell)$ to have minimum distance $n-k$ for $0\leq\ell\leq k-3$, and determine the necessary and sufficient conditions for the code to have minimum distance $n-k+1$ for $0\leq\ell\leq k-1$. 
    \item  We determine the necessary and sufficient conditions for the code $\operatorname{TRL}_{k,n+2}(\boldsymbol{\alpha},\eta,\delta,\ell)$ to be an MDS code for $0\leq\ell\leq k-1$.
    \item We show that the dimension of the Schur square of the code $\operatorname{TRL}_{k,n+2}(\boldsymbol{\alpha},\eta,\delta,\ell)$ is at least $2k+1$, and thus the code $\operatorname{TRL}_{k,n+2}(\boldsymbol{\alpha},\eta,\delta,\ell)$ is a non-RS code inequivalent to the corresponding RL code.
    
    \item Based on the TRL framework, we construct a family of non-RS MDS codes within the TRL class.
    \item  We determine the necessary and sufficient conditions for the code $\operatorname{TRL}_{k,n+2}(\boldsymbol{\alpha},\eta,\delta,\ell)$ to be an NMDS code for $k-2\leq\ell\leq k-1$.

\end{itemize}




\bibliographystyle{plain}
\bibliography{TRL}
\end{document}